\newtheorem{lemma}{Lemma}
\newtheorem{theorem}{Theorem}
\newtheorem{corollary}{Corollary}
\theoremstyle{definition}
\theoremstyle{remark}
\newcommand{\N}{\ensuremath{\mathbb N}\xspace}
\newcommand{\R}{\ensuremath{\mathbb R}\xspace}
\newcommand{\F}{\ensuremath{\mathbb F}\xspace}
\newcommand{\Q}{\ensuremath{\mathbb Q}\xspace}
\newcommand{\cI}{\ensuremath{\mathcal I}\xspace}
\newcommand{\cB}{\ensuremath{\mathcal B}\xspace}
\newcommand{\cF}{\ensuremath{\mathcal F}\xspace}
\newcommand{\cS}{\ensuremath{\mathcal S}\xspace}
\newcommand{\bD}{\ensuremath{\mathbf D}\xspace}
\newcommand{\Oh}{O}
\DeclareMathOperator{\rank}{rank}
\DeclareMathOperator{\Pf}{Pf}
\title{Faster algorithms on linear delta-matroids}
\author{Tomohiro Koana\thanks{Algorithmics and Computational Complexity, Technische Universität Berlin, \texttt{tomohiro.koana@tu-berlin.de}}
 \and
 Magnus Wahlström\thanks{Dept. of Computer Science, Royal Holloway,
   University of London, \texttt{Magnus.Wahlstrom@rhul.ac.uk}}}
\begin{document}
\maketitle

\begin{abstract}
  We show new algorithms and constructions over linear delta-matroids.
  We observe an alternative representation for linear delta-matroids,
  as a \emph{contraction representation} over a skew-symmetric matrix.
  This is equivalent to the more standard ``twist representation''
  up to $O(n^\omega)$-time transformations (where $n$ is the dimension
  of the delta-matroid and $\omega < 2.372$ the matrix multiplication exponent), 
  but is much more convenient for algorithmic tasks.
  For instance, the problem of finding a max-weight feasible set now
  reduces directly to the problem of finding a max-weight basis
  in a linear matroid.
  Supported by this representation, we provide new algorithms and
  constructions over linear delta-matroids. We show that the \emph{union}
  and \emph{delta-sum} of linear delta-matroids define linear
  delta-matroids, and a representation for the resulting
  delta-matroid can be constructed in randomized time $O(n^\omega)$
  (or more precisely, in $O(n^{\omega})$ field operations,
  over a field of at least $\Omega(n \cdot (1/\varepsilon))$
  elements, where $\varepsilon > 0$ is an error parameter). 
  Previously, it was only known that these operations define delta-matroids.
  We also note that every \emph{projected} linear delta-matroid
  can be represented as an \emph{elementary} projection.
  This implies that several optimization problems over (projected) linear delta-matroids,
  including the coverage, delta-coverage, and parity problems,
  reduce (in their decision versions) to a single $O(n^{\omega})$-time
  matrix rank computation.
  Using the methods of Harvey,
  previously used by Cheung, Lao and Leung for linear matroid parity,
  we furthermore show how to solve the search versions in the same time.
  This improves on the $O(n^4)$-time augmenting path algorithm 
  of Geelen, Iwata and Murota.   
  Finally, we consider the maximum-cardinality delta-matroid
  intersection problem (or equivalently, the maximum-cardinality
  delta-matroid matching problem).
  Using Storjohann's algorithms for symbolic determinants,
  we show that such a solution can be found in $O(n^{\omega+1})$ time.
  This is the first polynomial-time algorithm for the problem,
  solving an open question of Kakimura and Takamatsu.
\end{abstract}

\section{Introduction}

Matroids are important unifying structures in many parts of computer
science and discrete mathematics, abstracting and generalizing
notions from linear vector spaces and graph theory;
see, e.g., Oxley~\cite{OxleyBook2} and Schrijver~\cite{SchrijverBook}.
Formally, a matroid is a collection of \emph{independent sets},
subject to particular axioms (see below).
A maximum independent set is a \emph{basis}. 
Among other things, matroids are a very useful source of algorithmic
meta-results, since there are many problems on matroids which admit 
efficient, general-purpose algorithms -- such as the greedy
algorithm for finding a max-weight basis, generalizing
algorithms for max-weight spanning trees; or 
\textsc{Matroid Intersection}, the problem of finding a maximum
common feasible set in two given matroids, which generalizes
bipartite matching. 

An important class of matroids are \emph{linear matroids},
where independence in the matroid is represented by the column space of
a matrix. Linear matroids enjoy many properties not shared by generic
matroids. For example, the famous \textsc{Matroid Parity} problem,
which generalizes the matching problem in general graphs,
is known to be intractable in the general case
but efficiently solvable over linear matroids~\cite{Lovasz80matroid}.
In addition, linear representations, as a compact representation
of combinatorial information, have seen many applications
in parameterized complexity for purposes of 
\emph{sparsification} and \emph{kernelization}~\cite{KratschW14TALG,KratschW20JACM},
and algebraic algorithms over linear matroids have proven a very
useful general-purpose tool in FPT algorithms~\cite{FominLPS16JACM,EibenKW24SODA}
(cf.~\cite{CyganFKLMPPS15PCbook,FominLSZkernelsbook}).

\emph{Delta-matroids} are a generalization of matroids,
where, informally, the notion of bases
is replaced by the notion of \emph{feasible sets},
which satisfy an exchange axiom similar to matroid bases
but need not all have the same cardinality.
Delta-matroids were introduced by Bouchet (although similar structures
were independently defined by others),
and have connections to multiple areas of computer science such as
structural and topological graph theory~\cite{Moffatt19deltamatroids}, 
constraint satisfaction problems~\cite{Feder01fanout,KazdaKR19},
matching and path-packing problems and more; see below.
Like with matroids, there is also a notion of \emph{linear delta-matroids},
where the feasible sets are represented through a skew-symmetric matrix.
These generalize linear matroids, although this fact
(or indeed the fact that skew-symmetric matrices define
delta-matroids) is not elementary~\cite{GeelenIM03}.
Delta-matroids (linear or otherwise) are remarkably flexible structures,
in that there are many ways to modify or combine given delta-matroids
into new delta-matroids, including twisting (partial dualization),
contraction and deletion, existential projection,
and unions and delta-sums of delta-matroids (all described below).

Similarly to matroids, there is also a range of generic problems that
have been considered over delta-matroids, including
delta-matroid intersection, partition, and parity problems.
Unfortunately, due to the generality of delta-matroids,
these problems are all intractable in the general case, since they
generalize matroid parity.
However, they are tractable on linear delta-matroids, where
Geelen et al.~\cite{GeelenIM03} gave an algorithm (and a corresponding
min-max theorem) with a running time of $O(n^4)$, improvable to
$O(n^{\omega+1})$ using fast matrix multiplication.
However, other variants remain open. 
Kakimura and Takamatsu~\cite{KakimuraT14sidma} considered the maximum
cardinality version of delta-matroid parity (as opposed to the result
of Geelen et al.~\cite{GeelenIM03}, which is more of a \emph{feasibility}
or \emph{minimum error} version). They gave a solution for a
restricted class of linear and projected linear delta-matroid, 
but left the general case open. Furthermore, the natural weighted
optimization variants of the above appear completely open.

In this paper, we show new constructions of linear delta-matroids
and new and faster algorithms for the aforementioned problems on
linear and projected linear delta-matroids.
In particular, we show a new representation variant for linear
delta-matroids -- dubbed \emph{contraction representation},
as opposed to the standard \emph{twist representation} -- 
which appears more amenable to efficient algorithms. 
Using this representation, we show for the first time
that unions and delta-sums of linear delta-matroids
(represented over a common field $\F$) define \emph{linear}
delta-matroids, and that a representation can be constructed
in randomized polynomial time.
We also show new algorithmic results,
including solving the search version of
\textsc{Linear Delta-Matroid Parity} (\textsc{Linear DM Parity} for short) in $O(n^\omega)$ field operations\footnote{Throughout, we give our running times as field operations. If the field has size $n^{O(1)}$, as in most applications, then this is just an $\tilde O(1)$ overhead, but if a  representation is provided over an enormous field, the overhead is naturally larger.}
and giving the first polynomial-time algorithm for
the maximum cardinality version of the problem in $O(n^{\omega+1})$ field operations,
thereby settling an open question from Kakimura and Takamatsu~\cite{KakimuraT14sidma}.

\subsection{Introduction to delta-matroids}

Before we describe our results in detail, let us review some
background on delta-matroids. For more material, we refer to the
survey by Moffatt~\cite{Moffatt19deltamatroids}.

Like matroids, delta-matroids are formally defined as set systems
satisfying particular axioms.
Formally, a delta-matroid is a pair $D=(V,\cF)$ where $V$ is a ground set
and $\cF \subseteq 2^V$ a non-empty collection of subsets of $V$, referred to as
\emph{feasible sets} in $D$, subject to the following \emph{symmetric exchange axiom}:
\begin{equation}
  \label{eq:sea}
  \text{For all } F_1, F_2 \in \cF,\, x \in F_1 \Delta F_2
  \text{ there exists } y \in F_1 \Delta F_2 \text{ such that } F_1 \Delta \{x,y\} \in \cF,
\end{equation}
where $\Delta$ denotes symmetric difference.

It should be enlightening to compare this to the definition of matroids.
Formally, a matroid is most commonly defined as a collection of \emph{independent sets};
i.e., a matroid is defined as a pair $M=(V,\cI)$ where $V$ is the ground set and
$\cI \subseteq 2^V$ is a collection of sets, referred to as
independent sets in $M$, subject to (1) $\emptyset \in \cI$;
(2) if $B \in \cI$ and $A \subset B$ then $A \in \cI$;
and (3) if $A, B \in \cI$ with $|A|<|B|$ then there
exists an element $x \in B \setminus A$ such that $A+x \in \cI$.
The second condition encodes that $(V,\cI)$ is an \emph{independence system}.
However, matroids can also be equivalently defined from
just the collection of \emph{maximal} independent sets,
known as \emph{bases}. Under this definition, a matroid
is a pair $M=(V,\cB)$ where $\cB \subseteq 2^V$
is a non-empty collection of bases, subject to the
\emph{basis exchange property}:
\begin{equation}
  \label{eq:bep}
  \text{For all } A, B \in \cB,\, x \in A \setminus B
  \text{ there exists } y \in B \setminus A \text{ such that } A \Delta \{x,y\} \in \cB.
\end{equation}
In particular, all bases of a matroid have the same cardinality.
Thus, delta-matroids can be seen as the relaxation of matroids
where the feasible sets need not all have the same cardinality. 
In fact, a delta-matroid where all feasible sets have the same
cardinality is precisely a matroid. (Similarly, the set of independent sets
of a matroid forms the feasible sets of a delta-matroid, and furthermore
a delta-matroid which is an independence system is precisely a matroid in this sense.
But the formulation from the set of bases is more convenient.)
As a further illustration, consider the case of graph matchings. 
Let $G=(V,E)$ be a graph. The \emph{matching matroid} of $G$ is a
matroid over ground set $V$ where a set $B \subseteq V$ is a basis if
and only if it is the set of endpoints of a maximum matching of $G$.
Correspondingly,  the independent sets $S \subseteq V$ of the matching
matroid are vertex sets that can be covered by a matching.
On the other hand, the \emph{matching delta-matroid} over $G$
is the delta-matroid where a set $S \subseteq V$ is feasible
if and only if $G[S]$ has a perfect matching.
Thus, the maximal feasible sets of the matching delta-matroid
form the bases of the matching matroid, but clearly,
the matching delta-matroid captures more of the structure
of $G$ than the matching matroid does. 

\paragraph*{Linear delta-matroids.}
As with matroids, an important class of delta-matroids are
\emph{linear delta-matroids}. A matrix $A$ is \emph{skew-symmetric}
if $A^T=-A$. Let $A$ be a skew-symmetric matrix
with rows and columns indexed by a set $V$. Then $A$ defines a
delta-matroid $\bD(A)=(V,\cF)$ where for $S \subseteq V$ we have
$S \in \cF$ if and only if $A[S]$ is non-singular. We refer to
$\bD(A)$ as a \emph{directly represented} linear delta-matroid.
More generally, the \emph{twist} of a delta-matroid $D=(V,\cF)$ by a set $S \subseteq V$,
denoted $D \Delta S$, is the delta-matroid with feasible sets
\[
  \cF \Delta S := \{F \Delta S \mid F \in \cF\}.
\]
It is easy to check that $D \Delta S$ is a delta-matroid.
The twisting operation is also known as \emph{partial dualization}, since
the twist $D^* := D \Delta V$ corresponds to the dualization $M^*$ of
a matroid $M$. A general representation of a linear delta-matroid is given as
$D=\bD(A) \Delta S$ for some skew-symmetric matrix $A$ and twisting
set $S$. A delta-matroid $D$ is \emph{even} if all feasible sets have
the same cardinality; all linear delta-matroids are even. 
In addition, we consider \emph{projected linear delta-matroids},
which is a delta-matroid $D=(V,\cF)$ defined via existential projection
over a set $X$ from a larger linear delta-matroid $D'=\bD(A) \Delta S$
over ground set $V \cup X$. We denote this $D=D'|X$, where $D$
has the feasible set
\[
  \cF = \{F \setminus X \mid F \in \cF(D')\}.
\]
As a canonical example, the matching delta-matroid of a graph $G$
is directly represented by the Tutte matrix over $G$.
The set of bases of a linear matroid forms a linear delta-matroid,
and the independent sets form a projected linear delta-matroid,
under natural representations; see the end of Section~\ref{sec:contraction}.

Underpinning algorithms on linear delta-matroids are a number of
fundamental operations on skew-symmetric matrices. 
For a skew-symmetric matrix $A$ indexed by $V$
and a set $S \subseteq V$ such that $A[S]$ is non-singular,
there is a \emph{pivoting} operation that constructs a new
skew-symmetric matrix $A'=A*S$ such that for any $U \subseteq V$,
$A'[U]$ is non-singular if and only if $A[S \Delta U]$ is.
Via this operation, linear delta-matroids are closed under the
\emph{contraction} operation $D/T$ as well as \emph{deletion} $D \setminus T$.
Another fundamental property of skew-symmetric matrices is the \emph{Pfaffian},
defined as follows. Let $A$ be a skew-symmetric matrix with rows and
columns indexed by $V$. The \emph{support graph} of $A$ is the graph $G=(V,E)$
where $uv \in E$ if and only if $A[u,v] \neq 0$. Then the Pfaffian of $A$ is
defined as
\begin{align*}
  \Pf A = \sum_{M} \sigma(M) \prod_{e \in M} A[u, v],
\end{align*}
where $M$ ranges over all perfect matchings in $G$ and
$\sigma(M) \in \{1,-1\}$ is a sign term. It holds that
$\det A = (\Pf A)^2$, thus $A$ is non-singular if and only if $\Pf A \neq 0$.
Via this connection to matchings, the Pfaffian forms a link between
the combinatorial and algebraic aspects of linear delta-matroids,
in a way that is often exploited in this paper. 
The Pfaffian also enjoys some useful algebraic properties,
such as the Pfaffian sum formula and the Ishikawa-Wakayama formula,
with clear combinatorial interpretations. See Section~\ref{sec:prelim}
for details.

\subsection{Our results}

We show a range of results regarding the representation and
construction of linear delta-matroids, and new and faster algorithms
for computational problems over them.
We discuss these in turn. 

\subsubsection{Representations and constructions}

Our first result, which supports the others, is the introduction of a
new representation for linear delta-matroids. Recall that a linear
delta-matroid $D=(V,\cF)$ is represented as $D=\bD(A) \Delta S$
for a skew-symmetric matrix $A$ with rows and columns indexed by $V$.
We refer to this as a \emph{twist representation}. Although this
representation is intimately connected to the structure of delta-matroids,
it is less convenient for algorithmic purposes.
For this, we introduce the \emph{contraction representation},
representing a linear delta-matroid $D=(V,\cF)$
as $D=\bD(A)/T$ for a skew-symmetric matrix $A$ with rows and columns
indexed by $V \cup T$. Thus, a set $F \subseteq V$ is feasible in $D$
if and only if $A[F \cup T]$ is non-singular. 

We show that the representations are equivalent, and given a
representation in one form, we can efficiently and deterministically construct one in the other;
see Section~\ref{sec:contraction}.
Thus contraction representations do not change the class of representable
delta-matroids; however, we find that the contraction representation maps
much more directly into the algorithmic methods of linear algebra.

Next, we consider two methods of composing linear delta-matroids.
Let $D_1=(V,\cF_1)$ and $D_2=(V,\cF_2)$ be given delta-matroids
(padding the ground sets with dummy elements if necessary so that they
are defined over the same ground set $V$).
The \emph{union} $D_1 \cup D_2$ is the delta-matroid $D=(V,\cF)$
where
\[
  \cF = \{F_1 \cup F_2 \mid F_1 \in \cF_1, F_2 \in \cF_2, F_1 \cap F_2 = \emptyset\},
\]
i.e., the feasible sets in $D$ are the disjoint unions of feasible
sets in $D_1$ and $D_2$. Additionally, the \emph{delta-sum} $D_1 \Delta D_2$ 
is defined as the delta-matroid $D=(V,\cF)$ with feasible sets
\[
  \cF=\{F_1 \Delta F_2 \mid F_1 \in \cF_1, F_2 \in \cF_2\},
\]
where $\Delta$ denotes symmetric difference.
Bouchet~\cite{Bouchet89dam} and Bouchet and Schwärzler~\cite{BouchetS98deltasum}
showed that $D_1 \cup D_2$ and $D_1 \Delta D_2$ are delta-matroids.
Using properties of Pfaffians and the contraction representation, we show 
that furthermore, the union and delta-sum of linear delta-matroids are
linear delta-matroids.

The construction is randomized, and takes an error parameter
$\varepsilon > 0$ which controls the size of the field that the output
delta-matroid is represented over. For this purpose, we say that an
algorithm constructs an \emph{$\varepsilon$-approximate representation}
of a delta-matroid $D=(V,\cF)$ if it constructs a representation of a
delta-matroid $D'=(V,\cF')$ where $\cF' \subseteq \cF$ and for every
$F \in \cF$ the probability that  $F \in \cF'$ is at least $1-\varepsilon$.
%
%
Setting $\varepsilon = O(1/2^n)$ where $n=|V|$ gives a representation
that with good probability is correct for all subsets. However, this
leads to a prohibitive field size, with significant 
overhead cost per field operation. Thus, for algorithmic
applications, a smaller value of $\varepsilon$ may be faster and
sufficient.

\begin{theorem} \label{ithm:union} \label{ithm:delta-sum}
  Let $D_1$ and $D_2$ be delta-matroids represented over a common field $\F$,
  and let $\varepsilon > 0$ be given. Let $\F'$ be an extension field of $\F$ 
  with at least $n \cdot \lceil 1/\varepsilon \rceil$ elements. 
  Then the delta-matroid union $D_1 \cup D_2$
  and delta-sum $D_1 \Delta D_2$ 
  are linear, and
  $\varepsilon$-approximate representations over $\F'$
  can be constructed in $O(n^2)$ respectively $O(n^\omega)$ field operations.
\end{theorem}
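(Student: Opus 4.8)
The strategy is to realize both operations algebraically by building a single large skew-symmetric matrix whose Pfaffian minors encode the feasible sets of the composed delta-matroid, and then to invoke the contraction representation so that the needed non-singularity conditions fall out of ordinary determinant/Pfaffian identities. First I would pass both inputs $D_1$ and $D_2$ to contraction representations $D_i = \mathbf{D}(A_i)/T_i$, with $A_i$ skew-symmetric indexed by $V \cup T_i$; by the equivalence result stated earlier in the excerpt this costs $O(n^\omega)$ field operations and is deterministic. For the delta-sum, the key observation is that $F = F_1 \Delta F_2$ with $F_i \in \cF_i$ should correspond to a perfect matching structure that ``uses'' each element of $V$ exactly once across the two matrices — i.e. one builds the block matrix $A = \mathrm{diag}(A_1, A_2)$ together with a coupling gadget on the two copies of $V$ (a matching of indeterminate weights $x_v$ pairing the $V$-copy in $A_1$ with the $V$-copy in $A_2$), and then contracts away $T_1$, $T_2$, and the coupling vertices. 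A set $F \subseteq V$ is then feasible in the new representation precisely when the corresponding Pfaffian minor is nonzero, and by the Pfaffian sum / Ishikawa–Wakayama expansion this minor is a polynomial whose monomials are indexed exactly by pairs $(F_1, F_2)$ with $F_1 \Delta F_2 = F$. For the union, the construction is simpler: since $F_1 \cap F_2 = \emptyset$ is required, one does not need the full coupling gadget — instead one can take the $2 \times 2$-block skew-symmetric matrix obtained by placing $A_1$ and $A_2$ on a shared index set $V$ (plus $T_1 \uplus T_2$) with indeterminate off-diagonal coupling entries on $V$, which is why that case only needs $O(n^2)$ operations rather than $O(n^\omega)$.

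The second main step is the randomized evaluation. The construction above is naturally phrased with indeterminates $x_v$ (and possibly additional indeterminates in the coupling block); the resulting matrix is a symbolic skew-symmetric matrix, and each feasible set $F$ corresponds to a nonzero polynomial Pfaffian minor of degree $O(n)$ in these indeterminates. Substituting independent uniform random values from the extension field $\F'$ and applying the Schwarz–Zippel lemma, each individual $F \in \cF(D_1 \cup D_2)$ (resp.\ $\cF(D_1 \Delta D_2)$) survives with probability at least $1 - O(n)/|\F'| \ge 1 - \varepsilon$ when $|\F'| \ge n \lceil 1/\varepsilon \rceil$, while no infeasible $F$ can ever appear (substitution can only kill monomials, never create them, so $\cF' \subseteq \cF$). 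This gives exactly the $\varepsilon$-approximate representation in the theorem statement. After substitution the matrix has entries in $\F'$, the contractions are performed by pivoting, and by the earlier equivalence one may output the result as a contraction (or twist) representation; the pivoting and block assembly are within the claimed $O(n^2)$ / $O(n^\omega)$ bounds.

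The step I expect to be the main obstacle is verifying that the Pfaffian minor of the composite matrix, after contraction, really expands as the claimed sum over pairs $(F_1, F_2)$ with the correct combinatorial constraint ($F_1 \cap F_2 = \emptyset$ for union, $F_1 \Delta F_2 = F$ for delta-sum) and, crucially, that distinct pairs do not cancel. The ``no cancellation'' point is where the choice of distinct indeterminates $x_v$ per coupling edge matters: different pairs $(F_1,F_2)$ yield different monomials in the $x_v$, so they cannot cancel in the symbolic polynomial even though they might after a bad substitution — which is precisely what the Schwarz–Zippel bound controls. Making this bookkeeping precise requires the Pfaffian sum formula and the Ishikawa–Wakayama formula referenced in the preliminaries, together with careful tracking of the sign terms $\sigma(M)$ through the contraction (pivoting) operations; signs are typically the fiddliest part, but they do not affect whether a minor is zero, so for the purpose of representing $\cF'$ they can be handled somewhat coarsely. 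A secondary technical point is confirming the ground-set alignment convention (padding with dummy elements) interacts correctly with the dimension count so that the output delta-matroid is over $V$ and has dimension $n$ as claimed.
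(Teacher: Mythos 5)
Your overall toolkit (contraction representations, superimposing skew-symmetric matrices with indeterminates, the Pfaffian sum formula, Schwartz--Zippel over $\F'$) is the same as the paper's, but the delta-sum gadget you actually specify does not work, and the verification you defer as ``the main obstacle'' is exactly where it breaks. You couple the copy of $V$ in $A_1$ directly to the copy of $V$ in $A_2$ by a matching with weights $x_v$ and then ``contract $T_1$, $T_2$ and the coupling vertices.'' If the coupling vertices are both copies of $V$, nothing is left to index the ground set. If instead you keep one copy (say the $A_1$-side) as the ground set and contract $T_1 \cup T_2$ plus the other copy, then expanding the minor $A[F \cup T_1 \cup V_2 \cup T_2]$ shows that each element $v$ has only two states -- covered by its own matrix or matched away along its coupling edge -- so the monomials are indexed by sets $W \subseteq F$ with $F \setminus W \in \cF_1$ and $V \setminus W \in \cF_2$. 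This characterizes $\{F_1 \uplus (V\setminus F_2) : F_1 \in \cF_1,\, F_2 \in \cF_2,\, F_1 \subseteq F_2\}$, i.e.\ the \emph{union} of $D_1$ with the dual of $D_2$, not the delta-sum: your claimed expansion ``monomials indexed exactly by pairs with $F_1 \Delta F_2 = F$'' is false here, since only pairs with $F_1 \subseteq F_2$ appear and they witness $V \setminus F$, not $F$. The paper's construction (Lemma~\ref{lemma:delta-sum}) needs a genuinely different, three-layer gadget: the ground set $V$ is kept as a third set of vertices and each $v$ carries a triangle $\{v, v^1, v^2\}$ in the Tutte matrix $A_H$; the triangle is what encodes ``$v \in F$ iff exactly one of $v^1,v^2$ is left for $A_1$ resp.\ $A_2$, and $v \notin F$ iff both or neither,'' giving $D_1 \Delta D_2 = \bD(A_H + A')/(V_1 \cup T_1 \cup V_2 \cup T_2)$. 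This extra layer is the missing idea, not sign bookkeeping.

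For the union there is a second, subtler point your sketch leaves open: where the indeterminates sit. An $\varepsilon$-approximate representation must satisfy $\cF' \subseteq \cF$ with certainty, so the \emph{symbolic} matrix must never certify an infeasible set; Schwartz--Zippel only controls false negatives. If you attach independent indeterminates per entry/edge to the superimposed $A_2$-block, the scaled block can have nonzero Pfaffian minors even where $\Pf A_2[\cdot] = 0$ (you destroy cancellations inside $A_2$), producing false positives no evaluation can repair. The paper's Lemma~\ref{lemma:union} instead sets $A[i,j] = A_1[i,j] + y_i y_j A_2[i,j]$ with one indeterminate per ground element, i.e.\ a diagonal congruence of $A_2$: every Pfaffian minor of $A_2$ is multiplied by a monomial, so its vanishing pattern is preserved exactly, while each split $F = F_1 \cup F_2$ is still tagged by the distinct monomial $\prod_{i \in F_2} y_i$, preventing cancellation between splits. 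Your remark about ``distinct indeterminates per coupling edge'' suggests the per-edge variant, so this choice needs to be made explicit. (Minor: converting a twist representation to a contraction representation is deterministic and only $O(n^2)$ in the paper, and both constructions are written down in $O(n^2)$ operations; the $O(n^\omega)$ for the delta-sum is not caused by pivoting inside the gadget.)
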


Additionally, Bouchet and Cunningham~\cite{BouchetC95} defined the
\emph{composition} of two delta-matroids $D_1=(V_1,\cF_1)$
and $D_2=(V_2,\cF_2)$ with partially overlapping ground sets
as the set system
$
  \cF=\{F_1 \Delta F_2 \mid F_1 \cap (V_1 \cap V_2)=F_2 \cap (V_1 \cap V_2)\}.
$
Since it is equivalent to $(D_1 \Delta D_2) \setminus (V_1 \cap V_2)$,
it is covered by the above results. 

We remark that the more immediate interpretation of $D_1 \cup D_2$ as containing
all sets $F_1 \cup F_2$ where $F_1 \in \cF(D_1)$ and $F_2 \in \cF(D_2)$,
which is closer to how matroid union is defined, does not define a
delta-matroid.
(On the other hand, the delta-matroid union of
the independent sets of two matroids $M_1$, $M_2$ produces
the independent sets of the matroid union $M_1 \lor M_2$.)

All the above results easily extend to projected linear delta-matroids.

\subsubsection{Algorithms}

As a warm-up, we first consider the problem of finding a max-weight
feasible set in a given delta-matroid $D=(V,\cF)$ with element weights $w \colon V \to \R$.
Note that the weights may be negative, and since
not all feasible sets have the same cardinality, unlike in matroids,
they cannot simply be raised to be non-negative.
Bouchet~\cite{Bouchet95,Bouchet87DMone} showed that there is a variant
of greedy algorithm that solves this problem using only separation
oracle calls.
However, this requires $O(n)$ separation oracle calls. 
We show that, using the contraction representation,
the max-weight feasible set problem in a linear delta-matroid
reduces to
finding a max-weight column basis of an $O(n) \times O(n)$ matrix, 
which can be done significantly faster. 

\begin{theorem} \label{ithm:weighed-feasible}
  Let $D=(V,\cF)$ be a linear or projected linear delta-matroid.
  In $O(n^\omega)$ field operations, we can find a max-weight
  feasible set in $D$. 
\end{theorem}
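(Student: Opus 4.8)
The plan is to pass to the contraction representation and reduce the problem to computing a max-weight column basis of a single $O(n)\times O(n)$ matrix, a well-understood linear-algebraic task.

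First I would dispose of the projection. If $D=D'|X$ for a linear delta-matroid $D'$ over $V\cup X$, then $F\subseteq V$ is feasible in $D$ iff $F=F'\setminus X$ for some feasible $F'$ of $D'$; extending $w$ to $V\cup X$ by setting $w(x)=0$ for $x\in X$, a max-weight feasible set of $D'$ restricts to a max-weight feasible set of $D$. So it suffices to treat the non-projected case, and by the equivalence of the twist and contraction representations (Section~\ref{sec:contraction}) I may assume $D=\bD(A)/T$ for a skew-symmetric $A$ indexed by $V\cup T$ with $|V\cup T|=O(n)$; thus $F\subseteq V$ is feasible iff $A[F\cup T]$ is non-singular.

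Next I would turn feasibility into ``being a column basis''. Let $V'$ be a disjoint copy of $V$ and let $M$ be the matrix with rows indexed by $V\cup T$ and columns indexed by $V\cup V'\cup T$, where the column indexed by $u\in V\cup T$ is the $u$-th column of $A$ and the column indexed by $v'\in V'$ is the unit vector $e_v$. Eliminating identity columns shows that, for $F\subseteq V$, the set $F\cup(V\setminus F)'\cup T$ is a column basis of $M$ iff $A[F\cup T]$ is non-singular, i.e., iff $F$ is feasible; call such column bases \emph{canonical}. Put weights $\tilde w(v)=w(v)$ and $\tilde w(v')=-w(v)$ for $v\in V$, and $\tilde w(t)=\Lambda$ for $t\in T$ with $\Lambda$ sufficiently large. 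Since $\cF\neq\emptyset$, $M$ has a canonical basis (so in particular $M$ has full row rank), and it contains all $T$-columns; the choice of $\Lambda$ forces every max-weight column basis to contain all $T$-columns. A direct computation gives that the canonical basis of a feasible $F$ has $\tilde w$-weight $2w(F)-w(V)+\Lambda|T|$, so the best canonical basis has weight $2w(F^\star)-w(V)+\Lambda|T|$, where $F^\star$ is a max-weight feasible set of $D$.

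The heart of the argument is that canonical bases are optimal among \emph{all} column bases of $M$. Writing an arbitrary column basis containing the $T$-columns as $P\cup(V\setminus R)'\cup T$ with $P,R\subseteq V$, one gets $|P|=|R|$, $A[P\cup T,R\cup T]$ non-singular, and $\tilde w$-weight $w(P)+w(R)-w(V)+\Lambda|T|$, so what is needed is
\[
  \max\bigl\{\,w(P)+w(R)\;:\;|P|=|R|,\ A[P\cup T,R\cup T]\ \text{non-singular}\,\bigr\}\;=\;2\,w(F^\star),
\]
the inequality $\ge$ being witnessed by $P=R=F^\star$. For $\le$ I would show that an optimal pair can be taken with $P=R$: among optimal pairs pick one with $|P\cap R|$ maximum, and if $P\neq R$ use the skew-symmetry of $A$ — via a pivoting/Pfaffian identity relating $\det A[P\cup T,R\cup T]$ to principal Pfaffians, from the toolkit of Section~\ref{sec:prelim} — to exchange an element and obtain an optimal pair with strictly larger intersection, a contradiction; and when $P=R$, non-singularity of $A[P\cup T]$ means $P$ is feasible, so $w(P)+w(R)=2w(P)\le 2w(F^\star)$. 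This statement — that the ``bipartite'' non-singularity structure of a skew-symmetric matrix is optimized on its principal (symmetric) part — is the main obstacle; the same exchange step also lets one read a max-weight feasible set off a max-weight column basis (e.g.\ via its primal part, up to a short post-processing). Granting it, I would finish by computing a max-weight column basis of the $O(n)\times O(n)$ matrix $M$ in $O(n^\omega)$ field operations, by the standard algorithm for a max-weight basis of a matrix (sort the columns by $\tilde w$ and greedily extend an independent set kept in echelon form, batching the rank updates), and then output the corresponding feasible set.
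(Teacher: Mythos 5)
There is a genuine gap, and it sits exactly where you flagged it. Your reduction makes the set of \emph{all} column bases of $M$ containing the $T$-columns correspond to pairs $(P,R)$ with $|P|=|R|$ and $A[P\cup T,R\cup T]$ non-singular, so your algorithm is only correct if $\max\{w(P)+w(R)\}$ over such pairs equals $2\,w(F^\star)$, and moreover only useful if a max-weight (possibly non-canonical) basis can be converted into an actual feasible set. Neither of these is established. The exchange argument you sketch does not go through as stated: applying the symmetric exchange axiom to the two feasible sets $P^r\cup R^c$ and $R^r\cup P^c$ of the ``doubled'' delta-matroid $\bD\bigl(\begin{smallmatrix}O & A\\ -A^T & O\end{smallmatrix}\bigr)/(T^r\cup T^c)$ may only offer the weight-neutral swap that moves an element of $P\Delta R$ from one copy to the other, which changes neither the weight nor $|P\cap R|$, so the ``pick an optimal pair with $|P\cap R|$ maximum'' induction does not obviously terminate with $P=R$. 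The statement you need is true generically (for a Tutte-type matrix one can decompose the bijection witnessing $\det A[P\cup T,R\cup T]\neq 0$ into paths, even cycles and doubled edges and extract two feasible sets $F_a,F_b$ with $\mathbf 1_{F_a}+\mathbf 1_{F_b}=\mathbf 1_P+\mathbf 1_R$), but your input is an arbitrary skew-symmetric matrix over an arbitrary field, where non-singularity is not governed by support matchings; handling cancellations requires a Pfaffian-minor expansion of $\det A[P\cup T,R\cup T]$ into products of principal Pfaffians, which is a real lemma comparable in weight to Lemma~\ref{lemma:delta-sum} and is not in the Section~\ref{sec:prelim} toolkit you cite. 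The same unproven step is also what your ``short post-processing'' for recovering $F$ from a non-canonical optimal basis leans on.

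The paper's proof avoids this obstacle entirely, and you could repair your argument by adopting its two ingredients. First, twist away the negative weights: with $N=\{v:w(v)<0\}$ and $w'=|w|$, one has $w(F)=w'(F\Delta N)+w(N)$ for all feasible $F$ (Lemma~\ref{lemma:maxwtf-twist}), so after passing to a contraction representation $\bD(A)/T$ of $D\Delta N$ all weights are non-negative. Second, use the fact that any column basis $B$ of a skew-symmetric matrix satisfies that $A[B]$ is non-singular (Rabin--Vazirani/Murota, quoted in Section~\ref{sec:prelim}); hence every lex-min column basis of $A$ itself, with the $T$-columns ordered first and the remaining columns by non-increasing $w'$, is principal-feasible and contains $T$, and the sandwich argument gives optimality directly --- no auxiliary identity columns, negated weights, or statement about non-principal minors is needed. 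Your handling of the projected case (zero weights on $X$) and the $O(n^\omega)$ max-weight column-basis computation are fine.
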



For more intricate questions,
the literature contains a range of problems over delta-matroids.
Let $D_1=(V,\cF_1)$ and $D_2=(V,\cF_2)$ be given delta-matroids. The
following are some key problems~\cite{Bouchet95,GeelenIM03}.
\begin{itemize}
\item \textsc{DM Covering}:  Given $D_1$ and $D_2$, find $F_1 \in \cF_1$, $F_2 \in \cF_2$
  with $F_1 \cap F_2=\emptyset$ to maximize $|F_1 \cup F_2|$
\item \textsc{DM Delta-Covering}: Given $D_1$ and $D_2$, find $F_1 \in \cF_1$, $F_2 \in \cF_2$
  to maximize $|F_1 \Delta F_2|$
\item \textsc{DM Intersection}: Given $D_1$ and $D_2$, find a common feasible set $F \in \cF_1
  \cap \cF_2$
\item \textsc{DM Partition}: Given $D_1$ and $D_2$, find a partition $V=P \cup Q$
  such that $P \in \cF_1$ and $Q \in \cF_2$
\item \textsc{DM Parity}: Given a delta-matroid $D=(V,\cF)$ and a partition $\Pi$ of $V$ into pairs,
  is there a feasible set in $D$ which is the union of pairs? More
  generally, find a feasible set $F \in \cF$ to minimize
  the number of broken pairs, i.e., the number of pairs $p \in \Pi$ with
  $|p \cap F|=1$.
\end{itemize}
The variant of \textsc{DM Covering} where the disjointness constraint
is dropped reduces to \textsc{Matroid Union}, since the maximal
feasible sets of a delta-matroid form a matroid, hence is of less
interest for delta-matroids.

Using the methods of the previous subsection,
the decision versions of the above for linear and projected linear
delta-matroids all reduce to computing the  rank of an $O(n) \times O(n)$ skew-symmetric matrix.
Indeed, consider \textsc{DM Delta-Covering}. Assume that $D_1$ and $D_2$ 
are given in some linear representation and let $D=\bD(A)/T=D_1 \Delta D_2$. 
Then a set $F \subseteq V$ is a solution $F=F_1 \Delta F_2$ to the
delta-covering problem if and only if $F \cup T$ is a basis of $A$. 
\textsc{DM Covering} and \textsc{DM Partition} work similarly,
and \textsc{DM Intersection} is asking whether $\emptyset$ is feasible in $D_1 \Delta D_2$.

For \textsc{DM Parity}, assume w.l.o.g.\ that $V=\{u_1,v_1,\ldots,u_n,v_n\}$
and $\Pi=\{\{u_i,v_i\} \mid i \in [n]\}$. Let $D_\Pi$ be the matching
delta-matroid for the graph with edge set $\Pi$, so that the feasible
sets are precisely sets $F \subseteq V$ where
$u_i \in F$ if and only if $v_i \in F$ for $i \in [n]$. 
Then \textsc{DM Parity} reduces to finding the rank
of $D \Delta D_\Pi$ (or indeed $D \cup D_\Pi$). 
Thus the decision versions of all the above problems can be solved by a
randomized algorithm using $O(n^\omega)$ field operations
given linear representations of $D$ resp.\ $D_1$ and $D_2$.

Furthermore, the problems inter-reduce. We saw that \textsc{DM Parity}
reduces to \textsc{DM Delta-Covering}. In the other direction,
let $(D_1,D_2)$ be an instance of \textsc{DM Delta-Covering}.
Let $D$ be the disjoint union of $D_1$ and $D_2^*$ on copies of the
ground set $V$ and let $\Pi$ be the pairing which links up the two
copies in $D$ of every $v \in V$. Then, for $F_1 \in \cF(D_1)$
and $F_2^* \in \cF(D_2^*)$, the pair 
representing $v \in V$ is broken in $F_1 \cup F_2^*$
precisely if $v \notin F_1 \Delta F_2$,
where $F_2 = V \setminus F_2^*$; note that $F_2 \in \cF(D_2)$. 
For instances with no broken pairs, similar reductions apply for 
\textsc{DM Covering}, \textsc{DM Intersection} and \textsc{DM Partition}.
Since \textsc{DM Parity} directly generalizes \textsc{Matroid Parity},
all these problems are thus worst-case intractable.

For the search versions, the picture gets a little different.
In particular, the \textsc{DM Parity}, \textsc{DM Delta-Covering} and \textsc{DM Covering}
problems have two notions of witnesses. There is a \emph{weak}
witness, which for \textsc{DM Parity} is the set of broken pairs in some
optimal solution and for \textsc{DM Delta-Covering} and \textsc{DM Covering} simply the set $F$
for an optimal solution $F=F_1 \Delta F_2$ resp.~$F=F_1 \cup F_2$;
and a \emph{strong} witness, which for \textsc{DM Parity} is the set $F$
and for \textsc{DM Delta-Covering} and \textsc{DM Covering} the pair $(F_1,F_2)$. 
It can be verified that search problems for weak witnesses
reduce to each other as above, and the weak witness can easily be
recovered from the above-mentioned matrix representation of $D$.
However, this does not recover the strong witness other than via
self-reducibility. In turn, being able to compute strong witness for
either of these problems implies being able to compute either type of
witness for all five problems.
Hence the more interesting search problem is the latter.

Geelen et al.~\cite{GeelenIM03} showed an algorithm for
computing the strong witness for \textsc{Linear DM Parity}, 
which can be implemented in $O(n^{\omega+1})$ field operations.
Applying self-reducibility over the above-mentioned
representation would already reproduce the same running time. Using methods of
Harvey~\cite{Harvey09}, which also underpin the currently fastest
algorithm for linear matroid parity~\cite{CheungLL14}, we show the
following improved result.  
The condition of field size is for simplicity; given representations
over a common field $\F$ we can easily move to a large enough
extension field of $\F$.

\begin{restatable}{theorem}{targetdeltasum} \label{ithm:target-delta-sum}
  Let $D_1=(V,\cF_1)$ and $D_2=(V,\cF_2)$ be linear or projected
  linear delta-matroids over a common field with $\Omega(n^3)$ elements and let $S \subseteq V$
  be feasible in $D_1 \Delta D_2$.
  In $O(n^\omega)$ field operations, we can find with high probability feasible sets
  $F_1 \in \cF_1$ and $F_2 \in \cF_2$ such that $F_1 \Delta F_2 = S$. 
\end{restatable}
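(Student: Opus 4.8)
The plan is to reduce the statement to a single instance of the \emph{search} version of delta-matroid intersection, to solve that by a greedy non-singularity-preserving procedure, and finally to speed the procedure up using Harvey's rank-update technique. First, since $F_1 \Delta F_2 = S$ is equivalent to $F_2 = F_1 \Delta S$, finding the required pair is the same as finding a single set $F_1$ with $F_1 \in \cF_1$ and $F_1 \Delta S \in \cF_2$, i.e.\ a common feasible set of $D_1$ and $D_2 \Delta S$; we then output $(F_1, F_1 \Delta S)$. The hypothesis that $S$ is feasible in $D_1 \Delta D_2$ is exactly the statement that such a common feasible set exists: if $F_1' \Delta F_2' = S$ with $F_i' \in \cF_i$, then $F_1'$ is common feasible, since $F_1' \Delta S = F_2' \in \cF_2$. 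Moreover a representation of $D_2 \Delta S$ can be built from that of $D_2$ in $O(n^\omega)$ field operations, since twisting preserves (projected) linearity and the conversion between twist and contraction representations is deterministic in $O(n^\omega)$ operations (Section~\ref{sec:contraction}); in the projected case we push the projection outside the twist. So it suffices to find, in $O(n^\omega)$ field operations, a common feasible set of two linear (or projected linear) delta-matroids promised to have one.

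Write the two delta-matroids in contraction form as $\bD(A_1)/T_1$ and $\bD(A_2)/T_2$, over a copy of $V \cup T_1$ and a copy of $V \cup T_2$ (in the projected case one additionally carries the projected coordinates; this is suppressed here), and form the skew-symmetric matrix
\[
  M = \begin{pmatrix} A_1 & J \\ -J^{T} & A_2 \end{pmatrix},
\]
where $J$ has exactly one nonzero entry, a freshly chosen random $j_v \in \F'$, in the position linking the two copies of each $v \in V$, and is zero on $T_1, T_2$. Grouping the perfect matchings in the support graph of $M$ according to which ``cross'' edges $\{v_1,v_2\}$ they use, one gets
\[
  \Pf M = \sum_{X \subseteq V} \pm \Bigl(\prod_{v \in X} j_v\Bigr)\, \Pf\bigl(A_1[(V \setminus X)\cup T_1]\bigr)\, \Pf\bigl(A_2[(V \setminus X)\cup T_2]\bigr),
\]
and since distinct $X$ give distinct monomials in the $j_v$ there is no cancellation; hence $\Pf M \neq 0$ as a polynomial if and only if some $V \setminus X$ is feasible in both delta-matroids, i.e.\ iff a common feasible set exists. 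By hypothesis one does, so for random $j_v$ over a field of size $\Omega(n^3)$ the matrix $M$ is non-singular with high probability by Schwartz--Zippel. This already settles the decision version, and the same evaluation drives the search.

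For the search, maintain a non-singular skew-symmetric matrix $M$ together with the invariant that it encodes a still-achievable common feasible set, and process the elements $v \in V$ one at a time. Given $v$, tentatively zero out $j_v$ and its skew partner, obtaining $M'$; the Pfaffian identity $\Pf M = \Pf M' \pm j_v \Pf M''$, where $M''$ is $M$ with the rows and columns indexed by the two copies of $v$ deleted, shows that at least one of $M'$, $M''$ is non-singular. If $M'$ is non-singular, commit $v$ to the common feasible set and set $M := M'$; otherwise commit $v$ outside it and set $M := M''$. After all $n$ steps the committed set $F$ is a common feasible set of $D_1$ and $D_2 \Delta S$, and we return $F_1 := F$, $F_2 := F \Delta S$. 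Done naively this performs $n$ non-singularity tests, for $O(n^{\omega+1})$ field operations, matching the bound obtainable from Geelen et al.~\cite{GeelenIM03} via self-reducibility.

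Finally, each step makes a rank-$O(1)$ modification -- zeroing a symmetric pair of entries, or deleting a symmetric pair of rows and columns -- and its tentative non-singularity test reduces, via Sherman--Morrison--Woodbury on a maintained inverse, to a constant-size determinant read off from that inverse. This is exactly the situation handled by Harvey's recursive rank-update framework~\cite{Harvey09}, used by Cheung, Lao and Leung for linear matroid parity~\cite{CheungLL14}: processing $V$ along a balanced binary recursion and batching the cross-block effects of the committed updates into matrix multiplications brings the total to $O(n^\omega)$ field operations. I expect the main work to lie in transferring this framework cleanly to the present setting -- the skew-symmetric sign bookkeeping for Pfaffians, and handling ``delete a pair'' updates alongside ``zero an entry pair'' updates -- but since linear matroid parity already has essentially this structure, this is an adaptation rather than a new idea. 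Correctness holds with high probability by a union bound over the $O(n)$ Schwartz--Zippel events encountered along the execution, each of degree $O(n)$, against a field of size $\Omega(n^3)$; the projected-linear case goes through identically, with the projected coordinates decoded alongside $V$ and then discarded.
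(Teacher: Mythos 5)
Your overall route is the same as the paper's: encode the decompositions of $S$ as distinct monomials of the Pfaffian of a skew-symmetric matrix assembled from the two contraction representations plus random cross entries, and then isolate one monomial by Harvey-style inverse maintenance. Your reduction to intersection (twisting $D_2$ by $S$) is sound, the expansion of $\Pf M$ over $X \subseteq V$ is correct with no cancellation, and your sequential greedy (zero $j_v$ if the matrix stays non-singular, otherwise delete the two copies of $v$) provably maintains non-singularity after the single initial Schwartz--Zippel event and outputs a certified pair $(F_1, F_1 \Delta S)$; with $n$ explicit rank tests this gives $O(n^{\omega+1})$, matching the self-reducibility baseline. The paper instead keeps the three-copy triangle gadget of Lemma~\ref{lemma:delta-sum} restricted to $S$; that difference is cosmetic.

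The gap is in the claimed $O(n^\omega)$ speedup. The framework of Lemma~\ref{lemma:harvey-algorithm} requires every tentative and committed update to be confined to a constant-size \emph{principal} submatrix $M[X]$ -- Lemma~\ref{lemma:harvey} is stated for $\Delta = \widetilde{M}[X]-M[X]$ -- and it is precisely this locality that lets the recursion maintain only the inverse on small blocks and batch cross-block corrections. Your ``delete a symmetric pair of rows and columns'' update is not of this form: simulating the deletion of $v_1,v_2$ inside a fixed-dimension matrix amounts to zeroing whole rows and columns of the $A_1$- and $A_2$-blocks, an update whose support meets every other component; testing and lazily propagating it needs rows and columns of the inverse that the recursion does not maintain, and maintaining them explicitly costs $\Theta(n^2)$ per deletion, i.e.\ $\Theta(n^3)$ overall. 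So ``exactly the situation handled by Harvey's framework'' is asserted rather than established. Fortunately the deletion branch is unnecessary: since $\Pf M$ is multilinear in the $j_v$ and each monomial is $\prod_{v\in X} j_v$ times a fixed coefficient, greedily zeroing an inclusion-wise maximal set $Z$ of the $j_v$ leaves exactly one surviving monomial, namely the one with $X=V\setminus Z$, so $Z$ itself is the common feasible set and you may output $F_1=Z$, $F_2=Z\Delta S$. With only zeroing updates, each confined to the $2\times 2$ block $\{v_1,v_2\}$, your construction becomes a literal instance of Lemma~\ref{lemma:harvey-algorithm} with components of size two, and the $O(n^\omega)$ bound follows -- this is exactly how the paper argues, with its triangle components decoding $F_1\setminus F_2$, $F_2\setminus F_1$ and the complement. (Both you and the paper are terse about the projected case, where the existential projection coordinates must first be fixed or made optional before the matrix is formed; that is a minor point by comparison.)
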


If $S$ is the weak witness for \textsc{Linear DM Delta-Covering},
we thus recover the strong (actual) witness in the same time.
For \textsc{Linear DM Covering}, we would apply this to the
subinstance induced by the weak witness $F$. 
The other problems follow as above.

\begin{corollary} \label{icor:search}
  The following search problems can be solved in $O(n^\omega)$ field
  operations with high probability, given (projected) linear delta-matroids on ground sets of $n$ elements represented over
  a common field with $\Omega(n^3)$ elements: \textsc{DM Covering}; \textsc{DM Delta-Covering};
  \textsc{DM Intersection}; \textsc{DM Partition}; and \textsc{DM Parity}.
\end{corollary}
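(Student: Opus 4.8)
The plan is to treat Theorem~\ref{ithm:target-delta-sum} as the engine and solve each of the five problems by first locating the relevant \emph{weak} witness in $O(n^\omega)$ field operations and then promoting it to a \emph{strong} witness with a single call to that theorem. For an instance on inputs $D_1,D_2$ (for \textsc{DM Parity}, on $D$ and the pairing $\Pi$), I would proceed in three phases. First, build a contraction representation $\bD(A)/T$ of the appropriate composite: $D_1\Delta D_2$ for \textsc{DM Delta-Covering}, \textsc{DM Intersection} and \textsc{DM Partition}, $D_1\cup D_2$ for \textsc{DM Covering}, and $D\Delta D_\Pi$ for \textsc{DM Parity}, where $D_\Pi$ is the directly represented (hence linear) matching delta-matroid of the pairing graph; this uses Theorem~\ref{ithm:delta-sum} (cost $O(n^\omega)$) or Theorem~\ref{ithm:union} (cost $O(n^2)$), working over an extension field of size $\Omega(n^3)$ obtained by taking the error parameter $\varepsilon=\Theta(1/n^2)$. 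Second, read off the weak witness, i.e.\ a maximum-cardinality feasible set of the composite — equivalently a maximum-cardinality $S\subseteq V$ with $A[S\cup T]$ nonsingular — which is a unit-weight instance of the max-weight feasible set problem and so is found in $O(n^\omega)$ field operations by Theorem~\ref{ithm:weighed-feasible}. Third, translate the weak witness into a targeted delta-sum instance and invoke Theorem~\ref{ithm:target-delta-sum} to obtain the strong witness. Since the $\varepsilon$-approximate representation built in phase one \emph{under}-represents the composite ($\cF'\subseteq\cF$), any set $S$ found in phase two is genuinely feasible in the true composite, so the hypothesis of Theorem~\ref{ithm:target-delta-sum} holds with no circularity, and a union bound over the $O(1)$ randomized subroutines gives success with high probability; each phase costs $O(n^\omega)$ field operations.

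Concretely, for \textsc{DM Delta-Covering} I would take $\bD(A)/T=D_1\Delta D_2$, let $S$ be the maximum-cardinality set with $A[S\cup T]$ nonsingular, and apply Theorem~\ref{ithm:target-delta-sum} to $(D_1,D_2,S)$: the returned $(F_1,F_2)$ satisfy $F_1\in\cF_1$, $F_2\in\cF_2$ and $F_1\Delta F_2=S$ with $|S|$ the optimum. \textsc{DM Parity} reduces to this by setting $D_1=D$, $D_2=D_\Pi$: as noted in the text, over $\bD(A)/T=D\Delta D_\Pi$ the quantity $|F\Delta F_\Pi|$ with $F\in\cF(D)$, $F_\Pi\in\cF(D_\Pi)$ is maximized precisely when the number of broken pairs of $F$ is minimized, so the maximum-cardinality $S$ encodes the set of broken pairs (the weak witness) and Theorem~\ref{ithm:target-delta-sum} applied to $(D,D_\Pi,S)$ returns the strong witness $F$.

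For \textsc{DM Covering} I would instead use Theorem~\ref{ithm:union} to get $\bD(A)/T=D_1\cup D_2$ and let $F$ be the maximum-cardinality set with $A[F\cup T]$ nonsingular (the weak witness). I would then restrict both inputs to the ground set $F$ by deleting $V\setminus F$ — this keeps the delta-matroids linear and, in a contraction representation, amounts to passing to the principal submatrix on $F\cup T$, so the restricted representations are available within budget. Any disjoint pair $(F_1^*,F_2^*)$ certifying that $F$ is a \textsc{DM Covering} solution witnesses $F\in\cF\bigl((D_1\setminus(V\setminus F))\Delta(D_2\setminus(V\setminus F))\bigr)$, so Theorem~\ref{ithm:target-delta-sum} applies to this restricted instance with target $F$; its output $(F_1,F_2)$ has $F_1,F_2\subseteq F$ and $F_1\Delta F_2=F$, hence $F_1\cap F_2=\emptyset$ and $F_1\cup F_2=F$ — exactly the desired strong witness. \textsc{DM Intersection} is the degenerate case $S=\emptyset$: check whether $A[T]$ is nonsingular (i.e.\ $\emptyset$ feasible in $D_1\Delta D_2$, which is also the decision step), and if so apply Theorem~\ref{ithm:target-delta-sum} to $(D_1,D_2,\emptyset)$ to get $F_1=F_2=:F\in\cF_1\cap\cF_2$. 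Finally \textsc{DM Partition} reduces to \textsc{DM Intersection} between $D_1$ and the twist $D_2\Delta V$, since $Q\in\cF_2$ iff $P=V\setminus Q\in\cF(D_2\Delta V)$, and twisting is a bookkeeping change on a linear representation; if no such $P$ exists, the decision step reports infeasibility.

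I expect the only step with genuine content — as opposed to bookkeeping on top of the four cited theorems and the weak-witness inter-reductions already established in the text — to be the restriction step for \textsc{DM Covering}: one must verify both that deleting the elements outside the weak witness keeps the two delta-matroids linear with a representation computable within the $O(n^\omega)$ budget, and that this makes $F$ feasible in the delta-sum of the restrictions, so that Theorem~\ref{ithm:target-delta-sum} is actually applicable. Both follow from closure of linear delta-matroids under deletion together with the maximality of $F$; with these in hand, the remainder is assembling Theorems~\ref{ithm:union}, \ref{ithm:delta-sum}, \ref{ithm:weighed-feasible} and~\ref{ithm:target-delta-sum} and taking a union bound over the few randomized subroutines.
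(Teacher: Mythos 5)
Your proposal is correct and follows essentially the same route as the paper: build the delta-sum (resp.\ union) representation via Lemma~\ref{lemma:delta-sum}/Lemma~\ref{lemma:union}, recover the weak witness as a maximum-cardinality feasible set of the composite, and promote it to a strong witness with Theorem~\ref{ithm:target-delta-sum}, handling \textsc{DM Covering} by restricting to the subinstance induced by the weak witness, \textsc{DM Intersection} via $S=\emptyset$, and \textsc{DM Parity} via $D \Delta D_\Pi$. The only cosmetic deviation is treating \textsc{DM Partition} as intersection with the twist $D_2 \Delta V$ rather than taking $S=V$ in the delta-sum, which is equivalent.
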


Finally, we consider the \emph{weighted} versions of the above problems.
Again, the picture becomes slightly different. For \textsc{DM Partition},
there is no sensible weighted version. For \textsc{DM Covering} and \textsc{DM Delta-Covering},
the natural weight is the weight of the solution set $F$,
in which case the problem is solved in
strongly polynomial time of $O(n^\omega)$ field operations
using Theorem~\ref{ithm:weighed-feasible} and~\ref{ithm:target-delta-sum}.
For \textsc{DM Parity}, we may consider attaching weights to the weak or
the strong witness. In the former case, we end up with a problem where
you have to pay for breaking a pair and want to minimize the cost of a
feasible set; again, this is solved in $O(n^\omega)$ field operations
using Theorem~\ref{ithm:weighed-feasible} and~\ref{ithm:target-delta-sum}.
For the more interesting version, we assume that there exists a
delta-matroid parity set with no broken pairs, and we wish to maximize
the weight of such a set. This version, finally, is equivalent to the
weighted version of \textsc{DM Intersection}, which we
focus on, and directly generalizes \textsc{Weighted Matroid Parity}.
We use the matrix representation of the problem to construct 
a solution via algebraic methods.

As a special case, even the unit weight version,
\textsc{Maximum DM Intersection},
is an interesting problem which up to now has had no polynomial-time
solution. Kakimura and Takamatsu~\cite{KakimuraT14sidma}
asked this as an open problem, and provided algorithms for some
special cases of it. We solve the general case.

\begin{restatable}{theorem}{cardinalityintersection} \label{ithm:cardinality-intersection}
  Let $D_1=(V,\cF_1)$ and $D_2=(V,\cF_2)$ be linear or projected
  linear delta-matroids over a common field with $\Omega(n^2)$ elements and let
  $w \colon V \to \{1,\ldots,W\}$ be element weights.
  In $O(Wn^{\omega})$ field operations we can find with high probability the maximum weight 
  of a common feasible set $F \in \cF_1 \cap \cF_2$.
  In particular, we can find the maximum cardinality of $|F|$ in
  $O(n^\omega)$ field operations. 
\end{restatable}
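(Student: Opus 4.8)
\medskip\noindent\emph{Proof plan.}
The plan is to reduce weighted \textsc{DM Intersection} to computing the degree, in a fresh indeterminate $t$, of the determinant of a single $O(n)\times O(n)$ skew-symmetric polynomial matrix, and then to invoke Storjohann's algorithm for symbolic determinants. First I would invoke the contraction representation of Section~\ref{sec:contraction} to write $D_1=\bD(A_1)/T_1$ and $D_2=\bD(A_2)/T_2$ with $A_1$ skew-symmetric over $V\cup T_1$, $A_2$ over $V\cup T_2$, and $|T_1|,|T_2|=O(n)$ (projected linear delta-matroids are handled by additionally carrying the projection sets $X_1,X_2$ and a gadget, see below). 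Introduce an indeterminate $x_v$ for each $v\in V$ and form the skew-symmetric matrix
\[
  M(t)=\begin{pmatrix}\Lambda A_1\Lambda & X\\ -X^{T} & A_2\end{pmatrix},
\]
where $\Lambda$ is diagonal over $V\cup T_1$ with $\Lambda[v]=t^{w(v)}$ for $v\in V$ and $\Lambda[u]=1$ for $u\in T_1$, and $X$ is the $(V\cup T_1)\times(V\cup T_2)$ matrix with $(v,v)$-entry $x_v$ for $v\in V$ and all other entries $0$ (identifying both copies of $V$). Then $M(t)$ is skew-symmetric of dimension $O(n)$ with entries of $t$-degree $O(W)$.

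The key step is a Pfaffian expansion. In any perfect matching of the support graph of $M(t)$, every $u\in T_i$ is matched inside block $i$, while each $v\in V$ is matched either inside block $1$ or to its copy in block $2$ through $X$; letting $F\subseteq V$ be the set of $v$ matched inside block $1$, the Pfaffian minor-summation identity (the same family as the Ishikawa--Wakayama formula recalled in Section~\ref{sec:prelim}) gives
\[
  \Pf M(t)=\sum_{F\subseteq V}\varepsilon_F\, t^{\,w(F)}\Bigl(\prod_{v\in V\setminus F}x_v\Bigr)\Pf(A_1[F\cup T_1])\,\Pf(A_2[F\cup T_2]),
\]
with each $\varepsilon_F\in\{+1,-1\}$. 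The $F$-term is nonzero exactly when $A_1[F\cup T_1]$ and $A_2[F\cup T_2]$ are both non-singular, i.e.\ $F\in\cF_1\cap\cF_2$, and distinct $F$ carry distinct monomials $\prod_{v\notin F}x_v$, so no cancellation occurs. Hence $\Pf M(t)\not\equiv0$ iff $\cF_1\cap\cF_2\neq\emptyset$, and in that case $\deg_t\Pf M(t)=\max\{w(F):F\in\cF_1\cap\cF_2\}=:d^{*}$, which is exactly the quantity sought.

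Finally I would eliminate the auxiliary indeterminates and compute. The coefficient of $t^{d^{*}}$ in $\Pf M(t)$ is a nonzero polynomial of degree $O(n)$ in the $x_v$ (and the projection indeterminates), so substituting independent uniformly random values from $\F$ -- which by hypothesis has $\Omega(n^2)$ elements -- keeps it nonzero, hence preserves $\deg_t\Pf M(t)$, with probability $1-O(1/n)$ by Schwartz--Zippel. After substitution $M(t)$ is a skew-symmetric matrix over $\F[t]$ of dimension $O(n)$ and degree $O(W)$, so $\det M(t)=(\Pf M(t))^2$ and $\tfrac12\deg_t\det M(t)=d^{*}$ with high probability. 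Computing $\det M(t)\in\F[t]$ with Storjohann's algorithm costs $O(Wn^\omega)$ field operations; for $w\equiv1$ we have $W=1$, yielding the maximum cardinality in $O(n^\omega)$ field operations. For projected linear delta-matroids the only change is that projection over an element must be encoded so that the matching may either use it or not, which I would implement by a constant-size gadget per projected element (analogous to the treatment of projections elsewhere in the paper and in Geelen et al.~\cite{GeelenIM03}), increasing the dimension by a constant factor while leaving the $t$-degree and the expansion above unchanged, with $\cF_1,\cF_2$ now the projected families.

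I expect the main obstacle to be the bookkeeping in the projected case -- designing the gadget so that it faithfully realizes existential projection inside a Pfaffian while still capturing feasible sets of every parity -- together with pinning down the global signs $\varepsilon_F$ in the Pfaffian minor-summation expansion (it is here that the indeterminates $x_v$ are essential: without them, two equal-weight common feasible sets could cancel). The Schwartz--Zippel step and the invocation of Storjohann are then routine.
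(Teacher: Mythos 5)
Your proposal is correct and follows essentially the same route as the paper: both combine the two contraction representations into a single $O(n)\times O(n)$ skew-symmetric matrix whose Pfaffian expansion enumerates common feasible sets as distinct monomials in auxiliary indeterminates (preventing cancellation), with the weight recorded in the exponent of a single polynomial variable, and then read off the answer from the degree structure of the determinant computed by Storjohann's algorithm in $O(Wn^\omega)$ field operations. The only cosmetic difference is where the weight appears — you scale the $A_1$ block diagonally by $t^{w(v)}$ so the maximum weight is $\tfrac12\deg_t\det M(t)$, whereas the paper puts $z^{w(v)}y_v$ on the coupling entries so the answer is the largest $t$ with a non-zero coefficient of $z^{2w(V)-2t}$ — which does not change the argument or the running time.
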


By self-reducibility, the search version can thus be solved by a 
factor $O(n)$ overhead.

\begin{corollary}
  \textsc{Weighted Linear DM Intersection}
  and \textsc{Weighted Perfect DM Parity}
  can be solved in $O(n^{\omega+1})$ field operations. 
\end{corollary}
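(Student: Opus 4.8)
The plan is to reduce both problems to the search version of \textsc{Weighted Linear DM Intersection}, and then to obtain that search version from the value oracle of Theorem~\ref{ithm:cardinality-intersection} by self-reducibility. I first dispose of \textsc{Weighted Perfect DM Parity}. Given $D=(V,\cF)$ and a pairing $\Pi=\{\{u_i,v_i\}\}$, let $D_\Pi$ be the matching delta-matroid of the graph $(V,\Pi)$; it is directly represented by the corresponding Tutte matrix, a block-diagonal skew-symmetric matrix with one $2\times 2$ block per pair, so $D_\Pi$ is a linear delta-matroid over $\F$ whose feasible sets are exactly the unions of pairs. A feasible set of $D$ with no broken pairs is then precisely a set in $\cF\cap\cF(D_\Pi)$, and maximizing its weight is exactly \textsc{Weighted Linear DM Intersection} on $(D,D_\Pi)$ (after moving to a common extension field if necessary, and noting that $D$ may be projected linear while $D_\Pi$ is linear, both of which Theorem~\ref{ithm:cardinality-intersection} allows). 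Hence it suffices to find an optimal common feasible set for \textsc{Weighted Linear DM Intersection}.

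For the search version I would use closure of (projected) linear delta-matroids under contraction and deletion. Writing $\mathrm{OPT}(\cdot,\cdot)$ for the optimal value computed by Theorem~\ref{ithm:cardinality-intersection}, the governing identity is, for any $v\in V$,
\[
  \mathrm{OPT}(D_1,D_2)=\max\bigl(w(v)+\mathrm{OPT}(D_1/v,D_2/v),\ \mathrm{OPT}(D_1\setminus v,D_2\setminus v)\bigr),
\]
which follows because $\cF(D_i/v)=\{F-v \mid F\in\cF_i,\ v\in F\}$ and $\cF(D_i\setminus v)=\{F\in\cF_i\mid v\notin F\}$, so the common feasible sets of the contracted (resp.\ deleted) pair are in weight-preserving (up to $w(v)$) bijection with the common feasible sets of $(D_1,D_2)$ that contain (resp.\ avoid) $v$. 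Processing $V$ element by element, at each step I compute $\mathrm{OPT}(D_1/v,D_2/v)$ and, if $w(v)+\mathrm{OPT}(D_1/v,D_2/v)$ equals the current optimum, place $v$ in the solution and recurse on $(D_1/v,D_2/v)$; otherwise I drop $v$ and recurse on $(D_1\setminus v,D_2\setminus v)$. Degenerate cases in which contraction or deletion is undefined simply force $v$ out or in. After $O(n)$ steps this reconstructs an optimal common feasible set. Each step rebuilds the contracted/deleted contraction representation by pivoting and makes $O(1)$ value calls; for unit weights each value call costs $O(n^\omega)$, giving the claimed $O(n^{\omega+1})$ field operations, and the general weighted case is identical, inheriting the $W$-factor of Theorem~\ref{ithm:cardinality-intersection} per call.

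The step I expect to be the main obstacle is controlling error across this adaptive self-reduction: Theorem~\ref{ithm:cardinality-intersection} is Monte Carlo, and a single wrong value would send the search down an incorrect branch on which every subsequent computation is meaningless. I would work over a large enough extension field that each value call errs with probability $O(n^{-2})$, and then union-bound over the $O(n)$ adaptive calls; the $\Omega(n^2)$-element hypothesis of Theorem~\ref{ithm:cardinality-intersection} is met after a mild enlargement. As a safeguard, the returned set $F$ can be verified against $\cF_1\cap\cF_2$ by two nonsingularity tests and against the value $W^*$, so that a rare failure is detected and the run restarted. The remaining bookkeeping—that contraction and deletion preserve the contraction representation and the projected structure (Section~\ref{sec:contraction}), and the dummy padding putting $D_1,D_2$ over a common ground set—is routine.
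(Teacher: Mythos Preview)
Your proposal is correct and takes essentially the same approach as the paper: the paper's entire justification for this corollary is the single sentence ``By self-reducibility, the search version can thus be solved by a factor $O(n)$ overhead,'' together with the earlier-discussed reduction from \textsc{DM Parity} to \textsc{DM Intersection} via the matching delta-matroid $D_\Pi$. You have simply spelled out the self-reduction (contract/delete, one element at a time, comparing values) and the error-control via field enlargement and verification, all of which is routine and matches what the paper leaves implicit; your observation that the general weighted bound inherits the $W$-factor from Theorem~\ref{ithm:cardinality-intersection} (yielding $O(Wn^{\omega+1})$ rather than the stated $O(n^{\omega+1})$) is also accurate and reflects a minor imprecision in the paper's statement of the corollary.
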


Removing the overhead on this result appears significantly harder.
Indeed, a similar overhead exists for algorithms for weighted linear
matroid parity~\cite{CheungLL14}, and even removing the overhead for
\textsc{Weighted Perfect Matching} was significantly non-trivial~\cite{CyganGS15matching}.
We leave this as a (challenging) open question.

\subsubsection{Applications}

We now review some applications of the results. Not all of these
results are new, but they serve to demonstrate the applicability of
the setting.

One area of application is graph matching and factor problems. 
In the general factor problem, given a graph $G = (V, E)$ and a set of
integers $f(v) \subseteq \mathbb{N}$ for each vertex $v \in V$, we are
tasked with finding a spanning subgraph $H = (V, F)$ such that
$\deg_H(v) \in f(v)$ for every vertex $v \in V$.
Cornu{\'{e}}jols~\cite{Cornuejols88} showed that this problem is
polynomial-time solvable if each $f(v)$ has gaps of length at most 1
and NP-hard otherwise.
For a delta-matroid connection, for each $v \in V$ let $\delta(v)$
be the set of edges incident with $v$
and define $\cF_v=\{S \subseteq \delta(v) \mid |S| \in f(v)\}$.
Then the gap-1 condition is equivalent to $D_v=(\delta(v), \cF_v)$
forming a delta-matroid for every $v \in V$. We refer to $D_v$ as a
\emph{symmetric} delta-matroid. Then symmetric linear delta-matroids
correspond to cases where $f(v)=\{a,a+2,\ldots,b\}$,
so-called \emph{parity $(a,b)$-factors},
and in projected representation they additionally cover
$f(v)=\{a,a+1,\ldots,b\}$, so-called \emph{$(a,b)$-factors}. 

Consider the following setup. Let $G=(V,E)$ be a graph and let
$D_E$ be the matching delta-matroid on the ground set
$E_2=\{(e,v) \mid e \in E, v \in e\}$ with edges $\{(uv,u),(uv,v)\}$, $uv \in E$. 
Furthermore, let $V_f=\{(v,i) \mid v \in V, i \in \N, 1 \leq i \leq \max f(v)\}$.
Via the Ishikawa-Wakayama formula (see Lemma~\ref{lemma:cauchy-binet-ss}), we can construct a delta-matroid $D_f$
on ground set $V_f$ whose feasible sets correspond precisely to degree
sequences of subgraphs of $G$.
Thus, by imposing a second delta-matroid on the set $(v,i)$ for each $v \in V$, 
the \textsc{$(a,b)$-Factor} and \textsc{$(a,b)$-Parity Factor}
problems reduce to \textsc{Linear DM Intersection}.
This is similar to Gabow and Sankowski~\cite{GabowS21}.
There has also been recent work on \emph{weighted} general factors \cite{DudyczP17,Kobayashi23}.

For another example, let $G=(V,E)$ be a graph and $T \subseteq V$ a set of terminals.
Let $\cS$ be a partition of $T$. A \emph{$\cS$-path packing}
is a vertex-disjoint packing of paths where every path has
endpoints in distinct parts of $\cS$ and internal vertices disjoint
from $T$. A classical theorem of Mader shows a min-max theorem
characterizing the maximum number of paths in a $\cS$-path packing,
generalizing Menger's theorem and the Tutte-Berge formula;
see Schrijver~\cite{SchrijverBook}.
Wahlström~\cite{Wahlstrom24SODA} recently showed the following.
Let a set $F \subseteq T$ be \emph{feasible} if there is a $\cS$-path
packing whose set of endpoints is precisely $F$.
Then $D=(T,\cF)$ is a linear delta-matroid.
Then, via \textsc{Linear DM Intersection} as above,
we can solve the following ``$\cS$-factor'' problem:
Given $G$, $\cS$ and a prescribed set of degrees
$f(T_i)$, $T_i \in \cS$, is there a $\cS$-path packing in $(G,\cS)$
where precisely $f(T_i)$ paths have an endpoint in $T_i$?
The generalizations to $(a,b)$-factors and $(a,b)$-parity factors
can be handled similarly. 

%
%
%

\subsection{Related work}

For a survey on delta-matroids, see Moffatt~\cite{Moffatt19deltamatroids}.
In particular, as is well-surveyed by Moffatt,
there has been a recent resurgence of interest in delta-matroids
for studying graph embeddings in non-planar surfaces~\cite{ChunMNR19a,ChunMNR19b}.
Specifically, there is a class of delta-matroids that captures
so-called \emph{quasi-trees of ribbon graphs},
which generalizes graphic matroids to embedded graphs
in a way that also captures the structure of the embedding.
In addition, delta-matroids have had applications
in structural matroid theory, via so-called \emph{twisted matroids}~\cite{GeelenGK00,Gollin2021Obstructions},
and in structural graph theory, by the close association between
the \emph{vertex minor} and \emph{pivot minor} operations on graphs,
and twists of delta-matroids~\cite{GeelenO09circle,Geelen97unimodular,Oum09deltagraphic}
(cf.~\cite{Oum17rankwidth}).

\emph{Algorithms on matroids.}
As mentioned, \textsc{Linear DM Parity}
and related problems generalize \textsc{Linear Matroid Parity}.
which in turn generalizes \textsc{Linear Matroid Intersection}.
The fastest algorithm for \textsc{Linear Matroid Parity} runs in $O(nr^{\omega-1})$ operations 
where $n$ is the size of the ground set and $r$ the rank~\cite{CheungLL14}.
The weighted case can be solved in $\tilde O(Wmr^\omega)$ where $W$
is the maximum weight of a pair~\cite{CheungLL14}, and was more
recently shown to be solvable in strongly combinatorial time $O(nr^3)$
via a complex algorithm of Iwata and Kobayashi~\cite{IwataK22SICOMP}.
For \textsc{Matroid Intersection} and other problems tractable
on general matroids, there is an active pursuit of efficient algorithms
whose running time is measured in terms of the number of oracle queries
over the matroid~\cite{ChakrabartyLS0W19,Blikstad21,BlikstadMNT23}.

\emph{Path-packing problems.}
In the \textsc{Shortest Disjoint $\cS$-Paths} problem, the task
is to find a $\cS$-path packing of $k$ paths with shortest total
length in the presence of non-negative edge weights.
Yamaguchi~\cite{Yamaguchi16} reduces this to \textsc{Weighted Linear Matroid Parity}.
$\cS$-path packings have been generalized to non-zero paths in
group-labelled graphs~\cite{ChudnovskyGGGLS06,ChudnovskyCG08}
and so-called \emph{non-returning} paths~\cite{Pap07,Pap08},
with some connection to matroid representations.
See also~\cite{Yamaguchi14,TanigawaY16}.


\emph{Boolean edge CSP and Boolean planar CSP.}
Delta-matroid parity problems also occur in the context of restricted
versions of the Boolean CSP problems. Feder~\cite{Feder01fanout}
considered the Boolean CSP over a constraint language $\Gamma$,
with the additional restriction that every variable occurs in only two
constraints. 
This has been referred to as the \emph{Boolean edge CSP}~\cite{KazdaKR19},
as such instances can be represented by an undirected graph where
the edges are variables and vertices are constraints.
Feder showed (assuming $\Gamma$ contains constants) that this problem is as hard as the unrestricted CSP over $\Gamma$,
unless every constraint in $\Gamma$ is a delta-matroid. 
In the latter case, the Boolean edge CSP reduces to \textsc{DM Parity}
and although the complexity for this problem is open in general,\footnote{%
  Although \textsc{DM Parity} is intractable in general, the form created by this reduction
  is as a direct sum over constant-sized components; 
  hence the general intractability results do not apply.
}
many tractable cases are known. In particular, Kazda et al.~\cite{KazdaKR19}
showed that the problem is in P if every delta-matroid is \emph{even}.
We note that if the constraints form (projected) linear delta-matroids,
representable over a common field, then via \textsc{Weighted DM Parity}
we can find a feasible solution of maximum or minimum Hamming weight.

Another restriction where delta-matroids are surprisingly relevant is
the \emph{Boolean planar CSP}. Here, the CSP instances are restricted
to having planar incidence graphs. Similarly to above, let $\Gamma$
be Boolean constraint language, and assume that the unrestricted CSP
over $\Gamma$ is intractable (as otherwise there is no reason to study
special cases). Dvor{\'a}k and Kupec~\cite{DvorakK15} showed that
either the planar CSP over $\Gamma$ is intractable, or the
problem reduces to Boolean edge CSP for a constraint language $\Gamma'$
where every constraint is an even delta-matroid. Thus by Kazda et al.~\cite{KazdaKR19},
there is a dichotomy of Boolean planar CSP as being in P or NP-hard,
where the (new) tractable cases correspond to \textsc{DM Parity}.

\emph{Parameterized complexity.}
As mentioned, linear matroids and related tools have seen significant
applications in parameterized
complexity~\cite{KratschW14TALG,KratschW20JACM,FominLPS16JACM,EibenKW24SODA}.
Recently, Wahlström extended the kernelization aspect of these tools
to delta-matroids, and used it to show a sparsification result related
to Mader's path-packing problem in terminal networks~\cite{Wahlstrom24SODA}.
In another direction, Eiben et al.\ gave a general method
for constructing efficient FPT algorithms by combining multivariate
generating polynomials with linear matroid side constraints~\cite{EibenKW24SODA}.
Applying this to the Pfaffians of the linear delta-matroids constructed in this paper 
should give a number of immediate FPT consequences.
We defer deeper investigations into these connections to later research.

\section{Preliminaries} \label{sec:prelim}

For two sets $A, B$, we let $A \Delta B=(A \setminus B) \cup (B \setminus A)$
denote their symmetric difference. 

For a matrix $A$ and a set of rows $S$ and columns $T$, we denote by $A[S, T]$ the submatrix containing rows $S$ and columns $T$.
If $S$ contains all rows ($T$ contains all columns), then we use the shorthand $A[\cdot, T]$ ($A[S, \cdot]$, respectively).
The $n \times m$ zero matrix and the $n \times n$ identity matrix is denoted by $O_{n \times m}$ and $I_{n}$, respectively.
We often drop the subscript when clear from context.

\emph{Delta-matroids.}
A \emph{delta-matroid} is a pair $D=(V,\cF)$ where $V$ is a ground set
and $\cF \subseteq 2^V$ a collection of \emph{feasible sets},
subject to the rule
\[
\forall A, B \in \cF,\, x \in A \Delta B\, \exists y \in A \Delta B :
A \Delta \{x,y\} \in \cF.
\]
This is known as the \emph{symmetric exchange axiom}.
For $D=(V,\cF)$ we let $V(D)=V$ and $\cF(D)=\cF$. 
A delta-matroid is \emph{even} if all feasible sets have the same
parity. Note that in this case we must have $x \neq y$ in the
symmetric exchange axiom, although this does not necessarily hold in
general. 

A \emph{separation oracle} for a delta-matroid $D=(V,\cF)$ is an
oracle that, given a pair $(S,T)$ of disjoint subsets of $V$,
reports whether there is a set $F \in \cF$ such that
$S \subseteq F$ and $F \cap T = \emptyset$. If so, the pair $(S,T)$ 
is \emph{separable}. A delta-matroid is \emph{tractable} if it has a
polynomial-time separation oracle.  

For a delta-matroid $D=(V,\cF)$ and $S \subseteq D$, the
\emph{twisting} of $D$ by $S$ is the delta-matroid
$D \Delta S = (V, \cF \Delta S)$ where
$
  \cF \Delta S = \{F \Delta S \mid F \in \cF\}.
$
This generalizes some common operations from matroid theory. 
The \emph{dual} delta-matroid of $D$ is $D \Delta V(D)$.
For a set $S \subseteq V(D)$, the \emph{deletion} of $S$ from $D$ 
refers to the set system
$
  D \setminus S = (V \setminus S, \{F \in \cF \mid F \subseteq V \setminus S\}),
$
The \emph{contraction} of $S$ refers to
$
  D/S = (D \Delta S) \setminus S = (V \setminus S, \{F \setminus S \mid F \in \cF, S \subseteq F\}).
$

\emph{Skew-symmetric matrices.}
A square matrix $A$ is \emph{skew-symmetric} if $A = -A^T$. In the
case that $A$ is over a field of characteristic 2, we will
additionally assume that it has zero diagonal, unless stated otherwise.
For a skew-symmetric matrix $A$ with rows and columns indexed by a set $V = [n]$, 
the \emph{support graph} of $A$ is the graph $G=(V, E)$ where $E = \{ uv \mid A[u, v] \ne 0 \}$.
A fundamental tool for working with skew-symmetric matrices is the
\emph{Pfaffian}, defined for a skew-symmetric matrix $A$ as
\begin{align*}
  \Pf A = \sum_{M} \sigma(M) \prod_{e \in M} A[u, v],
\end{align*}
where $M$ ranges over all perfect matchings of the support graph of $A$ and 
$\sigma(M) \in \{1,-1\}$ is the sign of the permutation:
\[
  \begin{pmatrix}
    1 & 2 & \cdots n - 1 & n \\
    v_1 & v_1' & \cdots v_{n/2} & v_{n/2}'
  \end{pmatrix},
\]
where $M = \{ v_i v_i' \mid i \in [n/2] \}$ with $v_i < v_i'$ for all $i \in [n/2]$.
It is known that $\det A=(\Pf A)^2$, hence in particular $\Pf A \neq 0$
if and only if $A$ is non-singular. However, for many algorithms it
will be more convenient to work directly with the Pfaffian. 
In fact, Pfaffian generalizes the notion of determinants as follows.
\begin{lemma} \label{lemma:det-pf}
For an $n \times n$-matrix $M$, it holds that
$ \det M = (-1)^{n(n-1)/2} \Pf \begin{pmatrix}
  O & M \\
  -M^T & O \\
\end{pmatrix}. $
\end{lemma}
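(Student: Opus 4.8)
The plan is to prove the identity
\[
  \det M = (-1)^{n(n-1)/2} \Pf \begin{pmatrix} O & M \\ -M^T & O \end{pmatrix}
\]
by expanding the Pfaffian of the block matrix $B = \begin{pmatrix} O & M \\ -M^T & O \end{pmatrix}$ directly from its combinatorial definition and matching it term-by-term against the Leibniz expansion of $\det M$. First I would set up notation: index the first block of rows/columns of $B$ by $1,\dots,n$ and the second block by $1',\dots,n'$, so that the support graph of $B$ is bipartite between primed and unprimed indices (since the diagonal blocks are zero, no edge joins two unprimed or two primed vertices). Consequently every perfect matching $N$ of the support graph of $B$ has the form $\{\, i\, \pi(i)' \mid i \in [n]\,\}$ for a unique permutation $\pi \in S_n$, and it contributes exactly when $B[i,\pi(i)'] = M[i,\pi(i)] \neq 0$ for all $i$. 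The product of edge weights over such a matching is $\prod_i M[i,\pi(i)]$, which is precisely the Leibniz monomial of $\det M$ for the permutation $\pi$ (up to the sign of $\pi$).

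The key step, and the main obstacle, is to track the sign term $\sigma(N)$ of Bouchet's Pfaffian convention against the Leibniz sign $\operatorname{sgn}(\pi)$. In the definition given in the preliminaries, one writes the matching as the two-line permutation sending $1,2,\dots,2n-1,2n$ to $v_1,v_1',\dots,v_n,v_n'$ where each edge $v_i v_i'$ is listed with $v_i < v_i'$ in the fixed linear order on the $2n$ indices (here $1 < 2 < \dots < n < 1' < 2' < \dots < n'$). For the matching $N = \{\, i\, \pi(i)' \mid i \in [n]\,\}$, each unprimed vertex precedes each primed vertex, so the smaller endpoint of the edge through unprimed $i$ is $i$ itself. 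I would sort the edges so that their smaller endpoints appear in increasing order $1,2,\dots,n$; then the two-line word becomes
\[
  \begin{pmatrix} 1 & 2 & \cdots & 2n-1 & 2n \\ 1 & \pi(1)' & 2 & \pi(2)' & \cdots & n & \pi(n)' \end{pmatrix},
\]
and $\sigma(N)$ is the sign of this permutation of the $2n$-element ordered ground set. The task is to show this sign equals $(-1)^{n(n-1)/2}\operatorname{sgn}(\pi)$, \emph{independently of $\pi$} for the constant-factor part. This is a clean bookkeeping computation: the permutation above is the composition of (a) the fixed "interleaving" permutation that, starting from the identity word $1,1',2,2',\dots,n,n'$, rearranges it to $1,2,\dots,n,1',2',\dots,n'$ — whose sign is a constant depending only on $n$ — and (b) the permutation acting within the primed block by $\pi$, whose sign is $\operatorname{sgn}(\pi)$. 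Computing the sign of the interleaving permutation by counting inversions (each primed symbol $j'$ must hop over the unprimed symbols $j{+}1,\dots,n$ to reach its target, giving $\sum_{j=1}^{n}(n-j) = n(n-1)/2$ transpositions) yields the factor $(-1)^{n(n-1)/2}$.

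Putting the pieces together: every perfect matching of $B$'s support graph corresponds to a unique $\pi \in S_n$; its weighted contribution to $\Pf B$ is $\sigma(N)\prod_i M[i,\pi(i)] = (-1)^{n(n-1)/2}\operatorname{sgn}(\pi)\prod_i M[i,\pi(i)]$; summing over all $\pi$ and factoring out the constant gives $\Pf B = (-1)^{n(n-1)/2}\sum_{\pi}\operatorname{sgn}(\pi)\prod_i M[i,\pi(i)] = (-1)^{n(n-1)/2}\det M$, which rearranges to the claimed identity since $(-1)^{n(n-1)/2}$ is its own inverse as a sign. I would also note the one subtlety in characteristic $2$: there the diagonal-zero assumption on $B$ holds automatically because the diagonal blocks are genuine zero blocks, and the sign factors are trivial, so the identity still reads $\det M = \Pf B$, consistent with the formula. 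The only genuinely fiddly part of the whole argument is the inversion count for the interleaving permutation and verifying it is the same for every $\pi$; everything else is a direct unwinding of definitions.
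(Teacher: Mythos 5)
Your proof is correct. The paper itself states Lemma~\ref{lemma:det-pf} without proof, treating it as a standard fact, so there is no in-paper argument to compare against; what you give is the standard direct verification, and it checks out. The two load-bearing points are both handled properly: (i) because the diagonal blocks of $B$ are zero, its support graph is bipartite between primed and unprimed indices, so perfect matchings correspond bijectively to permutations $\pi \in S_n$, each contributing the Leibniz monomial $\prod_i M[i,\pi(i)]$; and (ii) the sign bookkeeping is right --- listing the edges so the bottom row reads $1,\pi(1)',2,\pi(2)',\dots,n,\pi(n)'$ is legitimate because the sign-times-weight term in the Pfaffian is independent of the order in which the pairs are listed (worth saying explicitly, since the paper's definition fixes only the within-pair order), and the resulting permutation factors as the fixed interleaving permutation, whose inversion count is $\sum_{j=1}^{n}(n-j)=n(n-1)/2$, composed with a permutation of the primed block of sign $\operatorname{sgn}(\pi)$, independent of $\pi$ as you claim. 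The final rearrangement uses $(-1)^{n(n-1)}=1$, which holds since $n(n-1)$ is even, and the characteristic-$2$ remark is harmless.
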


An important operation on skew-symmetric matrices is \emph{pivoting}.
Let $A \in \F^{n \times n}$ be skew-symmetric and let $S \subseteq [n]$ 
be such that $A[S]$ is non-singular.  Order the rows and columns of
$A$ so that
\[
  A =
  \begin{pmatrix}
    B & C \\
    -C^T& D
  \end{pmatrix},
\]
where $A[S]=B$.
Then the \emph{pivoting} of $A$ by $S$ is
\[
  A * S =
  \begin{pmatrix}
    B^{-1} & B^{-1}C \\
    CB^{-1}& D + C^T B^{-1} C
  \end{pmatrix}.
\]
Note that this is a well-defined, skew-symmetric matrix.

\begin{lemma}[\cite{tucker1960combinatorial}] \label{lemma:tucker}
It then holds, for any $X \subseteq [n]$, that
$
  \det (A*S)[X] = \frac{\det A[X \Delta S]}{\det A[S]},
$
In particular,
$(A*S)[X]$ is non-singular if and only if $A[X \Delta S]$ is
non-singular. 
\end{lemma}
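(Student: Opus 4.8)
The plan is to reduce everything to the Schur‑complement determinant identity. Reorder the rows and columns of $A$ so that $S$ comes first, so that $A = \left(\begin{smallmatrix} B & C \\ -C^T & D\end{smallmatrix}\right)$ with $B = A[S]$ and $A*S$ is given by the block formula above. I will prove the determinant identity; the ``in particular'' statement is then immediate, since $\det A[S] \neq 0$ forces $\det (A*S)[X]$ and $\det A[X\Delta S]$ to vanish together.

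First I would handle the case $X \cap S = \emptyset$. Then $X \Delta S = S \cup X$ is a disjoint union, and reading off the block formula, $(A*S)[X] = A[X] + A[S,X]^T B^{-1} A[S,X]$ is exactly the Schur complement of $B$ in the principal submatrix $A[S\cup X] = \left(\begin{smallmatrix} B & A[S,X] \\ -A[S,X]^T & A[X]\end{smallmatrix}\right)$; the Schur‑complement determinant identity $\det\left(\begin{smallmatrix} B & P \\ -P^T & Q\end{smallmatrix}\right) = \det B \cdot \det(Q + P^TB^{-1}P)$ then yields $\det A[S\Delta X] = \det A[S] \cdot \det (A*S)[X]$. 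The complementary case $S \subseteq X$ is just as quick: there $X\Delta S = X\setminus S$ is disjoint from $S$, and taking the Schur complement of the invertible block $(A*S)[S] = B^{-1}$ inside $(A*S)[X]$ collapses, after cancellation, to $A[X\setminus S]$, giving $\det(A*S)[X] = \det B^{-1}\cdot\det A[X\setminus S] = \det A[X\Delta S]/\det A[S]$.

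For general $X$ I would induct on $|S|$, which is even since nonsingular skew‑symmetric matrices have even order; $|S|=0$ is trivial. If $|S|\geq 2$ then $A[S]\neq 0$, so some entry $A[p,q]\neq 0$ with $p\neq q$, and $A[\{p,q\}]$ is a nonsingular $2\times2$ skew‑symmetric block. Put $S_1 = \{p,q\}$ and $S_2 = S\setminus S_1$. By the disjoint case above (pivot $S_1$, target $S_2$), $(A*S_1)[S_2]$ is nonsingular, so $(A*S_1)*S_2$ is defined and, by transitivity of the principal pivot transform along disjoint index sets, equals $A*(S_1\Delta S_2) = A*S$. I would prove this transitivity without reference to the present lemma, via the input/output characterization of pivoting: pivoting by $S_1$ and then by $S_2$ exchanges the input and output coordinates first in $S_1$, then in $S_2$, and since $S_1\cap S_2 = \emptyset$ this is the same exchange as pivoting by $S_1\cup S_2$. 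Now I chain: the inductive hypothesis applied to $A*S_1$ with the smaller pivot $S_2$ gives $\det(A*S)[X] = \det(A*S_1)[X\Delta S_2]\,/\,\det(A*S_1)[S_2]$, and the $|S_1|=2$ case of the lemma, applied to $A$ with pivot $S_1$ and the two targets $X\Delta S_2$ and $S_2$, rewrites the numerator and denominator so that the two factors $\det A[S_1]$ cancel, leaving $\det A[(X\Delta S_2)\Delta S_1]/\det A[S_2\Delta S_1] = \det A[X\Delta S]/\det A[S]$.

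This leaves only the base case $|S| = 2$, say $S = \{p,q\}$. By the two Schur‑complement arguments the subcases $X\cap S \in \{\emptyset, \{p,q\}\}$ are done, and by exchanging the roles of $p$ and $q$ (a simultaneous row/column swap, which preserves skew‑symmetry) it suffices to treat $X\cap S = \{p\}$. With $b := A[p,q]$ one has $B = \left(\begin{smallmatrix}0 & b\\-b&0\end{smallmatrix}\right)$ and $B^{-1} = b^{-1}\left(\begin{smallmatrix}0&-1\\1&0\end{smallmatrix}\right)$; substituting the explicit blocks of $A*S$ into $\det(A*S)[X]$ and expanding along the single row and column indexed by $p$ matches it term by term against $\det A[X\Delta S]/b^2$. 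I expect this last partial‑overlap case to be the main obstacle: the clean Schur‑complement identity only reaches targets disjoint from (or containing) $S$, and pushing the statement to an arbitrary $X$ forces one both to invoke the transitivity of pivoting — and to check that invoking it is not circular — and to carry out the explicit $2\times2$ computation that anchors the induction.
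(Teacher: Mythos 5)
The paper contains no proof of this lemma to compare against: it is stated with a citation to Tucker~\cite{tucker1960combinatorial} and used as a black box. So your proposal can only be judged on its own merits, and on those terms the plan is essentially sound. The two Schur-complement cases are correct as stated: for $X \cap S = \emptyset$ the lower-right block of $(A*S)$ restricted to $X$ is exactly the Schur complement of $B=A[S]$ in $A[S \cup X]$, and for $S \subseteq X$ the complement of the block $B^{-1}$ inside $(A*S)[X]$ collapses to $A[X \setminus S]$ as you claim. The induction on $|S|$ (even because nonsingular alternating matrices have even order), reducing a general overlap to the disjoint case plus a $2\times 2$ base case via transitivity of pivoting, is a standard and non-circular route, and the chaining step with the cancellation of $\det A[S_1]$ is arithmetically correct.

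Two points deserve more care than your sketch gives them. First, the pivot operation as defined in the paper is not literally the standard principal pivot transform: its off-diagonal blocks carry opposite signs (indeed, the displayed lower-left block $CB^{-1}$ should be read as $C^T B^{-1}$ for the dimensions and skew-symmetry to work, which is what you implicitly use). The clean input/output exchange characterization describes the standard transform, so proving $(A*S_1)*S_2 = A*(S_1\cup S_2)$ for disjoint $S_1,S_2$ in the paper's convention requires either tracking the resulting sign flips (the paper's $A*S$ is the standard transform conjugated by the $\pm 1$ diagonal matrix that is $-1$ on $S$; such conjugations commute with pivoting and multiply over disjoint sets, so the identity does hold) or simply observing that equality of all principal minors, which is insensitive to this conjugation, is all your chaining needs. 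Second, the $|S|=2$, $|X\cap S|=1$ base case is only asserted ``term by term''; it is the anchor of the whole induction and should be written out. A helpful simplification there: when $|X|$ is odd both $(A*S)[X]$ and $A[X\Delta S]$ are odd-order alternating matrices, so both determinants vanish and the identity is trivial; only the even case needs the explicit expansion along the row and column indexed by $p$, which does check out (e.g.\ for $S=\{1,2\}$, $X=\{2,3\}$ one gets $b^2/a^2$ on both sides with $a=A[1,2]$, $b=A[1,3]$). With these two items filled in, your argument is a complete, self-contained proof of the cited result.
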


Finally, let us note a formula on the Pfaffian of a sum of two skew-symmetric matrices:
\begin{lemma}[{\cite[Lemma 7.3.20]{Murota99}}]
  \label{lemma:sum-pf}
  For two skew-symmetric matrices $A_1$ and $A_2$ both indexed by $V$,
  we have
  \[ \Pf \, (A_1 + A_2) = \sum_{U \subseteq V} \sigma_U \Pf A_1[U] \cdot \Pf A_2[V \setminus U], \]
  where $\Pf A_i[\emptyset] = 1$ for $i = 1, 2$ and $\sigma_U \in \{1,-1\}$ is a sign of the permutation
  \begin{align*}
    \begin{pmatrix}
      1 & 2 & \cdots & |U| & |U| + 1 & \cdots & |V| - 1 & |V| \\
      u_1 & u_2 & \cdots & u_{|U|} & v_1 & \cdots & v_{|V \setminus U|-1} & u_{|V \setminus U|}
    \end{pmatrix},
  \end{align*}
  where $u_i$ and $v_i$ are the $i$-th largest elements of $U$ and $V \setminus U$, respectively.
\end{lemma}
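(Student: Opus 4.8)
The plan is to prove Lemma~\ref{lemma:sum-pf} by expanding $\Pf(A_1 + A_2)$ directly from the combinatorial definition and regrouping the terms according to which of the two matrices each matching edge is drawn from. First, observe that if $n := |V|$ is odd then both sides vanish: the left side because there is no perfect matching on $V$, and the right side because one of $|U|$, $|V \setminus U|$ is then odd, so $\Pf A_1[U] \cdot \Pf A_2[V \setminus U] = 0$; hence assume $n$ is even. We may extend the defining sum for $\Pf(A_1 + A_2)$ to range over all perfect matchings $M$ of the complete graph on $V$, since the added terms contain a zero entry. This gives
\[
  \Pf(A_1 + A_2) = \sum_{M} \sigma(M) \prod_{uv \in M}\bigl(A_1[u,v] + A_2[u,v]\bigr).
\]

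Next I would expand each product by choosing, for every edge $uv \in M$, whether to keep the $A_1$-term or the $A_2$-term. Each such choice splits $M$ into a disjoint pair of matchings $(M_1,M_2)$ with $V(M_1) \sqcup V(M_2) = V$, and conversely every pair of matchings covering complementary vertex sets arises this way. Reindexing the resulting triple sum by first fixing $U := V(M_1)$ and then ranging over all perfect matchings $M_1$ of $U$ and $M_2$ of $V \setminus U$ turns the right-hand side into a sum of terms $\sigma(M_1 \cup M_2) \prod_{uv \in M_1} A_1[u,v] \prod_{uv \in M_2} A_2[u,v]$ over all $U \subseteq V$ and all such $M_1,M_2$ (the cases where $|U|$ or $|V\setminus U|$ is odd simply contribute nothing, matching the convention $\Pf A_i[U]=0$). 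The key step is the sign identity
\[
  \sigma(M_1 \cup M_2) = \sigma_U \cdot \sigma_U(M_1) \cdot \sigma_{V \setminus U}(M_2),
\]
where $\sigma_U(M_1)$ and $\sigma_{V \setminus U}(M_2)$ denote the matching signs computed \emph{within} the ordered sets $U$ and $V \setminus U$, and $\sigma_U$ is the sign defined in the statement; computing the matching signs within $U$ and $V\setminus U$ is exactly what makes the sum over $M_1$ collapse to $\Pf A_1[U]$ and the sum over $M_2$ to $\Pf A_2[V\setminus U]$. Granting the identity, substituting it and factoring the double sum over $(M_1,M_2)$ into a product of two single sums immediately yields $\Pf(A_1+A_2) = \sum_{U \subseteq V} \sigma_U \Pf A_1[U] \cdot \Pf A_2[V\setminus U]$, with $\Pf A_i[\emptyset] = 1$ absorbing the boundary cases $U \in \{\emptyset, V\}$.

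The main, indeed only, obstacle is verifying the sign identity, a bookkeeping fact about permutations. I would argue it as follows: $\sigma(M)$ for a matching on an ordered ground set is the sign of the permutation carrying the sorted ground set to the sequence listing the matched pairs (smaller endpoint first), and permuting whole pairs among themselves is an even permutation. Hence $\sigma(M_1 \cup M_2)$ equals the sign of the permutation taking sorted $V$ to the sequence consisting of the pairs of $M_1$ followed by the pairs of $M_2$; this permutation factors as $\sigma_U$ (which rearranges sorted $V$ into sorted $U$ followed by sorted $V \setminus U$) composed with the permutation that acts as $\sigma_U(M_1)$ on the first $|U|$ positions and as $\sigma_{V\setminus U}(M_2)$ on the last $|V \setminus U|$ positions, and the sign of a permutation acting independently on two blocks is the product of the block signs. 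An alternative route is the characterization $\sigma(M) = (-1)^{c(M)}$, where $c(M)$ counts the pairs of edges of $M$ crossing along the ordered ground set: then $c(M_1 \cup M_2)$ separates modulo $2$ into crossings internal to $M_1$, crossings internal to $M_2$, and crossings between $M_1$ and $M_2$, the last having the same parity as $\sigma_U$. Everything outside this sign computation is a routine reorganization of finite sums.
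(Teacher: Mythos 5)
The paper does not actually prove this lemma; it is imported verbatim from Murota \cite[Lemma 7.3.20]{Murota99}, so there is no in-paper argument to compare against. Your proof is correct and follows the standard route: expand $\Pf(A_1+A_2)$ over perfect matchings of $V$, use multilinearity to split each matching $M$ into the $A_1$-part $M_1$ and the $A_2$-part $M_2$, regroup by $U = V(M_1)$, and verify the sign factorization $\sigma(M_1 \cup M_2) = \sigma_U \cdot \sigma_U(M_1) \cdot \sigma_{V\setminus U}(M_2)$, which holds exactly as you argue, since permuting whole pairs is even and a permutation acting block-wise on sorted $U$ and sorted $V\setminus U$ has sign equal to the product of its block signs composed with $\sigma_U$; your crossing-count alternative is equally sound. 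The only caveat is notational: the displayed permutation defining $\sigma_U$ in the statement contains a typo (the final entry $u_{|V\setminus U|}$ should be $v_{|V\setminus U|}$), and your reading of $\sigma_U$ as the permutation carrying sorted $V$ to sorted $U$ followed by sorted $V\setminus U$ is the intended one.
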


The following is a generalization of the Cauchy-Binet formula to skew-symmetric matrices.
The algebraic approach of Lov\'asz \cite{Lovasz80matroid} for matroid parity can be derived from this formula (see \cite{MatoyaO22}).

\begin{restatable}[Ishikawa-Wakayama formula \cite{IshikawaW95}]{lemma}{iwformula}
  \label{lemma:cauchy-binet-ss}
  For a skew-symmetric $2n \times 2n$-matrix $A$ and a $2k \times 2n$-matrix $B$ with $k \le n$, we have
  \begin{align*}
    \Pf B A B^T = \sum_{U \in \binom{[2n]}{2k}} \det B[\cdot, U] \Pf A[U].
  \end{align*}
\end{restatable}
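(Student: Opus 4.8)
The plan is to reduce the skew-symmetric Cauchy–Binet identity to the ordinary Cauchy–Binet formula for determinants by passing from Pfaffians to determinants in a controlled way, using Lemma~\ref{lemma:det-pf} to fabricate the auxiliary $2\times 2$ block structure. Concretely, I would first handle the case $k=n$ directly: here $B$ is $2n\times 2n$ and the claim is $\Pf\,BAB^T=\det B\cdot\Pf A$. This is the classical transformation law for the Pfaffian under congruence, which follows from $\det(BAB^T)=(\det B)^2\det A=(\det B)^2(\Pf A)^2$ together with a sign/continuity (or polynomial-identity) argument to pin down that the square root is $\det B\cdot\Pf A$ rather than its negative — for instance by specializing $B=I$, or by checking it on the generic skew-symmetric $A$ where $\Pf$ is an irreducible polynomial and tracking the sign on a single monomial. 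I would treat the coefficients of the entries of $A$ and $B$ as indeterminates so that this sign-fixing only has to be done once.

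For the general case $k<n$, the idea is to embed $B$ into a square matrix and apply the $k=n$ case. Given the $2k\times 2n$ matrix $B$, introduce a $(2n-2k)\times 2n$ matrix $B'$ of fresh indeterminates and stack $\hat B=\binom{B}{B'}$, which is $2n\times 2n$. Then $\hat B A\hat B^T$ is skew-symmetric $2n\times 2n$ with the top-left $2k\times 2k$ block equal to $BAB^T$. Now I would expand $\Pf(\hat B A\hat B^T)$ using the $k=n$ case, which gives $\det\hat B\cdot\Pf A$, and separately expand the same Pfaffian combinatorially — either via a Laplace-type (perfect-matching) expansion of the Pfaffian along the last $2n-2k$ rows/columns, or by applying Lemma~\ref{lemma:sum-pf} after writing $\hat B A \hat B^T$ as a sum — and then compare coefficients of a suitable monomial in the entries of $B'$. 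A cleaner variant: choose $B'$ so that $B'A(B')^T$ vanishes and the cross terms $BA(B')^T$ are simple, e.g. work formally and extract the coefficient of $\prod(\text{diagonal-ish entries of }B')$ from $\det\hat B$ (which by ordinary Cauchy–Binet, or by cofactor expansion of $\det\hat B$ along the rows of $B'$, produces exactly $\sum_{U\in\binom{[2n]}{2k}}\pm\det B[\cdot,U]\cdot(\text{complementary minor in }B')$) and match it with the corresponding coefficient extracted from the Pfaffian side. The bookkeeping of which complementary minor of $B'$ and which $\Pf A[U]$ get paired, and with what sign, is exactly what forces the stated index set $\binom{[2n]}{2k}$ and the sign conventions $\sigma(M)$ and $\Pf A[\emptyset]=1$.

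The main obstacle is the sign bookkeeping: one must verify that all the signs — from $\det\hat B$ expanded along the rows of $B'$, from the Pfaffian Laplace expansion, and from the definition of $\sigma(M)$ via the ordering $v_i<v_i'$ — combine so that \emph{no} extra sign survives in front of $\det B[\cdot,U]\,\Pf A[U]$ (note the formula as stated has a $+$ sign throughout, with the only signs hidden inside the Pfaffians themselves). I would pin this down by testing the identity on a minimal nontrivial instance (say $n=2$, $k=1$, with generic entries) to calibrate the global sign, and then argue that both sides are polynomials in the entries of $A$ and $B$ agreeing up to a sign that is constant, so the single calibration suffices. Everything else — the $k=n$ congruence identity and ordinary Cauchy–Binet — is standard, so the proof is essentially an exercise in organizing these two ingredients plus a careful sign audit.
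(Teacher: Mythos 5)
Your first step (the $k=n$ congruence law $\Pf BAB^T=\det B\cdot\Pf A$, fixed up to sign by a polynomial-identity argument) is fine, but the reduction of the general case to it has a genuine gap. After stacking $\hat B=\binom{B}{B'}$ with generic $B'$, both sides of $\Pf(\hat BA\hat B^T)=\det\hat B\cdot\Pf A$ are homogeneous of degree exactly $2n-2k$ in the entries of $B'$: every perfect matching must cover each of the last $2n-2k$ indices, so each row of $B'$ contributes degree exactly one to every monomial (equivalently, $\det\hat B$ is multilinear in its rows). The identity you want to prove has degree $0$ in $B'$, so no ``coefficient of a suitable monomial'' can produce it. Concretely, extracting the coefficient of $\prod_{j}B'[j,w_j]$ for a fixed set $W=\{w_1<\dots<w_{2n-2k}\}$ amounts to specializing $B'$ to the $0/1$ matrix whose rows select the columns $W$, and the comparison of coefficients yields the family of identities
\[
\Pf\begin{pmatrix} BAB^T & BA[\cdot,W]\\ -A[W,\cdot]B^T & A[W]\end{pmatrix}=\pm\,\det B[\cdot,[2n]\setminus W]\;\Pf A,
\]
which are true but are not the Ishikawa--Wakayama formula: neither $\Pf BAB^T$ nor $\sum_U\det B[\cdot,U]\Pf A[U]$ ever appears as a coefficient of a $B'$-monomial on either side. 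The ``cleaner variant'' with $B'A(B')^T=0$ does not help, since the obstruction is the degree count, not the bottom-right block. To pass from such bordered Pfaffians to the stated sum over $U\in\binom{[2n]}{2k}$ you still need a Laplace-type expansion of a Pfaffian along a block of rows/columns (equivalently, the Pfaffian sum formula of Lemma~\ref{lemma:sum-pf} combined with the determinant-to-Pfaffian conversion of Lemma~\ref{lemma:det-pf}), and proving and sign-auditing that expansion is the actual mathematical content here; in particular the assertion that the two sides of the target identity ``agree up to a constant sign'' is itself what has to be established, not something you can assume and calibrate on one example.

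For comparison, the paper avoids generic bordering altogether: it forms the skew-symmetric matrix $C=\begin{pmatrix} A & I & O\\ -I & O & B^T\\ O & -B & O\end{pmatrix}$ and evaluates $\Pf C$ in two ways --- once by the Pfaffian sum formula (Lemma~\ref{lemma:sum-pf}) together with Lemma~\ref{lemma:det-pf}, which yields $(-1)^n\sum_{U}\det B[\cdot,U]\,\Pf A[U]$, and once by the Pfaffian Schur complement (Lemma~\ref{lemma:schur}), which yields $(-1)^n\Pf BAB^T$. If you want to rescue your plan, you must add such an ingredient (a block Laplace/sum expansion for Pfaffians) explicitly; once you have it, the detour through $\hat B$ and the $k=n$ case is no longer needed.
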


\emph{Linear representation.}
A skew-symmetric matrix defines a delta-matroid as follows.
For a skew-symmetric matrix $A \in \F^{V \times V}$ over a field $\F$, define
$ \cF = \{ X \subseteq V \mid A[X] \text{ is nonsingular}\} $. 
Then, $(V, \cF)$, which is denoted by $\bD(A)$, is a delta-matroid. 
We say that a delta-matroid $D = (V, \cF)$ is \emph{representable} over $\F$ if there is a skew-symmetric matrix $A \in \F^{V \times V}$ and a twisting set $X \subseteq V$ such that $D = \bD(A) \Delta X$.
If $A[X]$ is nonsingular, or equivalently $\emptyset \in \cF(D)$, we say that $D$ is \emph{directly representable} over $\F$.
Note that a directly representable delta-matroid $D$ can be represented without a twisting set $X$, as $\bD(A) \Delta X = \bD(A * X)$.
We will say that $D$ is \emph{directly represented} by $A$ if $D = \bD(A)$.
A delta-matroid is called \emph{normal} if $\emptyset$ is feasible. 
Note that every linear delta-matroid is even, and that a linear delta-matroid is directly representable if and only if it is normal.
Linear delta-matroids are tractable~\cite{Bouchet95}.

In addition, we consider \emph{projected} linear delta-matroids. 
Let $D=(V,\cF)$ be a delta-matroid and $X \subseteq V$.
Then the projection $D|X$ is defined as $D|X=(V \setminus X, \cF|X)$
where $\cF|X = \{F \setminus X \mid F \in \cF\}$.
Then $D|X$ is a delta-matroid, although it is in general not even,
hence not linear. When $D$ is linear, then we refer to $D'=D|X$
as a \emph{projected linear delta-matroid}.
When $|X|=1$ we refer to this as an \emph{elementary projection}, 
following Geelen et al.~\cite{GeelenIM03}. 

As noted above, when $\F$ is not of characteristic 2, we also assume that $A$ has
a zero diagonal (which of course follows from the definition over all
fields not of characteristic 2). However, if $\F$ is a field of characteristic 2, 
then linear delta-matroids over $\F$ with a non-zero diagonal correspond
to a projected linear delta-matroids over $\F$; see Geelen et al.~\cite{GeelenIM03}.

For a matroid $M = (V, \cI)$ with the basis family $\mathcal B$, $D = (V, \mathcal B)$ is a delta-matroid.
If $M$ is represented by $A$, $D$ can be represented as follows.
Fix a basis $B \in \mathcal{B}$.
We may assume w.l.o.g.\ that $A[\cdot, B] = I$.
Define
\begin{align*}
  A' = \kbordermatrix{& B & V \setminus B \\ B & O & A[\cdot, V \setminus B] \\ V \setminus B & -A^T[V \setminus B, \cdot] & O}.
\end{align*}
Observe that for every $F \subseteq V$, $A'[F]$ is nonsingular if and only if $A[F \cap B, F \setminus B]$ is nonsingular.
Since $A[\cdot, B] = I$, this is equivalent to $A[\cdot, (B \setminus F) \cup (F \setminus B)]$ being nonsingular, and thus $D = \bD(A') \Delta B$.

Conversely, let $D=(V,\cF)$ be a delta-matroid. Then the set of
maximum-cardinality feasible sets in $D$ forms the set of bases of a
matroid $M=(V,\cI)$. Furthermore, if $D=\bD(A)$ is directly represented,
then $A$ (as a column space) is also a representation of the matroid $M$.
This is because if $B$ is a column basis for a skew-symmetric matrix $A$, then $A[B]$ is non-singular (see e.g., \cite{RabinV89} or \cite[Proposition 7.3.6]{Murota99}).

To avoid intricate representation issues, we assume that every
linear representation is given over some finite field.
We note that a representation over the rationals can be efficiently transformed
into an equivalent representation over a finite field.

\emph{Approximate linear representation.}
For a delta-matroid $D = (V, \cF)$, we say that a delta-matroid $D' = (V, \cF')$ is an \emph{$\varepsilon$-approximate representation} of $D$ if
$\cF' \subseteq \cF$ 
and for every $F \in \cF$, the probability that $F \in \cF'$ is at least $1 - \varepsilon$.
For constructing an $\varepsilon$-approximate linear representation,
the Schwartz-Zippel lemma \cite{Schwartz80,Zippel79} (also referred to as the DeMillo-Lipton-Schwartz-Zippel lemma) comes in handy.
It states that a polynomial $P(X)$ of total degree at most $d$ over a field $\F$ becomes nonzero with probability at least $1 - d / |\F|$ when evaluated at uniformly chosen elements from $\F$, unless $P(X)$ is identically zero.

Let $G=(V,E)$ be an undirected graph and let $\cF \subseteq 2^V$
contain all sets $F \subseteq V$ such that $G[F]$ has a perfect matching. 
Then $D(G)=(V,\cF)$ is a delta-matroid referred to as the
\emph{matching delta-matroid} of $G$. 
The Tutte matrix gives rise to an approximate linear representation.
Note that setting $\varepsilon = O(2^{-|V|})$ (or lower) gives a matrix of polynomial 
size which with high probability is a correct representation of $D(G)$.
However, this will inflate the time needed for field operations over $\F$
by at least $\Omega(n)$, so for efficiency reasons we work with
$\varepsilon$-approximate representations where $\varepsilon$ is a parameter.

\begin{lemma} \label{lemma:tutte-representation}
  Let $G = (V, E)$ be a graph on $n$ vertices and $\F$ be a field with at least $n \cdot \lceil 1 / \varepsilon \rceil$ elements.
  We can construct a $\varepsilon$-approximate linear representation of the matching delta-matroid of $G$ over $\F$.
\end{lemma}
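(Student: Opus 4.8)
The plan is to take the symbolic Tutte matrix of $G$, substitute its indeterminates with independent uniform elements of $\F$, and verify both properties of an $\varepsilon$-approximate representation using the Pfaffian expansion and the Schwartz--Zippel lemma. Concretely, fix an arbitrary linear order on $V$; for each edge $uv \in E$ with $u < v$ introduce a fresh indeterminate $x_{uv}$, and let $T$ be the skew-symmetric matrix with $T[u,v] = x_{uv}$, $T[v,u] = -x_{uv}$, and all other entries (including the diagonal) equal to $0$. This is a valid representation matrix over any field, including in characteristic $2$, since it has zero diagonal; moreover its support graph is exactly $G$, so for every $F \subseteq V$ the Pfaffian expansion reads $\Pf T[F] = \sum_M \sigma(M) \prod_{uv \in M} x_{uv}$, the sum running over perfect matchings $M$ of $G[F]$.

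The key (standard) observation is that this polynomial vanishes identically if and only if $G[F]$ has no perfect matching, i.e.\ if and only if $F \notin \cF$. Indeed, distinct edges carry distinct indeterminates, so distinct matchings $M$ contribute distinct multilinear monomials and no cancellation can occur; since every coefficient $\sigma(M) \in \{\pm 1\}$ is nonzero, $\Pf T[F]$ is the zero polynomial precisely when there is no matching to sum over. Equivalently, $\det T[F] = (\Pf T[F])^2$ is not identically zero iff $F \in \cF$.

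Now let $A \in \F^{V \times V}$ be obtained from $T$ by substituting for each $x_{uv}$ an element of $\F$ chosen independently and uniformly at random, and set $D' = \bD(A) = (V, \cF')$. This is a linear delta-matroid by construction, and $A$ is built by sampling at most $\binom{n}{2}$ field elements. For the containment $\cF' \subseteq \cF$: if $A[F]$ is nonsingular then the polynomial $\det T[F]$ takes a nonzero value under the substitution, hence is not identically zero, hence $G[F]$ has a perfect matching and $F \in \cF$. For the probability bound, fix $F \in \cF$; then $\det T[F]$ is a nonzero polynomial of total degree at most $|F| \le n$ in the $x_{uv}$, so by the Schwartz--Zippel lemma $\Pr[\det A[F] = 0] \le n / |\F| \le n / (n \lceil 1/\varepsilon \rceil) \le \varepsilon$, giving $\Pr[F \in \cF'] \ge 1 - \varepsilon$. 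Hence $D'$ is an $\varepsilon$-approximate linear representation of $D(G)$. There is no real obstacle here: the only content is the Pfaffian/matching correspondence (already implicit in the preliminaries via the Tutte matrix), and the one point to be careful about is that the substitution is performed once, globally, so the failure probability is bounded per set $F$ rather than by a union bound over all $2^n$ sets, which is exactly what the definition of $\varepsilon$-approximate representation asks for and what the field-size budget $|\F| \ge n\lceil 1/\varepsilon\rceil$ supplies.
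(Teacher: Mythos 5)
Your proposal is correct and follows essentially the same route as the paper: substitute uniformly random field elements into the Tutte matrix of $G$, note that errors are one-sided (a nonsingular $A[F]$ certifies a perfect matching of $G[F]$), and bound the failure probability per feasible set via the Schwartz--Zippel lemma using the field-size assumption. Your extra detail on the no-cancellation argument in the Pfaffian expansion is a fine elaboration of what the paper leaves implicit, but it does not constitute a different approach.
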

\begin{proof}
  Let $A$ be the Tutte matrix of $G$, where every edge variable is substituted with a uniformly randomly chosen element from $\F$.
  Fix $S \subseteq V$.
  If $G[S]$ has no perfect matching, then $\Pf A[S] = 0$.
  Otherwise, $\Pf A[S]$ is a nonzero polynomial of degree at most $n$.
  Thus, by the Schwartz-Zippel lemma, $A[S]$ is nonzero with probability at least $1 - \varepsilon$.
\end{proof}




\emph{Operations in matrix product time.}
Determinant, rank, basis, inverse can be found in $O(n^{\omega})$ time.
Given an $n \times 2n$-matrix, its row echelon form can be computed in $O(n^{\omega})$.
We can find a lexicographically smallest column basis in $O(n^{\omega})$ time.
See~\cite{von2013modern}.

\section{Contraction representation of linear delta-matroids} \label{sec:representation}

\newcommand{\repname}{contraction\xspace}

In this section, we introduce a novel linear representation for delta-matroids, called \emph{\repname{} representation}.
For the sake of clarity, we will say that the representation of a delta-matroid as $D = \bD(A) \Delta S$ is a \emph{twist representation}.
As we will see in Section~\ref{sec:enum-polys}, the \repname{} representation is useful in the design of more efficient algorithms for linear delta-matroids. 
We also give further results, supported by the new representation.
First, we show that the union and delta-sum of linear delta-matroids is linear (Sections~\ref{sec:delta-union} and~\ref{sec:delta-sum}).
Previously, this was only known to define delta-matroids~\cite{Bouchet89dam,BouchetS98deltasum}.
Next, we use this to provide a compact representation of projected linear delta-matroids
(Section~\ref{sec:projection}).
All of these additional results will be useful in our algorithms.

\subsection{Contraction representations} \label{sec:contraction}
For a delta-matroid $D = (V, \cF)$, a \emph{\repname{} representation} of $D$ is a pair $(A,T)$ where $A$ is a skew-symmetric matrix over a field $\F$ whose rows and columns are labelled by $V \cup T$, such that $D = \bD(A) / T$, i.e.,
for every $F \subseteq V$, $F$ is feasible in $D$ if and only if $F \cup T$ is feasible in $\bD(A)$.
This is closely related to \emph{strong maps} of delta-matroids. For two delta-matroids $D$ and $D^\circ$,
$D^\circ$ is a \emph{strong map} of $D$ if there exists a delta-matroid $D^+=(V \cup Z,\cF)$
such that $D=D^+ \setminus Z$ and $D^\circ=D^+/Z$ (see Geelen et al.~\cite{GeelenIM03}).
Hence, if $D=(V,\cF)=\bD(A)/T$ is a \repname representation of a delta-matroid $D$, 
then $D$ is a strong map of the directly representable delta-matroid $\bD(A[V])$. 
We show that the \repname and twist representations are equivalent.

\begin{lemma} \label{lemma:representation}
  Given a delta-matroid $D$ in twist representation, 
  we can construct a contraction representation $D=\bD(A)/T$ of $D$
  deterministically in $O(n^2)$ time. 
\end{lemma}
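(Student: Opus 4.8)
We are given a twist representation $D = \bD(A) \Delta S$ with $A$ skew-symmetric over $\F$ with rows and columns indexed by $V$, $|V| = n$, and a twisting set $S \subseteq V$. We want a skew-symmetric matrix $A'$ over a possibly larger index set $V \cup T$ such that $D = \bD(A')/T$, i.e., for every $F \subseteq V$, $A'[F \cup T]$ is non-singular iff $A[F \Delta S]$ is non-singular.

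The plan is to "simulate" the twist by $S$ via a contraction by a fresh copy of $S$. First I would introduce a set $T = \{ t_v : v \in S \}$ of new indices, one for each element of $S$, so that $|V \cup T| = n + |S|$. The natural candidate is to build $A'$ in block form indexed by $(V \setminus S, S, T)$, keeping the $A$-part intact on $V$ and adding an identity-like coupling between $S$ and its copy $T$: roughly
\[
  A' = \kbordermatrix{ & V \setminus S & S & T \\
    V \setminus S & A[V\setminus S] & A[V\setminus S, S] & O \\
    S & A[S, V\setminus S] & A[S] & I \\
    T & O & -I & O }.
\]
This is skew-symmetric by construction (the diagonal $T$-block is zero, which is fine even in characteristic 2). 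The key computation is then: for $F \subseteq V$, write $F = F_0 \cup F_1$ with $F_0 = F \setminus S$, $F_1 = F \cap S$; then $A'[F \cup T] = A'[F_0 \cup F_1 \cup T]$ has the $T$-rows and $T$-columns present in full, and the $I$ / $-I$ blocks coupling $T$ to all of $S$ (not just $F_1$). Expanding the determinant along the $T$-block — or equivalently performing a pivot on the $S \leftrightarrow T$ identity block, which by Lemma~\ref{lemma:tucker} implements exactly a twist by $S$ on the $V$-part — should yield that $A'[F \cup T]$ is non-singular iff $A[(F_0 \cup F_1) \Delta S] = A[F_0 \cup (S \setminus F_1)] = A[F \Delta S]$ is non-singular, as desired.

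The main obstacle, and the step I would be most careful about, is getting the block structure exactly right so that the pivot on the $S$–$T$ identity block has precisely the effect of twisting by $S$ on the surviving $V$-submatrix, rather than by $F \cap S$ or by some other set. Concretely, one must check that after pivoting $A'$ on the index set $S \cup T$ (which is non-singular because of the identity coupling), the resulting matrix restricted to $V$ equals $A * S$ up to sign, so that by Lemma~\ref{lemma:tucker} non-singularity of $A'[F \cup T]$ for $F \subseteq V$ matches non-singularity of $(A*S)[F] = $ (up to the nonzero factor $\det A[S]$, when $S$ is non-singular in $A$) $\det A[F \Delta S]$; and separately handle the case where $A[S]$ is singular, where one instead argues directly via a cofactor/Schur-complement expansion along $T$. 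Everything here is a finite sequence of block-matrix manipulations, so the construction is deterministic, uses no randomness or field extension, and runs in $O(n^2)$ time (just copying $A$ and writing down $O(|S|) \le O(n)$ additional $\pm 1$ entries). I would then remark that the reverse direction — contraction representation to twist representation — follows symmetrically by a pivot, completing the proof that the two representations are equivalent.
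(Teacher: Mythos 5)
There is a genuine gap. The matrix $A'$ you write down is exactly the paper's intermediate matrix, but $\bD(A')/T$ is \emph{not} a contraction representation of $D$, and your central claim that $A'[F \cup T]$ is non-singular iff $A[F \Delta S]$ is non-singular is false. In $A'[F \cup T]$ only the rows/columns of $F \cap S$ from $S$ are present, so for any $w \in S \setminus F$ the column indexed by $t_w$ is identically zero (its only potential non-zero entry lies in the row of $w$, which is absent); hence $A'[F \cup T]$ is singular whenever $S \not\subseteq F$. Concretely, take $A[S]$ non-singular and $F = \emptyset$: then $\emptyset \in \cF(D)$ but $A'[T] = O$. Your parenthetical remark that the $I/-I$ blocks couple $T$ "to all of $S$ (not just $F_1$)" is exactly where this goes wrong.

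The pivot you mention only as a verification device is in fact the construction itself. Since $A'[S \cup T] = \bigl(\begin{smallmatrix} A[S] & I \\ -I & O \end{smallmatrix}\bigr)$ is always non-singular (its support graph has a unique perfect matching), regardless of whether $A[S]$ is singular, one forms $A^* = A' * (S \cup T)$ and outputs $\bD(A^*)/T$. Lemma~\ref{lemma:tucker} then gives immediately that $A^*[F \cup T]$ is non-singular iff $A'[(F \cup T) \Delta (S \cup T)] = A'[F \Delta S] = A[F \Delta S]$ is, with no case distinction on the singularity of $A[S]$ and no detour through $A * S$ (which is undefined when $A[S]$ is singular, so your proposed comparison with $(A*S)[F]$ cannot be carried out in general). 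Finally, the $O(n^2)$ bound needs one more observation you omit: a generic pivot costs $O(n^\omega)$, but here $(A'[S \cup T])^{-1} = \bigl(\begin{smallmatrix} O & -I \\ I & A[S] \end{smallmatrix}\bigr)$ in closed form, so $A^*$ is just a block rearrangement of $A$ with identity blocks and can be written down directly in $O(n^2)$ time.
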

\begin{proof}
  Let $D=(V,\cF)$ be given as $D=\bD(A) \Delta S$, $S \subseteq V$.
  For a set $T$ of size $|S|$, define a skew-matrix $A'$ over $V \cup T$ by
  \begin{align*}
    A' = \kbordermatrix{
    & V \setminus S & S & T \\
    V \setminus S & A[V \setminus S] & A[V \setminus S, S] & O \\
    S  & A[S, V \setminus S] & A[S] & I \\
    T & O & -I & O 
  },
  \end{align*}
  where $I$ is an identity matrix. Note that the support graph of $A'[S \cup T]$ 
  has a unique perfect matching (namely, every vertex in $T$ has degree one),
  thus $\Pf A'[S \cup T] = \pm 1$ and $A'[S \cup T]$ is non-singular. 
  Thus, we can construct the matrix $A^* = A' * (S \cup T)$.
  Note that
  \begin{align*}
    (A[S \cup T])^{-1} = \begin{pmatrix}
      A[S] & I \\ -I & O
    \end{pmatrix}^{-1} =
    \begin{pmatrix}
      O & -I \\ I & A[S] 
    \end{pmatrix},
  \end{align*}
  and consequently, the result of pivoting is
  \begin{align*}
    A^* = \kbordermatrix{
    & V \setminus S & S & T \\
    V \setminus S & A[V \setminus S] & O & A[V \setminus S, S] \\
    S  & O & A[S] & I \\
    T & A[S, V \setminus S] & -I & O 
    }.
  \end{align*}
  Clearly, $A^*$ can be constructed in $O(n^2)$ time.
  Now by Lemma~\ref{lemma:tucker}, for any $F \subseteq V$, $A^*[F \cup T]$
  is non-singular if and only if $A'[(F \cup T) \Delta (S \cup T)]=A'[F \Delta S]$ is.
  Since $F \Delta S \subseteq V$ and $A'[V]=A[V]$, 
  this is equivalent to $F \in \cF(D)$, thereby showing that $\bD(A^*)/T$ is a contraction representation of $D$. 
\end{proof}

\begin{lemma} \label{lemma:contra-to-twist}
  Given a \repname{} representation $D = \bD(A) / T$, we can find a
  twist representation of $D$ deterministically using $\Oh(n^{\omega})$ field operations.
\end{lemma}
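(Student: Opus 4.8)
The plan is to invert the transformation from Lemma~\ref{lemma:representation}. There, a twist representation $\bD(A)\Delta S$ was turned into a contraction representation by adjoining a fresh set $T$ with $A'[S\cup T]$ carrying a unique perfect matching (the $I$/$-I$ blocks) and then pivoting on $S\cup T$. So given $D=\bD(A)/T$ with $A$ indexed by $V\cup T$, I want to pivot $A$ on a suitable set to bring it into a shape where $T$ can be dropped and a twisting set on $V$ read off.

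First I would find, using $\Oh(n^{\omega})$ field operations, a set $S\subseteq V$ such that $A[S\cup T]$ is non-singular. This is possible iff $T$ is feasible in $\bD(A)$, i.e.\ iff $\emptyset\in\cF(D)$; but in general $\emptyset$ need not be feasible, so more care is needed. The clean way is: since $\bD(A)$ is a delta-matroid, $A[U]$ is non-singular for some $U\supseteq T$ with $U\setminus T$ of minimum size — equivalently, extend $T$ to a column basis of $A[\cdot,T]$-augmented... more precisely, compute a maximal $S\subseteq V$ with $A[S\cup T]$ non-singular by a greedy/rank computation on the skew-symmetric matrix $A$ (a set $S$ with $A[S\cup T]$ non-singular exists, e.g.\ take $S$ so that $A[S\cup T]$ has full rank among all $A[S'\cup T]$; such $S$ can be found in $\Oh(n^\omega)$ by row-reducing). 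Then form $A^*=A*(S\cup T)$. By Lemma~\ref{lemma:tucker}, for every $X\subseteq V\cup T$, $A^*[X]$ is non-singular iff $A[X\Delta(S\cup T)]$ is. Applying this with $X=F\cup T$ for $F\subseteq V$: $A^*[F\cup T]$ is non-singular iff $A[(F\Delta S)]$ is, since $(F\cup T)\Delta(S\cup T)=F\Delta S\subseteq V$. Hence $\bD(A^*)/T$ directly represents the same $D$, and moreover $A^*[T]=(A*(S\cup T))[T]$: because $S\cup T$ was pivoted, $(A^*)[S\cup T]=(A[S\cup T])^{-1}$, so in particular $A^*[T]$ equals the $T\times T$ block of that inverse.

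The key structural step is to argue that after pivoting on $S\cup T$, the block $A^*[V\setminus S,\,T]$ and $A^*[S,T]$ together let us ``absorb'' $T$. Concretely, I expect that one more pivot (on $T$, or on a set inside $S\cup T$ chosen so that the $T$-block becomes the identity-like block of the Lemma~\ref{lemma:representation} construction) puts $A^*$ into exactly the form
\begin{align*}
  \kbordermatrix{& V\setminus S & S & T\\ V\setminus S & \widetilde A[V\setminus S] & O & \widetilde A[V\setminus S,S]\\ S & O & \widetilde A[S] & I\\ T & \widetilde A[S,V\setminus S] & -I & O},
\end{align*}
so that the $V\times V$ submatrix $\widetilde A:=A^*[V]$ (read off in $\Oh(n^2)$ time once in this shape) satisfies $\bD(\widetilde A)\Delta S=D$, i.e.\ $(\widetilde A,S)$ is the desired twist representation. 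To see this is valid: for $F\subseteq V$, $\bD(A^*)/T$ feasibility of $F$ means $A^*[F\cup T]$ non-singular; expanding along the unique-perfect-matching structure on $T$ (every $T$-vertex has degree one into $S$ via the $\pm I$ blocks) gives $A^*[F\cup T]$ non-singular iff $\widetilde A[(F\setminus S)\cup(S\setminus F)]=\widetilde A[F\Delta S]$ non-singular, which is $F\in\cF(\bD(\widetilde A)\Delta S)$. All of this is deterministic; the only superlinear cost is the rank/pivot computations, each $\Oh(n^\omega)$.

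The main obstacle I anticipate is the bookkeeping to guarantee the matrix really lands in the displayed normal form — in particular, choosing $S$ so that the pivoted $T$-block is exactly $\pm I$ with $A^*[T]=O$, rather than some arbitrary invertible block. If the greedy choice of $S$ does not immediately yield $A^*[T]=O$, one fixes it with a further pivot supported on the $S\cup T$ coordinates (using that $A^*[S\cup T]=(A[S\cup T])^{-1}$ is under our control), or alternatively argues directly, without normalizing, that $\bD(A^*)/T$ equals $\bD(A^*[V])\Delta S$ plus possibly a preliminary pivot on $V$ — which is fine since $\bD(A^*[V])\Delta S = \bD(A^*[V]*S')\Delta(S\Delta S')$ for any $S'$ with $A^*[S']$ non-singular. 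Either way the argument is a finite sequence of pivots and block manipulations, each costing $\Oh(n^\omega)$, and correctness at every step is certified by Lemma~\ref{lemma:tucker}.
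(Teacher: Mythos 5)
Your construction is the right one---find $S\subseteq V$ with $A[S\cup T]$ non-singular, pivot to get $A^*=A*(S\cup T)$, and output $(A^*[V],S)$---and this coincides with the paper's construction. (Minor point: such an $S$ exists because $\cF(D)\neq\emptyset$ by the definition of a delta-matroid, so some $A[F\cup T]$ is non-singular; it has nothing to do with whether $\emptyset\in\cF(D)$, and your opening remark on that point is a red herring.) But your correctness argument has a genuine gap. You apply Lemma~\ref{lemma:tucker} with $X=F\cup T$ and conclude that ``$\bD(A^*)/T$ directly represents the same $D$''; in fact that computation shows $\bD(A^*)/T=\bD(A[V])\Delta S$, which is not $D$ in general. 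To repair this you then posit that $A^*$ lands in the block normal form of Lemma~\ref{lemma:representation} (zero $(V\setminus S)\times S$ block, $A^*[S,T]=\pm I$, $A^*[T]=O$) and expand along the resulting unique-matching structure on $T$. That normal form is simply not available: $A^*[S\cup T]=(A[S\cup T])^{-1}$ is an arbitrary non-singular skew-symmetric block, and $A^*[V\setminus S,S]$ has no reason to vanish---the identity and zero blocks in Lemma~\ref{lemma:representation} come from the special matrix constructed there, not from pivoting a general $A$. You acknowledge this obstacle yourself, but neither proposed fix is justified: a further pivot inside $S\cup T$ changes which delta-matroid the contraction/twist bookkeeping refers to and is never verified, and the alternative claim ``$\bD(A^*)/T=\bD(A^*[V])\Delta S$'' is again not the statement you need (nor is it true in general).

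The gap closes with one line, which is essentially the paper's entire proof: apply Lemma~\ref{lemma:tucker} in the other direction, with $X=F\Delta S\subseteq V$ for $F\subseteq V$. Then $A^*[F\Delta S]$ is non-singular if and only if $A[(F\Delta S)\Delta(S\cup T)]=A[F\cup T]$ is, i.e.\ if and only if $F\in\cF(D)$. Since $F\Delta S\subseteq V$, only the submatrix $A^*[V]$ is ever consulted, so $D=\bD(A^*[V])\Delta S$ directly---no normal form, no identity blocks, no additional pivots. The cost is one basis computation and one pivot, both within $\Oh(n^{\omega})$ field operations, and the procedure is deterministic as required.
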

\begin{proof}
 Let $S \subseteq V$ be a set such that $A[S \cup T]$ is non-singular;
 such a set exists since $\cF(D) \neq \emptyset$ by assumption,
 and can be found efficiently over $A$. Then $A' = A * (S \cup T)$ is well-defined.
 Let $A^* = A' \setminus T$.
 Observe that $D = \bD(A) \Delta S$, that is, for every $F \subseteq V$, $A^*[F \Delta S] = A'[F \Delta S]$ is nonsingular if and only if $A[(F \Delta S) \Delta (S \cup T)] = A[F \cup T]$ is nonsingular by Lemma~\ref{lemma:tucker}.
 All operations above can be performed in matrix multiplication time.
\end{proof}

We also observe that the contracted set $T$ in a representation of a
delta-matroid $D=(V,\cF)$ never needs to be larger than $|V|$.

\begin{lemma} \label{lemma:contraction-reduce}
  Given a \repname representation $D=\bD(A)/T$ of a delta-matroid $D=(V,\cF)$, 
  in $O(n^\omega)$ field operations, where $n=|V|+|T|$,
  we can find a \repname representation $D=\bD(A')/T'$ where $|T'| \leq |V|$. 
\end{lemma}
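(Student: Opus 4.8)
The plan is simply to make a round trip through the twist representation, using the two preceding lemmas as black boxes. Given $D=\bD(A)/T$ with $|V\cup T|=n$, I would first invoke Lemma~\ref{lemma:contra-to-twist} to obtain a twist representation $D=\bD(B)\Delta S$; by the construction in that lemma, $B$ is a skew-symmetric matrix whose rows and columns are indexed by $V$ only (it is $A*(S\cup T)$ with the rows and columns indexed by $T$ deleted, for a suitable $S\subseteq V$), and $S\subseteq V$. This step costs $O(n^\omega)$ field operations, dominated by the single pivot on the $n\times n$ matrix $A$.

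Next I would feed $D=\bD(B)\Delta S$ back into Lemma~\ref{lemma:representation}, which turns any twist representation into a contraction representation $D=\bD(A')/T'$ in which the new contracted set $T'$ has size exactly $|S|$; this costs $O(|V|^2)$ additional time. Since $S\subseteq V$, we obtain $|T'|=|S|\le|V|$, which is the desired conclusion, and the total cost is $O(n^\omega)$ field operations. (If $|T|\le|V|$ already there is nothing to do, but the two-step procedure produces a representation meeting the bound in any case.)

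There is no real obstacle here beyond the size and cost bookkeeping. The point is that a twist representation of a delta-matroid on ground set $V$ is forced to sit on a $|V|\times|V|$ matrix -- there is no room for extra ground-set elements -- so passing through it automatically caps the size of the contracted set, and the running time is governed entirely by the first pivot. One could instead argue directly, e.g.\ by pivoting $A$ onto a maximum non-singular principal submatrix contained in $A[T]$ and then deleting the part of $T$ that was pivoted away (reducing $|T|$ to at most $\rank A[V,\cdot]$), but the reduction above is cleaner and reuses exactly the machinery already developed.
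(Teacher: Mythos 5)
Your argument is correct, but it takes a different route from the paper. The paper proves the lemma directly on the given contraction representation: it first argues, via the symmetric exchange axiom applied inside $\bD(A)$ (exchanging against the feasible set $\emptyset$), that there is a feasible set $F_0\subseteq T$ of size at least $|T|-|V|$, hence $\rank A[T]\geq |T|-|V|$; it then computes a column basis $B$ of $A[T]$ (so $A[B]$ is non-singular), pivots, and sets $A'=(A*B)\setminus B$, $T'=T\setminus B$, so that $|T'|=|T|-\rank A[T]\leq |V|$ by Lemma~\ref{lemma:tucker}. Your proof instead makes the round trip contraction $\to$ twist $\to$ contraction through Lemmas~\ref{lemma:contra-to-twist} and~\ref{lemma:representation}, exploiting that a twist representation is forced onto a $|V|\times|V|$ matrix with twisting set $S\subseteq V$, so the re-imported contracted set has size $|S|\leq|V|$; the cost accounting ($O(n^\omega)$ for the pivot in Lemma~\ref{lemma:contra-to-twist}, $O(|V|^2)$ for Lemma~\ref{lemma:representation}) is right, and both routes are deterministic. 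What each buys: your version is shorter and purely modular, needing no new combinatorial argument; the paper's version stays inside the original matrix (the output is a principal submatrix of a single pivot of $A$, with $T'\subseteq T$ and no fresh gadget elements or identity blocks), and isolates the structural fact that $T$ always contains a feasible set of size at least $|T|-|V|$. Your closing parenthetical --- pivot on a non-singular principal submatrix of $A[T]$ and delete it --- is in fact essentially the paper's proof, with the rank bound $\rank A[T]\geq |T|-|V|$ obtainable either by your row-rank bookkeeping or by the paper's exchange-axiom argument.
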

\begin{proof}
 We claim that if $|T| > |V|$ then there is a feasible set $F_0 \subseteq T$.
 Let $F$ be any feasible set in $D$ and let $F'=F \cup T$.
 Let $D'=\bD(A)$. Then $D'$ is a linear delta-matroid and $F'$ 
 is feasible in $D'$. Furthermore $\emptyset$ is feasible in $D'$
 since $D'$ is directly represented. Assume $|T| > |V|$ as otherwise
 there is nothing to do. Consider the following process:
 Let $v \in F' \cap V$. By the symmetric exchange axiom, there is
 an element $v' \in F' \Delta \emptyset=F'$ such that $F' \Delta \{v,v'\}$
 is feasible. Update $F'\gets F' \Delta \{v,v'\}$. Since $F'$ strictly
 shrinks at every step, this process terminates with a feasible set
 $F_0 \subseteq T$, of cardinality at least $|T|-|V|$.
 Now, given that such a set exists, we can compute
 a column basis $B$ of $A[T]$, which will be feasible in $A$,
 and construct $A'=(A*B) \setminus B$, $T' = T \setminus B$.
 Clearly, $D=\bD(A')/T'$ is also a \repname representation.
 Furthermore, both steps can be performed in $O(n^\omega)$
 field operations. 
\end{proof}

\paragraph*{Example for matroids.} For illustration purposes, let us provide 
an explicit representation of the bases of a linear matroid
as the feasible sets of a delta-matroid in contraction representation.

Let $M = (V, \cI)$ be a matroid represented by a matrix $A \in \F^{k \times V}$,
and let $\cB$ be the set of bases of $M$. 
We can give a contraction representation for the delta-matroid $D = (V, \cB)$ as follows:
\begin{align*}
D=\bD(A')/T \text{ where } A'=  \kbordermatrix{& T & V \\ T & O & A \\ V & -A^T & O}
\end{align*}
Note that $A'[\cdot, B]$ is non-singular if and only if $A'[T \cup B]$ is non-singular by Lemma~\ref{lemma:det-pf}.
For a basis $B \in \cB$, applying the proof of Lemma~\ref{lemma:contra-to-twist} gives a twist representation:
\begin{align*}
  \kbordermatrix{& B & V \setminus B \\ B & O & (A[B])^{-1} A[\cdot, V \setminus B] \\ V \setminus B & -((A[B])^{-1} A[\cdot, V \setminus B])^T & O}
\end{align*}
We recover the known twist-representation (see Section~\ref{sec:prelim}) when $A[B] = I$.

In addition, $\bD(A')|T$ is a projected linear representation of the independent set delta-matroid of $M$.

\subsection{Delta-matroid union} \label{sec:delta-union}

We next consider an immediate way to combine two delta-matroids into a
new delta-matroid, the \emph{delta-matroid union} (surveyed in the introduction).
Let $D_1=(V_1,\cF_1)$ and $D_2=(V_2,\cF_2)$ be two
delta-matroids on not necessarily disjoint ground sets and let $V=V_1 \cup V_2$.
Define $\cF = \cF_1 \uplus \cF_2 := \{F_1 \cup F_2 \mid F_1 \in \cF_1, F_2 \in \cF_2, F_1 \cap F_2 = \emptyset\}$
as the collection of sets that can be produced as disjoint unions from
$\cF_1$ and $\cF_2$, and write $D=(V,\cF)=D_1 \cup D_2$.
Then Bouchet~\cite{Bouchet89dam} showed that $D=(V,\cF)$ is a delta-matroid.
We show that furthermore, if $D_1$ and $D_2$ are linear or projected linear
then so is $D$, and an $\varepsilon$-approximate representation can be
constructed in polynomial time.
Note that we may as well assume that $V=V_1=V_2$, by adding the
missing elements to the respective delta-matroid as loops.

\begin{lemma} \label{lemma:union}
  Let $D_1=(V,\cF_1)$ and $D_2=(V,\cF_2)$ be linear or projected
  linear delta-matroids defined over a common field $\F$
  and given in contraction representation.
  Then the delta-matroid union $D=D_1 \cup D_2$
  is a linear (respectively projected linear) delta-matroid,
  and an $\varepsilon$-approximate representation of $D$
  can be constructed in $O(n^2)$ field operations
  over an extension field of $\F$ with at least
  $n \cdot \lceil 1/\varepsilon \rceil$ elements.  
\end{lemma}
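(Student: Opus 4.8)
The plan is to build the representation directly from the contraction representations of $D_1$ and $D_2$ by taking a disjoint block-diagonal matrix together with a gadget that forces, for each $v \in V$, exactly one of the two copies of $v$ to be ``used'', thereby realizing the disjoint-union constraint $F_1 \cap F_2 = \emptyset$. Concretely, suppose $D_i = \bD(A_i)/T_i$ where $A_i$ is skew-symmetric over $V_i \cup T_i$, with $V_1, V_2$ being disjoint copies of $V$ (write $v^{(i)}$ for the copy of $v \in V$ in $V_i$). First I would form the block-diagonal skew-symmetric matrix $A_0 = \mathrm{diag}(A_1, A_2)$ over $(V_1 \cup T_1) \cup (V_2 \cup T_2)$. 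The feasible sets of $\bD(A_0)/(T_1 \cup T_2)$ are exactly the sets $F_1 \sqcup F_2$ with $F_i \in \cF_i$ (here $F_1 \subseteq V_1$, $F_2 \subseteq V_2$, no disjointness imposed yet). Now I need to (i) glue $v^{(1)}$ and $v^{(2)}$ into a single ground-set element $v$, and (ii) enforce that in any feasible set exactly one of $v^{(1)}, v^{(2)}$ is present. Both are achieved by contracting an additional gadget: for each $v$, introduce a new element $z_v$ and add skew-symmetric entries coupling $z_v$ to $v^{(1)}$ and $v^{(2)}$ (with a random field element on, say, the $z_v v^{(1)}$ entry and a $\pm 1$ on the $z_v v^{(2)}$ entry, or symmetrically randomized), then contract the $z_v$'s. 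The matching-delta-matroid / Tutte-matrix viewpoint makes this transparent: the gadget on $\{v^{(1)}, v^{(2)}, z_v\}$ is the matching delta-matroid of the path $v^{(1)} z_v v^{(2)}$, whose feasible subsets of $\{v^{(1)}, v^{(2)}\}$ are exactly $\emptyset$ and $\{v^{(1)}, v^{(2)}\}$; contracting $z_v$ flips this to ``exactly one of $v^{(1)}, v^{(2)}$'', which after identifying the two copies with $v$ is precisely ``$v$ is in the union iff it is in exactly one of $F_1, F_2$'', i.e.\ the disjoint-union rule. So the final matrix is $A_0$ augmented by the $z_v$-rows/columns, and $T = T_1 \cup T_2 \cup \{z_v : v \in V\}$.

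The second step is to verify correctness of $\bD(A)/T$ via Pfaffians. For a fixed $F \subseteq V$ I would use the Pfaffian expansion: $\Pf A[F \cup T]$ decomposes along the block structure — the $z_v$-vertices each have degree governed only by the gadget, so a perfect matching of the support graph of $A[F \cup T]$ restricted to the gadget columns must match each $z_v$ to exactly one of the present copies of $v$, and this is only possible when, for each $v \in F$, exactly one copy is ``selected into the $D_i$ side'' and the other is matched to $z_v$, while for $v \notin F$ neither copy appears and $z_v$ has no partner (contradiction) — hence we must have all $z_v$ for $v \in F$ present and handled, forcing the structure. Carrying this through with Lemma~\ref{lemma:sum-pf} (or directly the definition of the Pfaffian and the block decomposition) shows that $\Pf A[F \cup T]$, as a polynomial in the random entries, is a signed sum over choices of which copy of each $v \in F$ goes to side $1$ versus side $2$, of terms $\pm (\text{product of random gadget entries}) \cdot \Pf A_1[F_1 \cup T_1] \cdot \Pf A_2[F_2 \cup T_2]$ with $F_1 \sqcup F_2$ ranging over the ways to split $F$; this is nonzero as a polynomial iff some such split has both $F_i \cup T_i$ feasible in $\bD(A_i)$, i.e.\ iff $F \in \cF(D_1 \cup D_2)$. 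Then the Schwartz–Zippel lemma over the extension field of size $\ge n \lceil 1/\varepsilon\rceil$ gives the $\varepsilon$-approximate guarantee after bounding the total degree of $\Pf A[F \cup T]$ by $O(n)$ (it is linear in each random gadget variable, at most $n$ of them appear, so degree $\le n$). The projected case is immediate: if $D_i = D_i'|X_i$ for linear $D_i'$, run the construction on the $D_i'$ and project out $X_1 \cup X_2$ at the end — projection commutes with the union construction because the gadget only touches the $V$-copies.

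The running time is $O(n^2)$ field operations: assembling $A_0$ is free (copying), adding the $\{z_v\}$ gadget adds $O(n)$ nonzero entries and $n$ new rows/columns, and no matrix inversion or multiplication is needed — we output the contraction representation directly rather than converting it. (Converting to twist representation via Lemma~\ref{lemma:contra-to-twist} would cost $O(n^\omega)$, but the lemma only claims a representation, and contraction representation is one.) The main obstacle I anticipate is the bookkeeping in the Pfaffian expansion — getting the sign terms $\sigma(M)$ to behave and confirming that the various splittings $F_1 \sqcup F_2$ contribute terms involving \emph{distinct} monomials in the gadget randomness (so that there is no cancellation across different splits, which is what guarantees the polynomial is genuinely nonzero whenever \emph{some} valid split exists). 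This is exactly the kind of place where one randomizes enough gadget entries (e.g.\ independent random scalars on both $z_v v^{(1)}$ and $z_v v^{(2)}$) to force the monomials apart, so I would set up the gadget with two fresh random entries per $v$ and argue that the monomial $\prod_{v \in F} (\text{the entry corresponding to the copy sent to } z_v)$ records the split uniquely, killing cross-cancellation; the degree bound then becomes $\le 2n$, still fine for the stated field size up to a constant.
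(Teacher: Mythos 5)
Your construction never actually produces a delta-matroid on the ground set $V$, and this is a genuine gap rather than bookkeeping. Your final matrix is indexed by $V_1 \cup T_1 \cup V_2 \cup T_2 \cup \{z_v : v \in V\}$ and you contract $T = T_1 \cup T_2 \cup \{z_v : v \in V\}$, so $\bD(A)/T$ is a delta-matroid on the $2n$-element set $V_1 \cup V_2$; the step ``glue $v^{(1)}$ and $v^{(2)}$ into a single ground-set element $v$'' is asserted but corresponds to no operation on the matrix. Moreover, what you do get is not a relabeling of $D_1 \cup D_2$: since every $z_v$ is contracted and is adjacent only to the two copies of $v$, every feasible set of $\bD(A)/T$ must contain at least one copy of each $v \in V$, and one checks that $F' \subseteq V_1 \cup V_2$ is feasible precisely when it contains at least one copy of every $v$ and the set $\{v : v^{(1)}, v^{(2)} \in F'\}$ is feasible in $D_1 \cup D_2$. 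So a union-feasible set $F$ of size $k$ corresponds to many feasible sets of size $n + k$, and for instance $\emptyset$ can never be recovered. Relatedly, your single-gadget analysis is off: the matching delta-matroid of the path $v^{(1)} z_v v^{(2)}$ does \emph{not} have $\{v^{(1)}, v^{(2)}\}$ as a feasible subset, and if ``exactly one of $v^{(1)}, v^{(2)}$ is present'' really governed membership of $v$ in $F_1$ versus $F_2$, you would be encoding the partition constraint $F_1 \sqcup F_2 = V$ rather than the disjoint union; what keeps the construction meaningful is the subtler accounting in your Pfaffian paragraph, where the copy matched to $z_v$ is discarded rather than counted toward $F_i$.

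The gap is repairable with one more contraction: contract all of $V_2$ as well, so the ground set becomes $V_1 \cong V$. Then $F \subseteq V$ is declared feasible iff $A[\{v^{(1)} : v \in F\} \cup V_2 \cup T_1 \cup T_2 \cup Z]$ is nonsingular, and your matching/monomial analysis (one fresh variable per gadget already separates the splits $F = F_1 \sqcup F_2$, giving a multilinear polynomial of degree at most $n$) then yields exactly $D_1 \cup D_2$ with the stated field size and $O(n^2)$ operations. For comparison, the paper avoids copies altogether: it keeps the single index set $V$ and superposes the two representations, setting $A[i,j] = A_1[i,j] + y_i y_j A_2[i,j]$ on the $V \times V$ block while keeping $T_1, T_2, X_1, X_2$ disjoint, and then the Pfaffian sum formula (Lemma~\ref{lemma:sum-pf}) together with the monomials $\prod_{i \in F_2} y_i$ separates the splits. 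Your copy-and-gadget design is closer in spirit to the paper's delta-sum construction (Lemma~\ref{lemma:delta-sum}), which for precisely this reason retains a third layer indexed by $V$ itself to serve as the ground set.
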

\begin{proof} 
 Let $D_1=(\bD(A_1)/T_1)|X_1$ and $D_2=(\bD(A_2)/T_2)|X_2$ be the
 representations of $D_1$ and $D_2$, where $X_1$, $X_2$ may be empty
 (if $D_1$, $D_2$ are linear) and $T_1$, $T_2$, $X_1$ and $X_2$ are all pairwise disjoint. 
 Let $V=[n]$ (w.l.o.g.) and define a set of indeterminate
 variables $y_1, \ldots, y_n$.
 Define a matrix $A$ indexed by $V \cup T_1 \cup T_2 \cup X_1 \cup X_2$
 such that $A[V \cup T_1 \cup X_1]=A_1$ and $V[V \cup T_2 \cup X_2]=A_2$
 except that for $i, j \in V$ we have $A[i,j] = A_1[i,j] + y_iy_jA_2[i,j]$. 
 All remaining blocks of $A$ are zero.
 We claim that $D=(\bD(A)/(T_1 \cup T_2))|(X_1 \cup X_2)$. 
 Indeed, let $F=F_1 \cup F_2$ for $F_1 \in \cF_1$
 and $F_2 \in \cF_2$. Then there are $S_1 \subseteq X_1$
 and $S_2 \subseteq X_2$ such that
 $\Pf A_i[F_i \cup S_i \cup T_i] \neq 0$ for $i=1, 2$.
 By Lemma~\ref{lemma:sum-pf}, $\Pf A[F \cup S_1 \cup S_2 \cup T_1 \cup T_2]$
 has a term which up to a sign term contributes
 $\prod_{i \in F_2} y_i \Pf A_1[F_1 \cup S_1 \cup T_1] \cdot \Pf A_2[F_2 \cup S_2 \cup T_2]$
 which is non-zero by assumption. Furthermore, the monomial $\prod_{i \in F_2} y_i$
 is not contributed by any other term. Hence the Pfaffian is non-zero
 as a polynomial in $y_i$.
 
 Conversely, if $\Pf A[F \cup S_1 \cup S_2 \cup T_1 \cup T_2] \neq 0$
 for some $F \subseteq V$, $S_1 \subseteq X_1$, $S_2 \subseteq X_2$
 then by Lemma~\ref{lemma:sum-pf} and by construction of $A$
 there is at least one partition $F=F_1 \cup F_2$ such
 that $\Pf A_i[F_i \cup S_i \cup T_i] \neq 0$ for $i=1, 2$,
 hence $F = F_1 \cup F_2$ is feasible in $D$.

 The running time is trivial, and the success probability is a
 straight-forward application of the Schwartz-Zippel lemma.  
\end{proof}

Bouchet presents delta-matroid union as a special case of a notion of a delta-matroid \emph{induced by}
a delta-matroid and a bipartite graph~\cite{Bouchet89dam}.
This is directly analogous to the notion of a \emph{linkage matroid};
see Oxley~\cite{OxleyBook2}. Via the Ishikawa-Wakayama formula,
letting $B$ be the Edmonds matrix of the bipartite graph,
it follows that a delta-matroid induced by a (projected) linear delta-matroid
is a (projected) linear delta-matroid, and an $\varepsilon$-approximate
representation can be constructed in matrix multiplication time.
We omit the details. 

\subsection{The delta-sum of linear delta-matroids}
\label{sec:delta-sum}

Let $D_1=(V_1, \cF_1)$ and $D_2=(V_2,\cF_2)$ be delta-matroids on not
necessarily disjoint ground sets. 
Let $V=V_1 \cup V_2$ and $\cF=\{F_1 \Delta F_2 \mid F_1 \in \cF_1, F_2 \in \cF_2\}$.
Then $D=(V,\cF)$ is called the \emph{delta-sum} $D=D_1 \Delta D_2$ of
$D_1$ and $D_2$, and is itself a delta-matroid.
Bouchet and Schwärzler~\cite{BouchetS98deltasum} give a proof,
citing unpublished work by Duchamp for the result.  

Interestingly, even though $D_1 \Delta D_2$ is always a delta-matroid, it is
not always tractable. Let us recall the definitions. 
A separation oracle for a delta-matroid $D=(V,\cF)$ is an oracle that reports,
for every pair $(S,T)$ with $S \cap T = \emptyset$ and $S, T \subseteq V$,
whether there exists a feasible set $F \in \cF$ such that $S \subseteq F$ and $F \cap T = \emptyset$.
A delta-matroid is tractable if its separation oracle can be evaluated in polynomial time.
In particular, linear delta-matroids are tractable.
However, Bouchet and Schwärzler note that
the delta-sum of two tractable delta-matroids is not in general tractable.
Indeed, let $(M, \Pi)$ be an instance of the \textsc{Matroid Parity} problem,
i.e., $M=(V,\cI)$ is a matroid and $\Pi$ is a partition of $V$ into pairs, 
where the question is if there is a basis $B$ of $M$ such that $B$ is a union of pairs.
Let $D_\Pi$ be the matching delta-matroid of the graph with edge set $\Pi$.
Recall that the bases of $M$ form a delta-matroid; this delta-matroid has a tractable separation oracle
if and only if $M$ has a tractable independence oracle. 
Then $\emptyset$ is feasible in $M \Delta D_\Pi$ if and only if $(M,\Pi)$
is a yes-instance, which is known to require an exponential number of queries
in the worst case and is intractable even when $M$ is given explicitly \cite{Lovasz80matroid}.

We show that the delta-sum of linear delta-matroids is linear, and thereby tractable,
in the case that $D_1$ and $D_2$ are given as representations over a common field. 
This fits well with the fact that \emph{linear} matroid parity is in P \cite{Lovasz80matroid}.
By Lemma~\ref{lemma:representation}, we can work with \repname representations. 
%
\begin{lemma}
  \label{lemma:delta-sum}
  The delta-sum of (projected) linear delta-matroids over a common field is a (projected) linear
  delta-matroid, and a representation can be computed in randomized polynomial time.
  More precisely, let $D_1$ and $D_2$ be linear delta-matroids given
  in contraction representation over a common finite field $\F$.
  Let $\varepsilon > 0$ be given and let $\F'$ be a field extension of $\F$
  with at least $n \cdot \lceil 1/\varepsilon \rceil$ elements.
  We can construct an $\varepsilon$-approximate contraction
  representation of $D_1 \Delta D_2$
  in $O(n^\omega)$ field operations over $\F'$. 
\end{lemma}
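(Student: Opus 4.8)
The plan is to adapt the construction of Lemma~\ref{lemma:union}, replacing the ``disjoint choice'' device $A[i,j]=A_1[i,j]+y_iy_jA_2[i,j]$ by a small per-element gadget that also models the possibility that an element lies in \emph{both} $F_1$ and $F_2$ (and hence is cancelled in $F_1\Delta F_2$). By Lemma~\ref{lemma:representation} I may assume $D_1=(\bD(A_1)/T_1)|X_1$ and $D_2=(\bD(A_2)/T_2)|X_2$ with $T_1,T_2,X_1,X_2$ pairwise disjoint ($X_1=X_2=\emptyset$ in the non-projected case); write $V=[n]$. I introduce two fresh copies $V_1'=\{v_1':v\in V\}$ and $V_2'=\{v_2':v\in V\}$ of the ground set and indeterminates $y_v,z_v,w_v$ for $v\in V$, and form the skew-symmetric matrix $A$ indexed by $V\cup V_1'\cup V_2'\cup T_1\cup X_1\cup T_2\cup X_2$ whose $(V_1'\cup T_1\cup X_1)$-block is a relabelled copy of $A_1$ (with $v_1'$ in the role of $v$), whose $(V_2'\cup T_2\cup X_2)$-block is a relabelled copy of $A_2$, and whose only further nonzero entries are $A[v,v_1']=z_v$, $A[v,v_2']=y_v$ and $A[v_1',v_2']=w_v$ for $v\in V$ (together with the skew-symmetric mirrors); note the $V\times V$ block is zero, so a slot $v\in V$ is adjacent only to $v_1'$ and $v_2'$ in the support graph. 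I then claim
\[
  D_1\Delta D_2 \;=\; \bigl(\bD(A)/(V_1'\cup V_2'\cup T_1\cup T_2)\bigr)|(X_1\cup X_2),
\]
which is a (projected) contraction representation, and a genuine contraction representation when $X_1=X_2=\emptyset$.

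To prove the claim I would fix $F\subseteq V$ and $S_1\subseteq X_1$, $S_2\subseteq X_2$ and expand $\Pf A[F\cup V_1'\cup V_2'\cup S_1\cup S_2\cup T_1\cup T_2]$ as a signed sum over perfect matchings of the support graph. The key ``forcing'' step: since $v\in V$ is adjacent only to $v_1'$ and $v_2'$, in any perfect matching each $v\in F$ is matched to $v_1'$ (contributing $z_v$, which then forces $v_2'$ into the $A_2$-copy and leaves $v_1'$ out of the $A_1$-copy, i.e.\ $v\in F_2\setminus F_1$) or to $v_2'$ (contributing $y_v$, symmetrically $v\in F_1\setminus F_2$); and each $v\notin F$ either has $v_1'$--$v_2'$ matched (contributing $w_v$, so $v\notin F_1\cup F_2$) or has $v_1'$ in the $A_1$-copy and $v_2'$ in the $A_2$-copy (no extra factor, so $v\in F_1\cap F_2$). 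Hence a perfect matching encodes precisely a bipartition $F=P\sqcup Q$, a set $C\subseteq V\setminus F$, and perfect matchings of $A_1[(P\cup C)\cup S_1\cup T_1]$ and $A_2[(Q\cup C)\cup S_2\cup T_2]$; summing, the coefficient of the monomial $\prod_{v\in P}y_v\prod_{v\in Q}z_v\prod_{v\notin F\cup C}w_v$ is $\pm\Pf A_1[(P\cup C)\cup S_1\cup T_1]\cdot\Pf A_2[(Q\cup C)\cup S_2\cup T_2]$, exactly as in the expansion behind Lemma~\ref{lemma:sum-pf}. Distinct choices of $(P,Q,C)$ give distinct monomials, so there is no cancellation; therefore $\Pf A[F\cup V_1'\cup V_2'\cup S_1\cup S_2\cup T_1\cup T_2]$ is a nonzero polynomial for some $S_1,S_2$ if and only if there are $F_1\in\cF_1$, $F_2\in\cF_2$ with $F_1\Delta F_2=F$ — i.e.\ iff $F\in\cF(D_1\Delta D_2)$ — which proves the claim over the polynomial ring.

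Finally I would substitute independent uniformly random values from $\F'$ for the $3n$ indeterminates $y_v,z_v,w_v$; the Pfaffians above have degree at most $n$ in these variables, so by the Schwartz--Zippel lemma every $F\in\cF(D_1\Delta D_2)$ remains feasible with probability at least $1-n/|\F'|\ge1-\varepsilon$, while no infeasible set becomes feasible, giving an $\varepsilon$-approximate representation. The matrix $A$ has dimension $O(n)$ (using Lemma~\ref{lemma:contraction-reduce} we may first assume $|T_1|,|T_2|\le|V|$) and is written down in $O(n^2)$ operations; applying Lemma~\ref{lemma:contraction-reduce} once more to the output keeps the contracted set of size $\le|V|$, for a total of $O(n^\omega)$ field operations over $\F'$. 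The step I expect to be the main obstacle is the sign bookkeeping: verifying that the sign of each perfect-matching term factors as a fixed sign depending only on $F,P,Q,C$, times the sign of the induced $A_1$-matching, times that of the induced $A_2$-matching, so that the monomial coefficient really is $\pm\Pf A_1[\cdot]\,\Pf A_2[\cdot]$ with no accidental cancellation; this is a routine but careful computation of the type underlying Lemma~\ref{lemma:sum-pf}, most easily carried out by ordering the index set so that the three slots $v,v_1',v_2'$ of each gadget are consecutive.
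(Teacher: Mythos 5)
Your construction is essentially identical to the paper's: the triangle gadget on $\{v,v_1',v_2'\}$ with indeterminates $z_v,y_v,w_v$ is exactly the paper's Tutte matrix $A_H$ (diagonal blocks $B_1,B_2,B_3$) added to the block-diagonal copy $A'$ of $A_1$ and $A_2$, with the same contraction set and the same Schwartz--Zippel finish. The sign/cancellation bookkeeping you flag as the main obstacle is handled in the paper by invoking Lemma~\ref{lemma:sum-pf} together with the block-diagonal structure of $A'[S^+\setminus U]$ (so $\Pf A'[S^+\setminus U]=\pm\Pf A_1[\cdot]\Pf A_2[\cdot]$), and by the algebraic independence of the gadget monomials, which is precisely your no-cancellation argument.
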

\begin{proof}
  Let $D_1=(V,\cF_1)$ and $D_2=(V,\cF_2)$; assume w.l.o.g.\ that 
  they are over the same ground set (by adding dummy elements if necessary).
  For the projected case, assume
  that $D_1=D_1^+|X_1$ and $D_2=D_2^+|X_2$ for disjoint sets $X_1$, $X_2$;
  we can then construct $D_1 \Delta D_2 = (D_1^+ \Delta D_2^+)|(X_1 \cup X_2)$.  
  Thus we focus on the linear case and let
  $D_1=\bD(A_1)/T_1$ and $D_2=\bD(A_2)/T_2$ for skew-symmetric matrices 
  $A_i \in \F^{(V \cup T_i) \times (V \cup T_i)}$, $i=1, 2$.
  We first show a representation of $D_1 \Delta D_2$ in three ``layers''.  
  Let $V_1$ and $V_2$ be disjoint copies of $V$, and for $v \in V$ let $v^i$ denote its copy in $V_i$. 
  Let $V^+=V \cup V_1 \cup V_2$, and let $H$ be a graph over $V^+$
  consisting of the disjoint union of triangles $\{v, v^1, v^2\}$ over all $v \in V$. 
  Let $A_H$ be the Tutte matrix of $H$, and let $A'$ be the disjoint union
  of $A_1$ over $V_1 \cup T_1$ and $A_2$ over $V_2 \cup T_2$, i.e.,
  \begin{align*}
    A_H = \kbordermatrix{
        & V  & V_1  & T_1  & V_2  & T_2 \\
      V & O  & B_1  &  O   & B_2  & O  \\
      V_1&-B_1& O   &   O  &  B_3  &  O \\
      T_1&  O & O   &  O   &   O  &  O \\
      V_2&-B_2& -B_3  &  O   &   O  &  O \\
      T_2&  O & O   &  O   &   O  &  O 
    }
  \end{align*}
  where $B_1$, $B_2$ and $B_3$ are diagonal matrices whose entries are all distinct independent variables. 
  With the same partition, $A'$ is written as
  \begin{align*}
    A' = \kbordermatrix{
        & V  & V_1  & T_1  & V_2  & T_2 \\
      V & O  & O    &  O   & O    & O  \\
      V_1& O  & A_1[V_1] & A_1[V_1,T_1] &   O    &  O \\
      T_1&  O & A_1[T_1,V_1] &  A_1[T_1]   &   O  &  O \\
      V_2& O  & O & O & A_2[V_2] &  A_2[V_2,T_2] \\
      T_2&  O & O & O & A_2[T_2,V_2]  &  A_2[T_2]
    }.
  \end{align*}
  Then let $A=A_H+A'$. 
  For $S \subseteq V$, let $S^+=S \cup V_1 \cup T_1 \cup V_2 \cup T_2$.
  We claim that for $S \subseteq V$, $A[S^+]$ is non-singular if and only if $S$ is feasible in $D_1 \Delta D_2$.
  We note that by Lemma~\ref{lemma:sum-pf}, for any set $S$ we have
  $\Pf A[S^+] = \sum_{U \subseteq S^+} \sigma_U \Pf A_H[U] \cdot \Pf A'[S^+ \setminus U]$
  for some sign terms $\sigma_U$.
  Furthermore, since $A_H$ is a Tutte matrix, every matching in $H$ contributes
  an algebraically distinct term; thus $A[S]$ is non-singular
  if and only if there exists a set $U \subseteq S^+$
  such that $H[U]$ has a perfect matching and $A'[S^+ \setminus U]$ is non-singular. 
  
  First, assume that there are $F_1 \in \cF_1$ and $F_2 \in \cF_2$
  such that $F_1 \Delta F_2 = S$, $S \subseteq V$.
  Consider 
  \[
    U = S^+ \setminus
    \Bigl(\{v^1 \mid v \in F_1\} \cup T_1 \cup
    \{v^2 \mid v \in F_2\} \cup T_2
    \Bigr).
  \]
  Then $H[U]$ consists of edges $vv^1$ for $v \in F_2 \cap S$,
  $vv^2$ for $v \in F_1 \cap S$ and $v^1v^2$ for $v \in V \setminus (F_1 \cup F_2)$.
  Thus $H[U]$ has a perfect matching. Furthermore $A'[S^+ \setminus U]$
  is just the diagonal block matrix with blocks $A_1[F_1 \cup T_1]$
  and $A_2[F_2 \cup T_2]$, which is non-singular by assumption. 
  
  Second, assume that $\Pf A[S^+] \neq 0$ for $S \subseteq V$, and let $U \subseteq S^+$
  be a term such that $\Pf A_H[U] \cdot \Pf A'[S^+ \setminus U] \neq 0$.
  Then $S \subseteq U \subseteq S \cup V_1 \cup V_2$
  and $U$ contains an even number of elements of every triangle $\{v,v^1,v^2\}$.
  In particular, every vertex $v \in S$ is matched against $v^1$ or $v^2$ in $H[U]$. 
  
  Let $F_1=V_1 \setminus U$ and $F_2=V_2 \setminus U$,
  and write $F_1'=\{v \in V \mid v^1 \in F_1\}$
  and $F_2'=\{v \in V \mid v^2 \in F_2\}$. 
  Then for every $v \in S$, precisely one of the statements
  $v^1 \in F_1$ and $v^2 \in F_2$ is true, i.e., $v \in F_1' \Delta F_2'$.  
  Furthermore, for every vertex $v \in V \setminus S$,
  either $v^1, v^2 \in U$ and $v$ is disjoint from $F_1' \cup F_2'$,
  or $v \in F_1' \cap F_2'$. Thus $S=F_1' \Delta F_2'$.
  Again, $A'[S^+ \setminus U]$ has a diagonal block structure
  with blocks $A_1[F_1 \cup T_1]$ and $A_2[F_2 \cup T_2]$,
  and since $\Pf A'[S^+ \setminus U] \neq 0$ by assumption
  we find $F_1 \in \cF_1$ and $F_2 \in \cF_2$. 
  Now $D_1 \Delta D_2=\bD(A)/(V_1 \cup T_1 \cup V_2 \cup T_2)$
  is a \repname representation of the delta-sum. 

  Regarding the constructive aspects of running time and success probability,
  the only non-deterministic step above is taking a representation of $A_H$. 
  For any $S \subseteq V$, $\Pf A[S^+]$ is a polynomial of degree at most $n$ in 
  the variables associated with the edges of $H$. Hence, let $\F'$ be
  an extension field of $\F$ with $|\F'| \geq n \lceil 1/\varepsilon \rceil$
  and create $A_H$ by replacing the edge variables by values from $\F'$
  chosen independently and uniformly at random. Then for any $S \subseteq V$,
  if $S$ is infeasible in $D_1 \Delta D_2$ then $A[S^+]$ is singular
  for every choice of $A_H$, and if $S$ is feasible then $A[S^+]$
  is non-singular with probability at least $1-|F'|/n \geq 1-\varepsilon$
  by the Schwartz-Zippel lemma. 
\end{proof}

If desired, we may reduce the contraction set $T=V_1 \cup T_1 \cup V_2 \cup T_2$
in this construction to a smaller set $T'$, $|T'| \leq |V|$, as per
Lemma~\ref{lemma:contraction-reduce}.

\subsection{Delta-matroid projections}
\label{sec:projection}

Recall that a projected linear delta-matroid $D=(V,\cF)$ is a delta-matroid
represented as $D=D'|X$ where $D'=(V \cup X, \cF')$ is a linear delta-matroid.
Projections of linear delta-matroids were studied by Geelen et al.~\cite{GeelenIM03}
in the context of linear delta-matroids over fields of characteristic 2, 
and by Kakimura and Takamatsu~\cite{KakimuraT14sidma} regarding
generalizations of constrained matching problems. 

We observe that if $D$ is linear, then the even (respectively odd)
sets of $D|X$ form a linear delta-matroid, and that every projected
linear delta-matroid $D|X$ can be represented via an elementary projection. 

\begin{lemma} \label{lemma:delta-project}
  Let $D=(V \cup X,\cF)$ be a linear delta-matroid.
  Then the following delta-matroids are linear and approximate representations can
  be constructed efficiently.
  \begin{enumerate}
  \item A linear delta-matroid $D'=(V \cup X', \cF')$ such that
    $D|X=D'|X'$ and $|X'| \leq 1$
  \item The delta-matroid $D_e=(V,\cF_e)$ where $\cF_e$ contains the
    sets of $\cF|X$ of even cardinality
  \item The delta-matroid $D_o=(V,\cF_o)$ where $\cF_o$ contains the
    sets of $\cF|X$ of odd cardinality
  \end{enumerate}
  More precisely, let $D=\bD(A)/T$ in contraction representation
  over a finite field $\F$ and let $\varepsilon > 0$ be given.
  Let $\F'$ be a field extension of $\F$ with at least $n \cdot \lceil
  1/\varepsilon \rceil$ elements. 
  We can construct an $\varepsilon$-approximate contraction
  representation of each of the above delta-matroids in $O(n^2)$
  operations over $\F'$.
\end{lemma}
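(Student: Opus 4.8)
The plan is to first reduce the general projection $D|X$ to an \emph{elementary} projection (item~1), and then read off $D_e$ and $D_o$ (items~2 and~3) essentially for free by deleting, respectively contracting, the single projected element.

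For item~1, the key idea is to replace the whole projected set $X$ by one new element $x^*$ that merely records the \emph{parity} of $X\cap F$. Let $D_P$ be the matching delta-matroid of the \emph{clique} on vertex set $X\cup\{x^*\}$; its feasible sets are exactly the even subsets of $X\cup\{x^*\}$, so for any $Y\subseteq X$, the set $Y$ is feasible in $D_P$ iff $|Y|$ is even while $Y\cup\{x^*\}$ is feasible iff $|Y|$ is odd, and for every $Y$ exactly one of these two options is feasible. Now form the composition (in the sense of Bouchet and Cunningham) of $D$, with ground set $V\cup X$, and $D_P$, with ground set $X\cup\{x^*\}$, glued along their common part $X$; call the result $D'=(V\cup\{x^*\},\cF')$. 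Its feasible sets are the $F_1\Delta F_2$ with $F_1\in\cF(D)$, $F_2\in\cF(D_P)$ and $F_1\cap X=F_2\cap X$; writing $F_1=F'\cup Y$ with $F'\subseteq V$ and $Y\subseteq X$, the $X$-parts cancel, so $F_1\Delta F_2=F'$ or $F'\cup\{x^*\}$. Hence for $F'\subseteq V$ we have $F'\in\cF'$ iff $F'\cup Y\in\cF(D)$ for some \emph{even} $Y$, and $F'\cup\{x^*\}\in\cF'$ iff $F'\cup Y\in\cF(D)$ for some \emph{odd} $Y$. Since for any $Y$ exactly one parity choice of $x^*$ is admissible, $\cF'|\{x^*\}=\{F'\mid \exists Y\subseteq X: F'\cup Y\in\cF(D)\}=\cF|X$, i.e.\ $D|X=D'|\{x^*\}$ with $|X'|=1$. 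As $D$ is even (of some parity $p$) and $D_P$ is even, $D'$ is even, as required of a linear delta-matroid.

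It remains to realize $D'$ linearly within $O(n^2)$ operations. The composition equals $(\hat D\Delta\hat D_P)\setminus X$ for the extensions $\hat D,\hat D_P$ of $D,D_P$ to the common ground set $V\cup X\cup\{x^*\}$ obtained by adding the missing elements as loops, so linearity already follows from Lemma~\ref{lemma:delta-sum} plus closure under deletion. To get the sharper running time I would not call Lemma~\ref{lemma:delta-sum} as a black box but instantiate its three-layer construction directly: take the Tutte matrix $A_P$ of the clique on a fresh copy of $X\cup\{x^*\}$ (an $\varepsilon$-approximate representation of $D_P$ by Lemma~\ref{lemma:tutte-representation}), form the block-diagonal matrix of $A$ (the given contraction representation of $D$) and $A_P$, add the skew-symmetric ``triangle'' matrix $A_H$ linking the three copies of each $x\in X$, delete the rows and columns indexed by the original $X$, and declare $x^*$ the projected element. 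This is a single skew-symmetric matrix of order $O(n)$, written down in $O(n^2)$ field operations with no matrix multiplication; its correctness is exactly the Pfaffian-sum argument of Lemma~\ref{lemma:delta-sum} (each matching of the triangle graph contributes an algebraically distinct monomial, so Lemma~\ref{lemma:sum-pf} isolates the desired term), and the only randomness lies in $A_P$ and $A_H$, whose relevant Pfaffians have degree at most $n$, so $|\F'|\ge n\lceil 1/\varepsilon\rceil$ suffices by Schwartz--Zippel.

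For items~2 and~3, let $p\in\{0,1\}$ be the common parity of $\cF(D)$ (equivalently of $\cF'$; read off from any feasible set). If $F'\in\cF|X$ with $|F'|$ even, then any witnessing $Y$ with $F'\cup Y\in\cF(D)$ has $|Y|\equiv p\pmod 2$, so $F'$ is witnessed in $\cF'$ by the $x^*$-free option when $p$ is even and by the $x^*$-containing option when $p$ is odd; the case $|F'|$ odd is symmetric. Hence $D_e=D'\setminus\{x^*\}$ and $D_o=D'/\{x^*\}$ when $p$ is even, and the two are swapped when $p$ is odd. On the contraction representation $D'=\bD(A')/T'$, deleting $x^*$ just removes its row and column from $A'$, and contracting $x^*$ just moves $x^*$ into $T'$; both are $O(n^2)$ (and preserve the approximation guarantee), and if one of $\cF_e,\cF_o$ happens to be empty the corresponding object is degenerate and nothing needs to be built. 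The step I expect to require the most care is not conceptual but bookkeeping: correctly instantiating the three-layer matrix so that the loops and the two copies of $X$ interact as intended, and aligning the parity argument of items~2--3 with the sign conventions. The genuinely new ingredient, the clique gadget collapsing $X$ to a single parity-tracking element, is short, and everything else is inherited from Lemmas~\ref{lemma:delta-sum}, \ref{lemma:sum-pf} and~\ref{lemma:tutte-representation}.
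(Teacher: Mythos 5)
Your proof is correct, but at the key step it takes a genuinely different route from the paper's. The paper also reduces $D|X$ to an elementary projection via a clique gadget, but it uses the matching delta-matroid $D_K$ of the clique on $X$ alone, forms the \emph{union} $D\cup D_K$ via Lemma~\ref{lemma:union} (which is what makes $O(n^2)$ and one-sided error immediate), selects an \emph{existing} element $x\in X$, and contracts $X-x$; items~2 and~3 then follow from the same parity observation you make (delete vs.\ contract the remaining element). You instead adjoin a \emph{new} parity-tracking element $x^*$, take the clique on $X\cup\{x^*\}$, and glue via the composition $(\hat{D}\Delta\hat{D}_P)\setminus X$, instantiating the three-layer delta-sum gadget of Lemma~\ref{lemma:delta-sum} by hand to stay within $O(n^2)$. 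The difference is not cosmetic: with an existing $x\in X$, the equivalence ``$F\in\cF|X$ iff $F\cup X\in\cF'$ or $(F\cup X)-x\in\cF'$'' used by the paper can fail when every witness $S$ with $F\cup S\in\cF$ contains $x$ and $|X\setminus S|$ is odd (for instance, $D$ the matching delta-matroid of two disjoint edges $u_1x_1,u_2x_2$ with $X=\{x_1,x_2\}$: the set $\{u_1\}$ is lost for $x=x_1$ and $\{u_2\}$ for $x=x_2$), whereas your fresh $x^*$ is never forced into a witness, so for every $Y\subseteq X$ exactly one of $Y$, $Y\cup\{x^*\}$ is available in the clique gadget and your equivalence holds unconditionally; in this sense your gadget is the more robust one, at the price of a constant-factor larger matrix and an extra layer of randomness (the triangle variables on top of the clique variables). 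A leaner variant in the paper's style would keep the union construction but apply it to the clique on $X\cup\{x^*\}$ and contract all of $X$. Two small points to tighten: the Pfaffian degree in the random variables is only bounded by a constant multiple of $n$ (the three-layer gadget roughly triples the ground set), so the field size should be scaled by that constant to get error $\varepsilon$; and when invoking the composition you should record, as the paper does for the union, that errors are one-sided, so the output is an $\varepsilon$-approximate representation in the defined sense.
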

\begin{proof}
 Let $D=\bD(A)/T$ where $A$ is a skew-symmetric matrix indexed by $V \cup X \cup T$.
 Assume $|X| > 1$ or there is nothing to show.  Let $D_K$ be the matching delta-matroid
 over $V \cup X$ for the graph which is a clique on $X$ and
 has no other edges. Let $D'=\bD(A')/T'$ be a contraction
 representation of the delta-matroid union $D \cup D_K$.
 Let $D'=(V \cup X,\cF')$ and select $x \in X$.
 We note that for any set $F \subseteq V$, we have $F \in \cF|X$ if
 and only if either $F \cup X \in \cF'$ or $(F \cup X)-x \in \cF'$.
 Indeed, on the one hand let $F \in \cF|X$ and let $F \cup S \in \cF$,
 $S \subseteq X$. If $|X \setminus S|$ is even, then $X \setminus S$ is feasible in $D_K$,
 hence $F \cup X$ is feasible in $D'$.
 If $|X \setminus S|$ is odd, then $X \setminus (S+x)$ is feasible in $D_K$,
 hence $(F \cup X)-x$ is feasible in $D'$.
 Conversely, if $F \cup X \in \cF'$ respectively $(F \cup X)-x \in \cF'$ then there
 exists $S \subseteq X$ such that $F \cup S \in \cF$ (and the remaining elements
 are feasible in $D_K$).
 Hence we represent the same delta-matroid $D|X$ by contracting $X-x$ and leaving $X'=\{x\}$. 
 
 Furthermore, the parity of $|S|$ is controlled by the
 parity of $|F|$ (since $D$ is even), hence deleting or contracting $x$
 produces the odd and even halves of $D|X$. 

 Regarding running time and success probability, there are two
 non-deterministic steps, namely the construction of $D_K$
 and the construction of $D \cup D_K$.
 Let $D_K'=\bD(A')$ be an $\varepsilon/2$-approximate representation
 of $D_K$, using Lemma~\ref{lemma:tutte-representation},
 and let $D'$ be an $\varepsilon/2$-approximate representation 
 of $D \cup D_K'$ using Lemma~\ref{lemma:union}.
 It can easily be verified that errors are one-directional,
 i.e., for any set $S$ infeasible in $D|X=(D \cup D_K)/(X-x)$,
 $S \cup (X-x)$ remains infeasible in $D'$. Now let $F$ be feasible
 in $D|X$ and let $F \cup S$ be feasible in $D$, for some $S \subseteq X$.
 With probability at least $1-\varepsilon/2$, $X \setminus S$
 respectively $X \setminus (S+x)$ is feasible in $D_K'$,
 hence $F \cup X$ or $(F \cup X)-x$ is feasible in $D \cup D_K'$.
 Subject to this, with a further probability of at least $1-\varepsilon/2$ 
 $F \cup X$ or $(F \cup X)-x$ is also feasible in $D'$.
 Note that these steps take $O(n^2)$ time. 
 Finally, the output representation can be produced by simply adding $X-x$
 to the contraction set in the representation of $D'$.
\end{proof}

\subsection{Delta-matroid parity min-max formula in contraction form}
\label{sec:dmp-min-max}

We recall the definition of the \textsc{DM Parity} problem. 
Let $D=(V,\cF)$ be a delta-matroid and $\Pi$ a partition of $V$ into pairs. 
For a set $F \subseteq V$, let $\delta_{\Pi}(F) = |\{ P \in \Pi : |F \cap P| = 1 \}|$
denote the number of pairs broken by $F$, and define
$\delta(D, \Pi) = \min_{F \in \cF} \delta_{\Pi}(F)$.
The goal of \textsc{DM Parity} is to find a set $F \in \cF$ with
$\delta_\Pi(F)=\delta(D,\Pi)$. 
As well as an algorithm for \textsc{Linear DM Parity}, 
Geelen et al.~\cite{GeelenIM03} showed a min-max theorem for the value
of $\delta(D,\Pi)$ that applies when $D$ is linear. 
However, its formulation is (in our opinion) in unfamiliar terms,
and even its validity as a lower bound is non-immediate. 
We observe an equivalent reformulation in terms of contraction
representation. This is of course mathematically equivalent to the
known result~\cite{GeelenIM03}, but in our opinion it is a
structurally much more transparent statement. 

In the following, when we say that $A[U]$ is a direct sum of $A[U_1]$,
$A[U_2]$, \dots we mean that $A[U]$ is a diagonal block matrix with
blocks $A[U_1]$, $A[U_2]$, \dots such that all other blocks are all-zero.

\begin{lemma} \label{lm:dmparity-minmax}
 Let $D=(V,\cF)$ be a linear delta-matroid and $\Pi$ a partition of $V$ into pairs.
 There is a contraction representation $D=\bD(A)/T$ of $D$
 for a skew-symmetric matrix $A$ and partitions $V=V_1 \cup \ldots \cup V_k$
 and $T=T_0 \cup T_1 \cup \ldots \cup T_k$ such that the following hold.
 \begin{enumerate}
 \item $A[V \cup (T \setminus T_0)]$ is a direct sum of $A[V_i \cup T_i]$ for $i=1, \ldots, k$
 \item For every pair $P \in \Pi$ there is an index $i \in [k]$
   such that $P \subseteq V_i$
 \item The number of parts $i \in [k]$ such that $|T_i|$ is odd equals $\delta(D,\Pi)+|T_0|$.
 \end{enumerate}
\end{lemma}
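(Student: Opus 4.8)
The statement is really a translation device: it asks me to encode a delta-matroid $D$ together with a pairing $\Pi$ into a contraction representation whose block structure ``reads off'' the parity value $\delta(D,\Pi)$. The natural route is to go through the delta-sum construction of Section~\ref{sec:delta-sum}. Recall that $\delta(D,\Pi)$ can be expressed via the matching delta-matroid $D_\Pi$ of the graph $G_\Pi$ with edge set $\Pi$: since a set $F$ breaks a pair $P=\{u,v\}$ exactly when $|F\cap P|=1$, and $D_\Pi$ has feasible sets precisely the subsets of $V$ that are unions of pairs, one checks that $\delta(D,\Pi) = \min\{|S| : S \in \cF(D\,\Delta\,D_\Pi)\}$ — indeed $F \Delta F' = S$ with $F\in\cF$, $F'\in\cF(D_\Pi)$ forces $S$ to be exactly the set of broken pairs' ``stray'' vertices, counted with one vertex per broken pair after an appropriate choice of $F'$. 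So the plan is: first take the concrete three-layer contraction representation $\bD(A)/T'$ of $D\,\Delta\,D_\Pi$ produced by Lemma~\ref{lemma:delta-sum} (with $T' = V_1 \cup T_1 \cup V_2 \cup T_2$, and $V_2 \cup T_2$ coming from $D_\Pi$), but instead of fully symbolizing the Tutte matrix of the ``triangle gadget'' graph $H$, keep the pairing structure visible.

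**Extracting the block structure.** The key observation is that $D_\Pi$, being the matching delta-matroid of a \emph{disjoint union of edges} (one per pair $P\in\Pi$), has a directly-represented skew-symmetric matrix $A_2$ that is itself a direct sum of $2\times 2$ blocks, one per pair. Feeding this into the delta-sum construction, the matrix $A$ — apart from the connection to the $D$-layer — decomposes along the pairs of $\Pi$. I would then perform a pivot / row-reduction over the $D_\Pi$-layer and the gadget layer to eliminate most of $T'$: because the $D_\Pi$-part is so simple, most of its contracted coordinates can be pivoted away, leaving a residual contraction set. After this cleanup, the connected components of the support structure of $A$ restricted to $V\cup(\text{residual }T)$ are exactly the ``clusters'' $V_1,\dots,V_k$ one gets by merging pairs of $\Pi$ with whatever interaction $D$ forces between them — this gives condition~(1), the direct-sum decomposition, and condition~(2), that each pair $P\in\Pi$ lies inside one $V_i$, essentially for free since the pair's two vertices sit in a common gadget component. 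The set $T_0$ collects the leftover global contraction coordinates that don't belong to any single component (e.g.\ coordinates needed to glue $D$'s own representation together across clusters, or to handle a global parity correction).

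**The parity count.** Condition~(3) is the heart of the matter and where the Geelen–Iwata–Murota min-max value reappears. After the decomposition, $\bD(A)/T$ being a valid representation of $D$ means a set $F\subseteq V$ is feasible in $D$ iff $A[F\cup T]$ is nonsingular, i.e.\ iff for each component $i$ the block $A[(F\cap V_i)\cup T_i]$ is nonsingular and the $T_0$-glue is consistent. Now $A[V_i\cup T_i]$ being a skew-symmetric matrix, $(F\cap V_i)\cup T_i$ can only be feasible if it has even size; so $F\cap V_i$ has the same parity as $|T_i|$. Summing: the number of components where $F\cap V_i$ is forced to be odd is $\sum_i [\,|T_i|\text{ odd}\,]$, and this is a lower bound on $|F \cap (\text{those }V_i)|$, hence — after tracking how $|T_0|$ shifts the count — a lower bound on the number of broken pairs, matching $\delta(D,\Pi)$. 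For the upper bound I would exhibit, from an optimal $F^\star$ achieving $\delta(D,\Pi)$, a corresponding feasible selection in $\bD(A)$ that realizes exactly $\sum_i[\,|T_i|\text{ odd}\,] = \delta(D,\Pi)+|T_0|$, using the explicit structure of the gadget.

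**Main obstacle.** The delicate part is the bookkeeping in step three: pinning down exactly why the residual contraction set $T_0$ contributes the additive $+|T_0|$ and not some other correction, and verifying that the parity forced on $F\cap V_i$ by $|T_i|$ translates \emph{bijectively} (one vertex per broken pair) into the broken-pair count rather than merely giving an inequality. This requires being careful about how the three-layer gadget identifies ``$F_1 \Delta F_2 = S$'' with ``$S$ = set of broken pairs'': one must choose the $D_\Pi$-side witness $F'$ so that $S$ contains exactly one vertex of each broken pair, which is possible precisely because $D_\Pi$ is a matching delta-matroid on single edges. I expect the proof to proceed by first doing the reduction to $D\,\Delta\,D_\Pi$ cleanly, then invoking Lemma~\ref{lemma:delta-sum}'s explicit matrix, then doing a careful pivot to isolate components, and finally a parity-counting argument closing both directions of the min-max equality — with the last being mathematically equivalent to (but, as the authors claim, more transparent than) the original Geelen et al.\ formula.
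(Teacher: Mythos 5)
Your plan has a genuine gap, and it also starts from the wrong object. First, the matrix you build via Lemma~\ref{lemma:delta-sum} is a contraction representation of the delta-sum $D\,\Delta\,D_\Pi$, not of $D$; the lemma demands $D=\bD(A)/T$, so the construction cannot serve as the required representation without additional work to peel the $D_\Pi$-layer back off, and it is not clear that any such transformation preserves the block structure you want. Second, and more fundamentally, the heart of the statement is the \emph{existence} of a decomposition whose number of odd parts is \emph{exactly} $\delta(D,\Pi)+|T_0|$. The counting you sketch (parity of $|T_i|$ forces parity of $|F\cap V_i|$) only yields the easy direction: any decomposition satisfying (1) and (2) certifies $\delta(D,\Pi)\geq(\#\text{odd parts})-|T_0|$, equivalently $\#\text{odd parts}\leq\delta(D,\Pi)+|T_0|$; likewise, exhibiting a feasible selection from an optimal $F^\star$ again only bounds the odd-part count from above. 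What is missing is the hard, dual-optimality direction: why does a decomposition with \emph{at least} $\delta(D,\Pi)+|T_0|$ odd parts exist at all? Taking connected components of the support of the (generic) delta-sum gadget matrix gives no control here -- that matrix can be essentially fully connected, yielding a trivially coarse decomposition that certifies nothing. Guaranteeing a tight certificate is precisely the content of the Geelen--Iwata--Murota min-max theorem, and your outline would in effect have to reprove it, which it does not do.

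The paper sidesteps this entirely by \emph{importing} the GIM min-max theorem rather than reproving it: it takes an optimal strong map $D^\circ=D^+/Z$ of $D$ (with $D^+$ linear, as produced by the GIM algorithm), picks a feasible set of $D^+$ containing $Z$ so that $D^+=\bD(A)\Delta S^+$ with $Z\subseteq S^+$, obtains the mixed representation $D=(\bD(A)/Z)\Delta S$ with $S=S^+\setminus Z$, reads the partition $V_1\cup\ldots\cup V_k$ off the diagonal block decomposition of $A[V]$ (refined by $\Pi$, giving condition (2)), notes that $V_i$ is odd in $D^\circ$ iff $|S\cap V_i|$ is odd so that GIM's theorem supplies exactly $|Z|+\delta(D,\Pi)$ odd components, and finally converts the twist by $S$ into a contraction componentwise via the construction of Lemma~\ref{lemma:representation}, setting $T_0=Z$ and $|T_i|=|S\cap V_i|$. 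If you want a self-contained route in your style, you would need an independent proof of the existence of an optimal strong map (or an equivalent dual certificate), which is a substantially harder task than the bookkeeping you identify as the main obstacle.
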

\begin{proof}
 The min-max statement of Geelen et al.~\cite{GeelenIM03} is defined
 as follows. Let $D^\circ$ be defined from $D$ by a strong map,
 i.e., there is a delta-matroid $D^+$ on ground set $V \cup Z$
 such that $D=D^+ \setminus Z$ and $D^\circ=D^+/Z$.
 Let $V=V_1 \cup \ldots \cup V_k$ be the most fine-grained direct sum
 decomposition of $D^\circ$ that is refined by $\Pi$, i.e., the partition
 is produced starting from the most fine-grained direct sum
 decomposition of $D^\circ$ and then merging components that are
 connected by a pair from $\Pi$. Let $\text{odd}(D^\circ,\Pi)$
 be the number of components $V_i$ in this decomposition such that
 the feasible sets of the induced delta-matroid $D^\circ(V_i)$ have
 odd cardinality. Then
 \[
   \delta(D, \Pi) = \max (\text{odd}(D^\circ,\Pi)-|Z|)
 \]
 where the max is taken over strong maps $D^\circ$ of $D$ as above.
 Here, $D^+$ is a linear delta-matroid (in fact, it is constructed
 during their algorithm).

 We begin with a ``mixed'' representation of $D$ as a contraction and
 a twist. Since $D^\circ$ is a delta-matroid it has at least one
 feasible set, hence there exists a 
 feasible set $F \in \cF(D^+)$ with $Z \subseteq F$.
 Thus we can represent $D^+$ as $D^+=\bD(A) \Delta S^+$ over some
 matrix $A$ such that $Z \subseteq S^+$. Let $S=S^+ \setminus Z$.
 Since $D=D^+ \setminus Z$, a set $F \subseteq V$ is feasible in $D$
 if and only if $F \Delta S^+ = (F \Delta S) \cup Z$ is feasible in $A$.
 Thus $D=(\bD(A)/Z) \Delta S$. 

 Similarly, $D^\circ$ is represented as
 $D^\circ=(\bD(A) \setminus Z) \Delta S =\bD(A[V]) \Delta S$,
 and since direct sum decomposition
 is invariant under twists, $V=V_1 \cup \ldots \cup V_k$ is a
 diagonal block decomposition of $A[V]$.
 Finally, a component $V_i$ is odd in $D^\circ$ if and only if
 $|S \cap V_i|$ is odd.
 By Geelen et al.~\cite[Theorem~3.1]{GeelenIM03},
 the number of odd components is therefore precisely $|Z|+\delta(D,\Pi)$.

 To complete into a pure contraction representation, we simply follow
 the proof of Lemma~\ref{lemma:representation}. For $i \in [k]$,
 let $S_i=S \cap V_i$. Create a new set of elements
 $T'=T_1 \cup \ldots \cup T_k$ where $|T_i|=|S_i|$ for each $i$.
 Define a matrix $A'$ indexed by $V \cup Z \cup T'$
 where $A'[V \cup Z]=A$ and $A'[S_i,T_i]$ induces an identity matrix
 for each $i \in [k]$ (and otherwise all new entries are zero).
 New consider $A^*=A*(S \cup T)$. From the form of $A^*$ given in the 
 proof of Lemma~\ref{lemma:representation}, the support graph of
 $A$ is unchanged by this pivot except (1) $S_i$ now only neighbours $T_i$
 and (2) $T_i$ neighbours $V_i$. Let $T=Z \cup T'$. Then $D=\bD(A^*)/T$,
 the direct sum decomposition of $A[V]$ transfers to $A^*[V \cup T']$ 
 and for each $i \in [k]$, $|S_i|$ is odd if and only if $|T_i|$ is odd.
\end{proof}

We wish to point out the structural similarity of this statement to
the Tutte-Berge formula for graph matching. Indeed, let $D=\bD(A)/T$
as in the lemma and let $G$ be the support graph of $A$.
Let $G'$ be $G$ with edges $\Pi$ added. 
Then $G'-T_0$ has $|T_0|+\delta(D,\Pi)$ odd components, and the
matching deficiency of $G'$ is (at least) $\delta(D,\Pi)$. 

To trace through the min-max statement in more detail, let $F \in \cF(D)$
be a feasible set. Then $\Pf A[F \cup T] \neq 0$. Recall that the
terms of the Pfaffian enumerate perfect matchings in the support graph. 
Thus $G[F \cup T]$ has a perfect matching $M$. Consider an index $i \in [k]$
such that $|T_i|$ is odd. Then either $F_i=F \cap V_i$ is odd, or an
edge of $M$ matches $F_i \cup T_i$ to $T_0$. But that can only happen
for at most $|T_0|$ indices $i$. Thus at least $\delta(D,\Pi)$
indices $i$ have $|F_i|$ odd, i.e., not a union of pairs.

\section{Algorithms for fundamental delta-matroid problems} \label{sec:enum-polys}

We now present the various algorithms over linear delta-matroids.

\subsection{Max-weight feasible sets}
\label{sec:max-weight}

We show an $O(n^\omega)$-time algorithm for finding a max-weight
feasible set in a linear delta-matroid. 

More precisely, let $D=(V,\cF)$ be a delta-matroid and
$w(v) \in \Q$, $v \in V$ a set of element weights. Let $n=|V|$.
The goal is to find a feasible set $F \in \cF$ to maximize the weight
$w(F)=\sum_{v \in F} w(v)$. Note that since the feasible sets of $D$
do not necessarily all have the same cardinality, the negative element
weights cannot easily be removed by any simple transformation (as,
e.g., shifting by a constant would affect different feasible sets differently).
This problem can be solved via the ``signed greedy'' algorithm,
which extends the normal greedy algorithm (in fact, like for matroids,
the success of signed greedy can be taken as a definition of delta-matroids)~\cite{Bouchet95,Bouchet87DMone}.
However, this requires $O(n)$ calls to a separation oracle.
If $D$ is linear, this algorithm thus runs in $O(n^{\omega+1})$ time.
We show an improvement using the contraction representation.
We begin with the following observation.

\begin{lemma} \label{lemma:maxwtf-twist}
  Let $D=(V,\cF)$ be a delta-matroid and $w \colon V \to \Q$ a set of weights.
  Let $N \subseteq V$ be the set of elements $v \in V$ such that $w(v)<0$.
  Let $D'=D \Delta N$, and define a set of weights $w'$ by $w'(v)=|w(v)|$
  for every $v \in V$. For any feasible set $F \in \cF$,
  we have $w(F) = w'(F \Delta N) + w(N)$. 
\end{lemma}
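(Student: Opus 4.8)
The statement is a short bookkeeping identity, so the plan is to unpack the definition of $w'$ and $D'$ and split the ground set according to sign. First I would observe that $N$ is exactly the support of the negative weights, so $w(N) = \sum_{v \in N} w(v) \le 0$ and $w'(v) = -w(v)$ for $v \in N$ while $w'(v) = w(v)$ for $v \in V \setminus N$.

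Next I would fix a feasible set $F \in \cF$ and compute $w'(F \Delta N)$ directly. Since $F \Delta N = (F \setminus N) \cup (N \setminus F)$ is a disjoint union, we have
\begin{align*}
  w'(F \Delta N) &= w'(F \setminus N) + w'(N \setminus F) = w(F \setminus N) - w(N \setminus F),
\end{align*}
using that $F \setminus N \subseteq V \setminus N$ (so $w' = w$ there) and $N \setminus F \subseteq N$ (so $w' = -w$ there). Then I would add $w(N) = w(F \cap N) + w(N \setminus F)$ to both sides, so that
\begin{align*}
  w'(F \Delta N) + w(N) &= w(F \setminus N) - w(N \setminus F) + w(F \cap N) + w(N \setminus F) \\
  &= w(F \setminus N) + w(F \cap N) = w(F),
\end{align*}
where the last equality is the disjoint decomposition $F = (F \setminus N) \cup (F \cap N)$.

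There is essentially no obstacle here: the only things to be careful about are that the three decompositions used ($F \Delta N$ into $F\setminus N$ and $N \setminus F$; $N$ into $F \cap N$ and $N \setminus F$; $F$ into $F \setminus N$ and $F \cap N$) are genuinely disjoint unions so that $w$ and $w'$ are additive over them, and that the sign flip $w' = -w$ is applied exactly on $N$. The fact that $F \in \cF$ is only used to guarantee $F \Delta N \in \cF(D')$ (by the definition $D' = D \Delta N$), which makes the identity meaningful as a statement about feasible sets of $D'$; the arithmetic itself holds for any $F \subseteq V$. Hence the identity $w(F) = w'(F \Delta N) + w(N)$ follows, which is what we wanted to show.
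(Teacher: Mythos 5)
Your proof is correct and follows essentially the same route as the paper's: both arguments unpack the definitions, split the relevant sets into disjoint pieces ($F \setminus N$, $F \cap N$, $N \setminus F$), and use that $w' = w$ off $N$ and $w' = -w$ on $N$ to verify the identity by direct computation. The only cosmetic difference is that the paper expresses everything in terms of $w'$ and converts at the end via $-w'(N) = w(N)$, whereas you convert to $w$ immediately; the substance is identical.
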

\begin{proof}
  The following hold.
  \begin{align*}
    w(F) &= w'(F \setminus N)  - w'(F \cap N) \\
    w'(F \Delta N) &= w'(F \setminus N) + w'(N \setminus F)
  \end{align*}
  Thereby $w(F)-w'(F \Delta N)=-w'(N)=w(N)$ as promised. 
\end{proof}

The problem of finding a max-weight feasible set in a linear
delta-matroid in contraction representation now reduces to the
well-known problem of finding a max-weight basis of a linear matroid.

\begin{theorem} \label{thm:max-wtf}
  There is a deterministic algorithm that finds a max-weight feasible set
  in a linear delta-matroid using $O(n^\omega)$ field operations.
\end{theorem}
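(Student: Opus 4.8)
The plan is to reduce the max-weight feasible set problem to finding a max-weight column basis of a linear matroid, which is solvable in $O(n^\omega)$ field operations by standard matrix algorithms (cited at the end of Section~\ref{sec:prelim}). First I would apply Lemma~\ref{lemma:maxwtf-twist} to reduce to the case where all weights are non-negative: let $N$ be the set of negative-weight elements, and replace $D$ by $D' = D \Delta N$ with weights $w'(v) = |w(v)|$; since $w(F) = w'(F \Delta N) + w(N)$ and $w(N)$ is a constant, a max-$w'$-weight feasible set of $D'$ corresponds exactly to a max-$w$-weight feasible set of $D$. Twisting a contraction representation by $N$ is cheap: if $D = \bD(A)/T$ then $D \Delta N = \bD(A)/T \,\Delta\, N$, and we can absorb the twist into the representation by the pivoting/construction arguments of Lemmas~\ref{lemma:representation} and~\ref{lemma:contra-to-twist}, or simply keep a mixed contraction-twist form; in any case this costs $O(n^\omega)$ field operations. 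So from now on assume $w \geq 0$ and $D = \bD(A)/T$ in contraction representation.

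Next I would observe that with non-negative weights, the maximum weight is attained on a maximum-cardinality feasible set in the appropriate sense — more precisely, we want $\max_{F \in \cF} w(F) = \max\{w(F) : F \cup T \text{ is a column basis of } A[\cdot, \,V \cup T]\}$. Here $F \subseteq V$ is feasible in $D$ iff $A[F \cup T]$ is non-singular iff $F \cup T$ is a column basis of $A$ (using the fact, noted in Section~\ref{sec:prelim}, that for a skew-symmetric matrix a set $B$ of columns is a column basis iff $A[B]$ is non-singular). Now assign weight $0$ to every element of $T$ and weight $w(v)$ to every $v \in V$. Because $T$ must appear in every such basis and the remaining columns have non-negative weight, a max-weight column basis of the linear matroid represented by $A$ (over ground set $V \cup T$) restricted to contain $T$ — equivalently, after contracting $T$ in the matroid sense, a max-weight basis of $A/T$ over ground set $V$ — has the form $F \cup T$ where $F$ is exactly a max-weight feasible set of $D$. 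Actually it is cleanest to just contract the columns $T$ from the matroid first (Gaussian elimination pivoting on $T$, which is $O(n^\omega)$), obtaining a matrix $A''$ over columns $V$ whose column matroid has bases precisely the maximum-cardinality feasible sets $F$ with $F \cup T$ independent; then run the standard greedy/$O(n^\omega)$ max-weight basis algorithm on $A''$ with weights $w$.

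The one subtlety to handle carefully — and the part I expect to be the main obstacle — is that not all feasible sets of $D$ have the same cardinality, so "max-weight feasible set" is genuinely over feasible sets of varying size, and one must check that restricting attention to column bases of $A$ containing $T$ (equivalently, maximum-cardinality feasible sets of $D$, which form the bases of a matroid as discussed in Section~\ref{sec:prelim}) does not lose the optimum. With non-negative weights this is fine: given any feasible $F$, it extends to a maximum-cardinality feasible set $F' \supseteq F$ (since the maximum feasible sets form the bases of a matroid on $V$ whose independent sets include all feasible sets — or directly, via the symmetric exchange axiom one can only grow $F$), and $w(F') \geq w(F)$ because $w \geq 0$. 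Hence the optimum is attained at a basis of that matroid, which is exactly what the max-weight column basis computation finds. I would spell this monotonicity argument out, then note that every step — the twist, the pivot to contract $T$, and the max-weight basis computation — runs in $O(n^\omega)$ field operations and is deterministic, completing the proof.
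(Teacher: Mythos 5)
Your proposal is correct and takes essentially the same approach as the paper: reduce to non-negative weights by twisting with the negative-weight set $N$ (Lemma~\ref{lemma:maxwtf-twist}), pass to a contraction representation $\bD(A)/T$, and reduce to a max-weight column basis computation in $O(n^\omega)$ field operations. The paper implements the last step by ordering the columns with $T$ first and taking a lex-min column basis of $A$ itself, comparing against the independent column set $(F^*\Delta N)\cup T$, while you contract $T$ explicitly and invoke the extension-to-maximum-cardinality argument; these are minor variants of the same argument (just note that your ``feasible iff $F\cup T$ is a column basis'' holds only for maximum-cardinality feasible sets, and the symmetric-exchange ``one can only grow $F$'' claim is not valid in general --- the column-matroid independence justification you also give is the correct one).
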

\begin{proof}
 Let $D=(V,\cF)$ and $w \colon V \to \Q$ be given as input, and
 as above define $N=\{v \in V \mid w(v) < 0\}$, $D'=D \Delta N$
 and $w' \colon V \to \Q$ where $w'(v) = |w(v)|$ for all $v \in V$.
 Let $\bD(A)/T$ be a contraction representation of $D'$, which
 can be constructed using Lemma~\ref{lemma:representation}. 
 Finally, order the columns of $A$ to begin with $T$ and thereafter
 elements $v \in V$ in order of non-increasing weight $w'(v)$.
 Let $B$ be a lex-min column basis for $A$ with respect to this ordering.
 Then $B$ can be computed in $O(n^\omega)$ field operations over $A$,
 and $B$ is a max-weight column basis of $A$ with respect to the weights $w'$. 
 We claim that $F=(B \setminus T) \Delta N$ is a max-weight feasible set in $D$.
 For this, let $F^*$ be a max-weight feasible set in $D$ with
 respect to the weights $w$. Then by Lemma~\ref{lemma:maxwtf-twist},
 $F^* \Delta N$ is a max-weight feasible set in $D'$ with respect to
 the weights~$w'$. Hence $B'=(F^* \Delta N) \cup T$ is feasible in $\bD(A)$
 and $w'(B \setminus T) \geq w'(B' \setminus T) = w(F^*)-w(N)$.
 On the other hand, let $B$ be a lex-min basis of $A$ in the above ordering. 
 Then $T \subseteq B$ by construction. By Lemma~\ref{lemma:maxwtf-twist},
 $F=(B \setminus T) \Delta N$ is a feasible set in $D$ with
 $w(F)=w'(B \setminus T) + w(N) \leq w(F^*)$. 
 Hence $w'(B \setminus T) = w(F^*)-w(N)$ by sandwiching
 and $w((B \setminus T) \Delta N)=w'(B \setminus T)+w(N) = w(F^*)$.
\end{proof}

\subsection{DM Parity and Delta-Covering} \label{sec:parity}

Recall that the \textsc{DM Parity} problem is defined as follows.
Let $D = (V, \cF)$ be a delta-matroid with $V$ partitioned into $n$ pairs $\Pi$.
The problem is to find a feasible set $F \in \cF$ minimizing
the number of \emph{broken} pairs
$\delta_{\Pi}(F) = |\{ P \in \Pi : |F \cap P| = 1 \}|$.
Let $\delta(D, \Pi) = \min_{F \in \cF} \delta_{\Pi}(F)$.
We consider the equivalent \textsc{DM Delta-covering}
problem defined as follows.
Let $D_1 = (V, \cF_1)$ and $D_2 = (V, \cF_2)$ be two given delta-matroids.
The problem is to find $F_1 \in \cF_1$ and $F_2 \in \cF_2$ maximizing $|F_1 \Delta F_2|$.
Let $\tau(D_1, D_2) = \max_{F_i \in \cF_i} |F_1 \Delta F_2|$.
As described in Section~1, \textsc{DM Parity} and \textsc{DM Delta-Covering} reduce to each other.
Moreover, \textsc{DM Covering} and \textsc{DM Intersection} are special cases of \textsc{DM Parity} and \textsc{DM Delta-Covering}.

We can compute $\tau(D_1, D_2)$ in $O(n^{\omega})$ field operations as follows.
Observe that $\tau(D_1, D_2)$ is the maximum feasible set size in the delta-sum $D_1 \Delta D_2$.
By Lemma~\ref{lemma:delta-sum}, we can find a linear representation $\bD(A) / T$ of $D_1 \Delta D_2$  in $O(n^{\omega})$ field operations.
We then have $\tau(D_1, D_2) = \rank A - |T|$.
Thus the decision variants of \textsc{DM Parity} and \textsc{DM Delta-Covering} can be solved in $O(n^{\omega})$ field operations.
Via self-reducibility, we obtain an algorithm that finds a witness for \textsc{DM Parity} and \textsc{DM Delta-Covering} in $O(n^{\omega + 1})$ field operations, matching the result of Geelen et al.~\cite{GeelenIM03}.
We present an improvement to $O(n^{\omega})$, using the method of Harvey \cite{Harvey09}.

\begin{lemma}[Harvey~\cite{Harvey09}] \label{lemma:harvey}
  For a nonsingular matrix $M$, let $\widetilde{M}$ be a matrix that is identical to $M$ except that $\Delta = \widetilde{M}[X] - M[X] \ne 0$.
  \begin{enumerate}[(i)]
    \item $\widetilde{M}$ is nonsingular if and only if $I + \Delta M^{-1}[X]$ is nonsingular.
    \item If $\widetilde{M}$ is nonsingular, then $\widetilde{M}^{-1} = M^{-1} - M^{-1}[\cdot, X] (I + \Delta M^{-1}[X])^{-1} \Delta M^{-1}[X, \cdot]$.
  \end{enumerate}
  In particular, given ${M}^{-1}[X \cup Y]$, we can compute $\widetilde{M}^{-1}[Y]$ in time $\Oh(|X|^{\omega} + |X|^{\omega - 2} \cdot |Y|^{2})$ provided that $\widetilde{M}$ is nonsingular.
\end{lemma}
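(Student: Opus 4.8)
The plan is to follow Harvey's argument \cite{Harvey09}: this is the Sherman--Morrison--Woodbury identity adapted to the facts that the perturbation is supported on the block indexed by $X$ and that $\Delta$ need not be invertible. Writing $n$ for the order of $M$, I would let $J \in \F^{n \times |X|}$ be the $0/1$ matrix whose columns are the standard basis vectors $e_i$, $i \in X$, so that $J^{T}$ picks out the rows in $X$. The key observation is that $\widetilde{M} = M + J \Delta J^{T}$, since $J \Delta J^{T}$ is exactly the matrix that equals $\Delta$ on the block $(X,X)$ and is zero elsewhere; I will also use $J^{T} M^{-1} J = M^{-1}[X]$, $M^{-1} J = M^{-1}[\cdot, X]$, $J^{T} M^{-1} = M^{-1}[X, \cdot]$, and abbreviate $W := I + \Delta M^{-1}[X]$.

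For (i) I would argue both directions by hand. If $W$ is singular, pick $v \ne 0$ with $W v = 0$, i.e.\ $v = -\Delta M^{-1}[X] v$, and set $u := M^{-1} J v = M^{-1}[\cdot, X] v \ne 0$; then $\widetilde{M} u = M u + J \Delta J^{T} u = J v + J \Delta M^{-1}[X] v = J (W v) = 0$, so $\widetilde{M}$ is singular. For the converse, assume $W$ nonsingular and let $N := M^{-1} - M^{-1}[\cdot, X] W^{-1} \Delta M^{-1}[X, \cdot]$. Expanding $\widetilde{M} N$ into four terms via $\widetilde{M} = M + J\Delta J^{T}$ and simplifying with the identities above, the three non-identity terms collapse to $J \bigl(I - (I + \Delta M^{-1}[X]) W^{-1}\bigr) \Delta M^{-1}[X, \cdot] = J (I - W W^{-1}) \Delta M^{-1}[X, \cdot] = 0$, so $\widetilde{M} N = I$. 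Since the matrices are square, this yields both the missing direction of (i) and the formula $\widetilde{M}^{-1} = N$ of (ii) at once.

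For the complexity claim I would just restrict (ii) to rows and columns $Y$:
\[
  \widetilde{M}^{-1}[Y] \;=\; M^{-1}[Y] \;-\; M^{-1}[Y, X] \, \bigl(I + \Delta M^{-1}[X]\bigr)^{-1} \Delta \, M^{-1}[X, Y],
\]
and observe that every submatrix of $M^{-1}$ on the right lies inside the given block $M^{-1}[X \cup Y]$, so no further access to $M^{-1}$ is needed. Forming $\Delta M^{-1}[X]$, adding $I$, inverting, and multiplying by $\Delta$ costs $\Oh(|X|^{\omega})$ and produces a matrix $Q \in \F^{X \times X}$; it then remains to evaluate the triple product $M^{-1}[Y, X]\, Q\, M^{-1}[X, Y]$. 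Partitioning the rectangular factors into square blocks of side $\min(|X|, |Y|)$ and invoking fast square matrix multiplication, this costs $\Oh(|X|^{\omega} + |X|^{\omega-1}|Y| + |X|^{\omega-2}|Y|^{2})$; the middle term is dominated by $|X|^{\omega}$ when $|Y| \le |X|$ and by $|X|^{\omega-2}|Y|^{2}$ when $|Y| \ge |X|$, so the bound simplifies to $\Oh(|X|^{\omega} + |X|^{\omega-2}|Y|^{2})$.

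The only genuine subtlety here is that $\Delta$ can be singular, so one cannot quote the classical Woodbury formula with $\Delta^{-1}$; using $I + \Delta M^{-1}[X]$ (instead of $\Delta^{-1} + M^{-1}[X]$) sidesteps this, but it costs a direct verification of the inverse formula rather than a black-box citation. The rest is bookkeeping: recognising the update as $J \Delta J^{T}$ and checking that the $Y$-restricted correction term only reads entries of $M^{-1}$ already contained in $M^{-1}[X \cup Y]$.
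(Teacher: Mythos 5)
Your proof is correct: the rank-$|X|$ update $\widetilde{M} = M + J\Delta J^{T}$, the direct two-sided verification of the Woodbury-type inverse (which correctly avoids assuming $\Delta$ invertible), and the restriction of the update formula to the block $Y$ with the rectangular-multiplication cost analysis all check out. Note that the paper itself gives no proof of this lemma --- it is quoted from Harvey~\cite{Harvey09} --- and your argument is essentially the standard one from that source, so there is nothing to reconcile; the only cosmetic remark is that the nonvanishing of $u = M^{-1}Jv$ in the singular-$W$ direction deserves the one-line justification that $J$ has full column rank and $M^{-1}$ is nonsingular, which you implicitly use.
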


\begin{lemma} \label{lemma:harvey-algorithm}
  Let $A$ be an $n\times n$ skew-symmetric, non-singular polynomial matrix with its row and column indexed by $V = \{ v_1, \dots, v_n \}$.
  Suppose that $A = B + Y$, where (i) $B$ is a matrix defined over a field $\F$ containing  $\Omega(n^3)$ elements, and (ii) $Y$ is the Tutte matrix of a graph $G = (V, E)$ with variables $y_e, e \in E$.
  Suppose that $G$ has connected components $C_1, \dots, C_{\gamma}$, with $|C_i| \in O(1)$ for every $i \in [\gamma]$.
  It is possible to find an inclusion-wise maximal set $S \subseteq E$ for which that $A$ remains non-singular when setting $y_e$ to zero for all $e \in S$. This can be done with probability $1 - 1/\Omega(n)$ using $O(n^{\omega})$ field operations over $\F$.
\end{lemma}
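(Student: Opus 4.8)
The plan is to combine one random evaluation of the edge variables with the lazy-update technique of Harvey (Lemma~\ref{lemma:harvey}), organised as a divide-and-conquer over the connected components of $G$. Note first that $m := |E| = O(n)$, since $G$ has at most $n$ components, each with $O(1)$ vertices and hence $O(1)$ edges. I would replace every variable $y_e$ by an independent uniform element of $\F$, obtaining a numeric matrix $A^{(0)}$, and compute $(A^{(0)})^{-1}$ in $O(n^\omega)$ field operations. Then I scan the edges one at a time, in an order that the recursion below will fix, maintaining a non-singular numeric matrix $N$ that equals $A^{(0)}$ except that the variables committed to $0$ so far are set to $0$. When processing $e = uv$, zeroing $y_e$ alters only the entries $(u,v)$ and $(v,u)$ of $N$, so by Lemma~\ref{lemma:harvey}(i) the test whether $N$ stays non-singular becomes an $O(1)$-size non-singularity check involving only the $2 \times 2$ submatrix $N^{-1}[\{u,v\}]$; if it passes we commit (put $e$ into the output set $S$) and update $N^{-1}$ by Lemma~\ref{lemma:harvey}(ii), otherwise we discard $e$.

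For correctness I would compare this with the deterministic \emph{symbolic} version of the same scan, in which the decision for $e$ is whether the polynomial matrix obtained from $A$ by zeroing the already-committed variables (keeping the rest symbolic) is non-singular as a polynomial matrix. Since a determinant that vanishes identically stays identically zero after zeroing further variables, the symbolic scan outputs an inclusion-wise maximal set $S^*$ with exactly the property the lemma asks for. By induction on the steps, with high probability the randomized scan makes the same decisions: conditioned on agreement so far the two committed sets coincide, so $N$ with $e$ also zeroed is precisely the random evaluation of the symbolic matrix being tested; if the latter is identically singular then so is every evaluation and both scans discard $e$, and if it is not, then the relevant determinant is a nonzero polynomial of degree $O(n)$, so by the Schwartz-Zippel lemma its random evaluation is non-singular with probability $1 - O(n)/|\F|$ and both scans commit $e$. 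A union bound over the $O(n)$ steps, together with $A^{(0)}$ being non-singular, shows that the two scans agree (so that $S = S^*$) with probability $1 - O(n^2)/|\F| = 1 - 1/\Omega(n)$ when $|\F| = \Omega(n^3)$.

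The running time is the crux: implemented directly, the scan does $\Omega(n)$ updates, each costing $\Omega(n^2)$ if one maintains a full inverse, for $\Omega(n^3)$ overall. To reach $O(n^\omega)$ I would use that an edge inside a component affects only the rows and columns of that component, and recurse on families of components. For a family spanning a vertex set $W$, given the current restricted inverse $N^{-1}[W]$: split the family into two parts spanning $W_1 \sqcup W_2$ of roughly equal size (possible up to an $O(1)$ additive error since each component has $O(1)$ vertices); recurse on the first part using the sub-block $N^{-1}[W_1]$ (processing the edges of its components in any order), recording which of its edges were committed; from those commitments form the resulting change $\Delta$ to $N[W_1]$ and, applying Lemma~\ref{lemma:harvey}(ii) with $X = W_1$, compute the updated block $N^{-1}[W_2]$ in $O(|W|^\omega)$ field operations (dominated by one inversion of a $|W_1| \times |W_1|$ matrix and two matrix products of dimension $O(|W|)$); then recurse on the second part; finally return the union of committed edges. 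This gives the recursion $T(m) = 2T(2m/3) + O(m^\omega)$ with $T(O(1)) = O(1)$, which solves to $T(m) = O(m^\omega)$ since $\omega \ge 2$. Invoked on $W = V$ with $(A^{(0)})^{-1}$, together with the initial inversion, the algorithm runs in $O(n^\omega)$ field operations over $\F$.

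The step I expect to be the main obstacle is the bookkeeping of this recursion: one has to specify exactly which sub-block of the (conceptually global, never fully materialised) inverse $N^{-1}$ is held at each recursive call, check that it suffices both for every $O(1)$-size non-singularity test inside a component and for the $O(|W|^\omega)$ block update carried out on return, and verify that entries of $N^{-1}$ outside the block currently held are never consulted before the deferred updates to them have been applied. Once this invariant is in place, the probabilistic analysis is routine via the comparison with the symbolic scan.
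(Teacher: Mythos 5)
Your proposal is correct and follows essentially the same route as the paper's proof: a random evaluation of the Tutte-matrix variables, greedy zeroing of edge variables with $O(1)$ non-singularity tests via Lemma~\ref{lemma:harvey}(i), lazy block updates of the inverse via Lemma~\ref{lemma:harvey}(ii) organised as a divide-and-conquer over the constant-size components (the invariant you flag as the main obstacle is exactly the paper's invariant that each call holds $A^{-1}$ restricted to its components), and a Schwartz--Zippel plus union-bound analysis over the $O(n)$ tests. Your maximality argument via comparison with the deterministic symbolic scan is just a rephrasing of the paper's divisibility/multilinearity argument, so no substantive difference remains.
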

\begin{proof}
 First, we choose $\widehat{y_e} \in \F$ uniformly at random for every $e \in E$.
 Our algorithm will work over $\F$, replacing $y_e$ with $\widehat{y_e}$.
 For the purpose of analysis, however, we will treat $A$ is a polynomial matrix.

 We first describe a simpler algorithm. 
 The idea is to set all variables $y_e$'s to zero while ensuring that $A$ remains non-singular throughout the process. 
 The algorithm iterates over each $e \in E$.
 Let $A'$ be the matrix generated from $A$ by assigning $y_{e} = 0$.
 We check whether $A'$ is non-singular in $O(n^{\omega})$ field operations.
 If $A'$ is non-singular, then update $A$ by setting $y_e$ to zero. 
 We then proceed to the next edge.
 This algorithm takes $O(n^{\omega + 1})$ field operations.
 We claim that the edges $e$ corresponding to the variables $y_e$ that are set to zero forms a desired set $S$.
 This is because $y_{e}=0$ is not set to zero if and only if $\Pf A$ is divisible by $y_e$.
 Since $\Pf A$ is multilinear in the $y$-variables throughout,
 after iterating through every $e \in E$, $\Pf A$ contains a single monomial.
 That is, $S$ is a maximal set such that $A$ remains non-singular when $y_i, i \in S$ is set to zero.
 
 We argue that the non-singularity $A$ can be tested more efficiently by maintaining the inverse $A^{-1}$, leading to an algorithm using $O(n^3)$ field operations.
 With $A^{-1}$ at hand, by Lemma~\ref{lemma:harvey} (i), we can check in $O(1)$ time whether $A$ remains non-singular after setting $y_{e}$ to zero.
 More precisely, for $e = v_i v_j$, $i < j$, letting $X = \{ v_i, v_j \}$ in Lemma~\ref{lemma:harvey} (i), $A$ is non-singular if and only if
 \begin{align*}
   \det \left( \begin{pmatrix} 1 & 0 \\ 0 & 1 \end{pmatrix} 
   + \begin{pmatrix} 0 & -y_{e} \\ y_{e} & 0 \end{pmatrix} \begin{pmatrix} 0 & A^{-1}[v_i, v_j] \\ -A^{-1}[v_i, v_j] & 0 \end{pmatrix} \right)
     = (1 + y_{e} A^{-1}[v_i, v_j])^2
 \end{align*}
 is nonzero, i.e., $y_e \neq -1/A^{-1}[v_{i}, v_j]$.
 We can update $A^{-1}$ using Lemma~\ref{lemma:harvey}~(ii) in $O(n^{2})$ operations because $|X| = 2$.
 Thus this algorithm uses $O(n^3)$ field operations.

 \begin{algorithm}[t]
   \caption{The algorithm in Lemma~\ref{lemma:harvey-algorithm}.}
   \label{algorithm:divide-conquer}
 
   \begin{algorithmic}
   \Procedure{DeleteEdges}{$I, N$}
     \State \textit{Invariant:} $N = A^{-1}[\bigcup_{i \in I} C_{i}]$
     \If {$|I| = \{ i \}$}
       \For{each edge $e = v_i v_j$ with $i < j$ in $G[C_i]$}
       \State \algorithmicif\ $y_{e} \ne -1 / N[v_i, v_j]$ \algorithmicthen\ $y_{e} \leftarrow 0$ and update $N[C_i]$ using Equation~\eqref{equation:update}
       \EndFor
     \Else
       \State Partition $I$ into $I'$ and $I''$ of equal size
       \State \Call{DeleteEdges}{$I'$, $N[I']$}
       \State Update $N[I'']$ using Equation~\eqref{equation:update}
       \State \Call{DeleteEdges}{$I''$, $N[I'']$}
     \EndIf
   \EndProcedure
   \end{algorithmic}
 \end{algorithm}

 We show how to further speed up this algorithm to $\Oh(n^{\omega})$ operations using a divide-and-conquer approach as in Harvey~\cite{Harvey09}.
 Algorithm~\ref{algorithm:divide-conquer} describes our recursive procedure \textsc{DeleteEdges}.
 The input is a pair $(I, N)$, where $I \subseteq [\gamma]$ is an interval and $N$ is a matrix indexed by $\bigcup_{i \in I} C_i$.
 Our algorithm maintains the invariant that upon invocation $N = A^{-1}[\bigcup_{i \in I}C_i]$.
 For $|I| > 1$, our algorithm partition $I$ into two sets $I'$ and $I''$ of equal size, and invoke \textsc{DeleteEdges} recursively.
 To maintain the invariant, before the second call, we update the matrix $N[I'']$  as follows using Lemma~\ref{lemma:harvey} (ii), so that $N = A^{-1}[I'']$ holds again:
 \begin{equation} \label{equation:update}
   N[I''] = N[I''] - N[I'', I'] (I + \Delta N[I''])^{-1} \Delta N[I', I''],
 \end{equation}
 where $\Delta$ is a matrix of dimension $|I'| \times |I'|$ containing all differences in the first recursion.
 Note that this update can be done in $\Oh(|I|^{\omega})$ field operations.
 For $|I| = 1$, then we go over every edge $e = v_i v_j$ in $G[C_i]$, setting $y_e$ to zero if $y_e \ne -1/N[v_{i}, v_{j}]$.
 This condition is equivalent to the non-singularity of $A$ by Lemma~\ref{lemma:harvey} (i); see above.
 If $y_e$ is set to 0, then we update $N$ according to Equation~\eqref{equation:update}.
 Let $f(n)$ be the running time of \textsc{DeleteEdges} for $n = |I|$.
 We have the recurrence $f(n) = 2f(n/2) + \Oh(n^{\omega})$, which gives $f(n) = \Oh(n^{\omega})$.

 Finally, we analyze the success probability. 
 Our algorithm fails when $A_i$ is falsely reported as singular. 
 By the Schwartz-Zippel lemma, this happens with probability $1/\Omega(n^2)$.
 By the union bound, the success probability is $1 - 1/\Omega(n)$.
\end{proof}

Using Lemma~\ref{lemma:harvey-algorithm}, we prove Theorem~\ref{ithm:target-delta-sum}, which yields algorithms for \textsc{DM Delta-Covering} and \textsc{DM Parity} as well as \textsc{DM Covering} and \textsc{DM Intersection} using $O(n^{\omega})$ field operations (Corollary~\ref{icor:search}).

\targetdeltasum*

\begin{proof}

Let us recall the construction in Lemma~\ref{lemma:delta-sum}.
The set $V^i = \{ v^i \mid v \in V \}$ for $i = 1, 2$ is a copy of~$V$, and $H$ is the graph over $V^+ = V \cup V_1 \cup V_2$ with edges $v v^1$, $v v^2$, and $v^1 v^2$ for all $v \in V$.
The Tutte matrix of $H$ is denoted by~$A_H$.
The matrix $A'$ is indexed by $V^+ \cup T_1 \cup T_2$, and we have $A'[V_1 \cup T_1] = A_1$, $A'[V_2 \cup T_2] = A_2$, and zero everywhere else.
Then $A = A_H + A'$. 
As shown in Lemma~\ref{lemma:delta-sum}, $F \subseteq V$ is feasible in $D_1 \Delta D_2$ if and only if $A[F^+]$ is non-singular, where $F^+ = F \cup V_1 \cup T_1 \cup V_2 \cup T_2$.
Let $A_S = A[F^+]$.
It follows from the proof of Lemma~\ref{lemma:delta-sum} that
\begin{align*}
  \Pf A_S =
  \sum_{\substack{F_1 \in \cF_1, F_2 \in \cF_2, \\ F_1 \Delta F_2 = F}} \sigma_{F_1,F_2} \Pf A_1[F_1 \cup T_1] \cdot \Pf A_2[F_2 \cup T_2] \cdot \prod_{v \in V \setminus (F_1 \cup F_2)} y_{v^1,v^2} \prod_{v \in F_1 \setminus F_2} y_{v,v^2} \prod_{v \in F_2 \setminus F_1} y_{v,v^1},
\end{align*}
where $\sigma_{F_1,F_2} = \pm 1$ is the sign term and the $y$-variables (i.e., $y_{v,v^1}, y_{v,v^2}, y_{v^1,v^2}$ for $v \in V$) represent the entries of $A_H$.
By applying Lemma~\ref{lemma:harvey-algorithm}, we can find a maximal set of $y$-variables such that $A'$ remains non-singular when those are set to zero in $O(n^{\omega})$ field operations.
An element $v \in V$ belongs to (i) $F_1 \setminus F_2$ if $y_{v,v^2}$ is not set to zero, (ii) $F_2 \setminus F_1$ if $y_{v,v^1}$ is not set to zero, and (iii) $V \setminus (F_1 \cup F_2)$ if $y_{v^1, v^2}$ is not set to zero.
\end{proof}

\subsection{Weighted delta-matroid intersection} \label{sec:intersection}

Let $D_1 = (V, \cF_1)$ and $D_2 = (V, \cF_2)$ be two linear delta-matroids with weights $w(v) \in \N, v \in V$.
We consider the \emph{intersection} problem, where the goal is to find a common feasible set $F \subseteq V$, i.e., $F \in \cF_1$ and $F \in \cF_2$.
The decision problem, i.e., whether there exists $F \in \cF_1 \cap \cF_2$, can be solved in $\Oh(n^{\omega})$ by testing whether $V$ is feasible in the delta-sum $D_1 \Delta D_2^*$.
Moreover, we can find a common feasible set in $O(n^{\omega})$ time using the algorithm in Section~\ref{sec:parity}.
In this section, we give a (pseudo)polynomial-time algorithm for the \textsc{Weighted Delta-matroid Intersection}, where we are tasked with finding a common feasible set of maximum weight, answering an open question of Kakimura and Takamatsu~\cite{KakimuraT14sidma}.
To the best of our knowledge, there has been no polynomial-time algorithm even for the unweighted case.

\cardinalityintersection*

\begin{proof}

Suppose that the contraction representation $D_i = \bD(A_i) / T_i$ is given for $i = 1, 2$.
Let $V_i = \{ v_i \mid v \in V \}$ for $i = 1, 2$ be two copies of~$V$, and define matrices 
\begin{align*}
  A' = \kbordermatrix{
    & V_1 & T_1 & V_2 & T_2 \\
    V_1 & A_1[V_1] & A_1[V_1, V_1] & O & O \\
    T_1 & A_1[T_1, V_1] & A_1[T_1] & O & O \\
    V_2 & O & O & A_2[V] & A_2[V, T_2] \\
    T_2 & O & O & A_2[T_2, V] & A_2[T]
  } 
  \text{ and }
  A_H = \kbordermatrix{
    & V_1 & T_1 & V_2 & T_2 \\
    V_1 & O & O & B & O \\
    T_1 & O & O & O & O \\
    V_2 & -B & O & O & O \\
    T_2 & O & O & O & O 
  }
\end{align*}
where $B$ is the diagonal matrix where the entry corresponding to $v$ is $z^{w(v)} y_v$ for indeterminates $y_v$ and~$z$.
Let $A = A' + A_H$.

To compute the determinant of $A$, we will use the algorithm of 
Storjohann \cite{Storjohann03}.
For a matrix $A \in \F[z]^{n \times n}$ whose entries are polynomials of degree at most $d$, the determinant $\det A$ can be computed in $\Oh(d n^{\omega})$ arithmetic operations with high probability.

To determine the maximum common feasible set weight,
our algorithm first constructs the polynomial $\det A$ using the algorithm of Storjohann.
Let us analyze $\Pf A$ using Lemma~\ref{lemma:sum-pf}.
For $V' \subseteq V$ and $i = 1, 2$, let $V_i' = \{ v_i \mid v \in V \}$ be the set containing copies of $v \in V$ in $V_i$, and let $w(V') = \sum_{v \in V'} w(v)$.
For $X \subseteq V_1 \cup T_1 \cup V_2 \cup T_2$, note that $A_H[X]$ is non-singular if and only if $X$ consists of pairs $\{ v_1, v_2 \}$, i.e., $X = V_1' \cup V_2'$ for some $V' \subseteq V$, in which case $\Pf A_H[V_1' \cup V_2'] = z^{w(V')} \prod_{v \in V'} y_v$.
Thus, we have
\begin{align}
  \Pf A
  &= \sum_{V' \subseteq V} \sigma_{V'} \Pf A'[V_1' \cup T_1 \cup V_2' \cup T_2] \Pf A_H[(V_1 \setminus V_1') \cup (V_2 \setminus V_2')] \nonumber \\
  &= \sum_{V' \subseteq V} \sigma_{V'} \Pf A_1[V_1' \cup T_1] \Pf A_2[V_2' \cup T_2] \cdot z^{w(V \setminus V')} \prod_{v \in V \setminus V'} y_v,  \label{eq:intersection}
\end{align}
where $\sigma_{V'} = \pm 1$ denotes the sign.
If there is a set $F \in \cF_1 \cap \cF_2$ of weight $t$, then the coefficient of $z^{w(V) - t}$ in $\Pf A$ is non-zero (i.e., its order is $w(V) - t$).
Since $\det A = (\Pf A)^2$, the maximum weight of feasible sets is the largest integer $t$ such that the coefficient of $z^{2w(V) - 2t}$ in $\det A$ is non-zero.
We can thus compute the maximum common feasible set weight in $\Oh(Wn^{\omega})$ time.
\end{proof}

\paragraph*{Ishikawa-Wakayama formula.}
Finally, we show that the Ishikawa-Wakayama (Lemma~\ref{lemma:cauchy-binet-ss}) formula follows from the
proof of Theorem~\ref{ithm:cardinality-intersection}.

We show that the proof of Theorem~\ref{ithm:cardinality-intersection} gives rise to the Ishikawa-Wakayama formula when applied to the intersection between a directly represented delta-matroid and a matroid.
Consider a directly represented delta-matroid $D = \bD(A)$ and a matroid represented by $B$ ($\bD(B') / T$ in the contraction representation).
As in the proof of Theorem~\ref{ithm:cardinality-intersection}, construct a matrix
\begin{align*}
 C = \kbordermatrix{ & V_1 & V_2 & T \\ V_1 & A & I & O \\ V_2 & -I & O & B^T \\  T & O & -B & O },
\end{align*}
where $V_1 = \{ 1, \dots, 2n \}$, $V_2 = \{ 2n + 1, \dots, 4n \}$, and $T = \{ 4n + 1, \dots, 4n + 2k \}$.
Note that indeterminates $y_v, v \in V$ and $z$ are all set to 1. 
We claim that 
\begin{align*}
\Pf C = (-1)^n \sum_{S \in \binom{[2n]}{2k}} \det B[\cdot, V'] \cdot \mathrm{Pf} \, A[V'] \text{ and } \Pf C = (-1)^n \Pf BAB^T. 
\end{align*}
For the first claim, by Lemma~\ref{lemma:sum-pf}, we have
\begin{align*}
 \Pf C = \sum_{V' \in \binom{[2n]}{2k}} \sigma_{V'} \Pf C[(V_1 \setminus V_1') \cup (V_2 \setminus V_2')] \cdot \Pf C[V_1'] \cdot \Pf C[V_2' \cup T],
\end{align*}
where $V_1' = V'$, $V_2' = \{ i + 2n \mid i \in V' \}$, and $\sigma_{V'}$ is the sign of the permutation that brings $(V_1 \setminus V_1', V_2 \setminus V_2', V_1', V_2', T)$ in ascending order.
Notably, this sign is equal to the sign of the permutation that arranges $(V_1 \setminus V_1', V_1', V_2 \setminus V_2', V_2', T)$ in ascending order as it can be obtained by exactly $2k(2n - 2k)$ transpositions, and thus $\sigma_{V'} = 1$.
Moreover, by Lemma~\ref{lemma:det-pf}, $\Pf C[V_1' \cup V_2'] = (-1)^{\frac{1}{2}(2n - 2k)(2n - 2k - 1)} = (-1)^{n+k}$ and $\Pf C[V_2' \cup T] = (-1)^{\frac{1}{2}(2k)(2k-1)} \cdot \det B[V'] = (-1)^k \cdot \det B[V']$.
Thus, the first claim follows.
Now we proceed to the second claim.
Note that by Lemma~\ref{lemma:det-pf},
\begin{align*}
 \Pf \begin{pmatrix}
   A & I \\ -I & O
 \end{pmatrix} = (-1)^{\frac{1}{2} (2n)(2n - 1)} = (-1)^n
 \text{ and } \begin{pmatrix}
   A & I \\ -I & O
 \end{pmatrix}^{-1} = \begin{pmatrix} O & I \\ I &  A \end{pmatrix}
\end{align*}
We use the following fact about the Schur complement:
\begin{lemma}[Schur complement] \label{lemma:schur} Consider a skew-symmetric matrix $A$ of the form
  \[
  A =
  \begin{pmatrix}
    B & C \\
    -C^T & D
  \end{pmatrix},
  \]
  where $B$ is a non-singular square matrix.
  Then, $\Pf A = \Pf B \cdot \Pf \, (D + C^T B^{-1} C)$.
\end{lemma}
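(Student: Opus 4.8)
The plan is to prove the identity by a block congruence transformation that diagonalizes $A$, after which the Pfaffian factorizes over the blocks. Since $B$ is non-singular by hypothesis, I would set
\[
  M = \begin{pmatrix} I & -B^{-1}C \\ 0 & I \end{pmatrix},
\]
and compute $M^T A M$ directly (first $AM$, then $M^T(AM)$), obtaining
\[
  M^T A M = \begin{pmatrix} B & 0 \\ 0 & D + C^T B^{-1} C \end{pmatrix}.
\]
The cancellations use $(-B^{-1}C)^T = C^T B^{-1}$, which holds because the inverse of a skew-symmetric matrix is skew-symmetric; the same fact gives $(C^T B^{-1} C)^T = -C^T B^{-1} C$, so $D + C^T B^{-1} C$ is skew-symmetric and its Pfaffian is well-defined.

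Next I would invoke the transformation rule $\Pf(P^T A P) = \det(P) \cdot \Pf A$ for skew-symmetric $A$. Here $M$ is block upper-triangular with identity diagonal blocks, so $\det M = 1$, and hence $\Pf A = \Pf\!\begin{pmatrix} B & 0 \\ 0 & D + C^T B^{-1} C \end{pmatrix}$. If one prefers not to cite this rule directly, note that $\det A = (\Pf A)^2$ already yields $\Pf(M^T A M) = \pm \det(M)\,\Pf A$, and the sign can be pinned down by a specialization argument, since it is a polynomial identity in the entries of $A$ and $M$ (checking it on a single generic instance, or on $C = 0$, suffices).

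Finally, the Pfaffian of a block-diagonal skew-symmetric matrix is the product of the Pfaffians of the two diagonal blocks. This follows from Lemma~\ref{lemma:sum-pf}: writing $\begin{pmatrix} B & 0 \\ 0 & E \end{pmatrix} = A_1 + A_2$ where $A_1$ carries $B$ and $A_2$ carries $E = D + C^T B^{-1} C$, each being zero on the other block's indices, the only index set $U$ with $\Pf A_1[U] \cdot \Pf A_2[V \setminus U] \neq 0$ is $U$ equal to the index set of the first block (any other choice leaves a zero row in one of the submatrices), and for that $U$ the sign $\sigma_U$ is $+1$ since $U$ forms an initial segment. Chaining the three equalities gives $\Pf A = \Pf B \cdot \Pf(D + C^T B^{-1} C)$. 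The only genuinely delicate point is the sign in the congruence rule for Pfaffians; the rest is routine matrix bookkeeping.
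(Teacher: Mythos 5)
The paper never proves this lemma -- it is invoked as a known fact about Pfaffians and Schur complements immediately before its use -- so there is no internal proof to compare against. Your argument is the standard one and is correct: the congruence $M^TAM$ with the unit triangular $M$ (the cancellation $(-B^{-1}C)^T = C^TB^{-1}$ is right, since $B^{-1}$ is again skew-symmetric), the rule $\Pf(P^TAP)=\det(P)\Pf A$ with $\det M = 1$, and the block-diagonal factorization, which your appeal to Lemma~\ref{lemma:sum-pf} does justify (the initial-segment choice of $U$ is the only surviving term and carries sign $+1$). The one slightly loose spot is the side remark on pinning the sign by checking $C=0$: this works because $\Pf(M^TAM)$ and $\det(M)\Pf A$ have equal squares and hence agree up to a single global sign as rational functions of the entries (the identity is rational rather than polynomial once $M$ is tied to $A$ through $B^{-1}C$), so one generic specialization fixes it; stated pointwise it would not be a proof, but the version with independent indeterminates that you gesture at is fine.
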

By Lemma~\ref{lemma:schur},
\begin{align*}
 \mathrm{Pf}\, C = \mathrm{Pf} \begin{pmatrix}
   A & I \\ -I & O
 \end{pmatrix} \cdot \mathrm{Pf}\, \left( \begin{pmatrix} O & B \end{pmatrix}  \begin{pmatrix}
   A & I \\ -I & O
 \end{pmatrix} \begin{pmatrix} O \\ B^T \end{pmatrix} \right) = (-1)^n \Pf BAB^T.
\end{align*}
Thus, Ishikawa-Wakayama formula has been proven.

\section{Conclusions}

We have shown a new representation for linear delta-matroids, the
\emph{contraction} representation, and used it to derive a range of
new results, including linear representations of delta-matroid union
and delta-sum, and faster algorithms for a range of problems including
\textsc{Linear Delta-Matroid Parity}. We also show the first polynomial-time
algorithms for the maximum cardinality and weighted versions of
\textsc{Linear Delta-Matroid Intersection} (in the latter case with a
pseudopolynomial running time), solving an open question of
Kakimura and Takamatsu~\cite{KakimuraT14sidma}.

We note a few open questions. First, all our running times are stated
purely in terms of the number of elements $n$. It would be interesting
to investigate faster algorithms for linear delta-matroids under some
notion of bounded rank, like for linear matroids. But we also note two
specific challenging questions.
\begin{enumerate}
\item Is there a strongly polynomial-time algorithm for
  \textsc{Weighted Linear Delta-Matroid Parity}? This would extend the recent
  result for \textsc{Weighted Linear Matroid Parity}~\cite{IwataK22SICOMP}.
\item Is there a $\tilde O(Wn^\omega)$-time algorithm for \textsc{Shortest
    Disjoint $\cS$-Paths}? This would extend results for graph matching~\cite{CyganGS15matching}.
\end{enumerate}
Additionally, does there exist a good characterization of the maximum
cardinality version of \textsc{Linear Delta-Matroid Intersection}?

\paragraph*{Acknowledgments.} TK is supported by the DFG project DiPa (NI 369/21). TK would like to thank Taihei Oki for enlightening discussions on delta-matroids. 

\bibliographystyle{abbrv}
\bibliography{all}

\newpage
\appendix

\end{document}